%
%
%
%
%

\documentclass[twoside,leqno,symbols-for-thanks]{rmi_modified}
\usepackage[colorlinks,linkcolor=black,citecolor=black,urlcolor=black, linktocpage=true,dvips]{hyperref}
\usepackage{srcltx}
\usepackage[english]{babel}
\numberwithin{equation}{section}
\setinitialpage{1}
\oddsidemargin=40pt
\evensidemargin=40pt
\topmargin=30pt

%
%


\title[Domains for Dirac-Coulomb min-max levels]{Domains for Dirac-Coulomb min-max levels}


\author[M.J. Esteban, M. Lewin \& \'E. S\'er\'e]{Maria J. Esteban, Mathieu Lewin and \'Eric S\'er\'e}


\address[esteban@ceremade.dauphine.fr]{Maria J. Esteban: CEREMADE, CNRS, UMR 7534, Universit\'e Paris-Dauphine, PSL Research University, Place de Lattre de Tassigny, F-75016 Paris, France}
\address[mathieu.lewin@math.cnrs.fr]{Mathieu Lewin: CEREMADE, CNRS, UMR 7534, Universit\'e Paris-Dauphine, PSL Research University, Place de Lattre de Tassigny, F-75016 Paris, France}
\address[sere@ceremade.dauphine.fr]{\'Eric S\'er\'e: CEREMADE, Universit\'e Paris-Dauphine, PSL Research University, CNRS, UMR 7534, Place de Lattre de Tassigny, F-75016 Paris, France}


\amsclassification[35P05,81V45]{81Q10}


\keywords{Dirac-Coulomb operator, eigenvalues, distinguished self-adjoint extension, min-max methods}

%
%

\usepackage{amsmath}
\usepackage{amssymb}
\usepackage{amsthm}
\usepackage{amsfonts}
\usepackage{amstext}
\usepackage{amsopn}
\usepackage{amsxtra}
\usepackage{mathrsfs}
\usepackage{dsfont}
\usepackage{esint}

\newtheorem{thm}{Theorem}[section]
\newtheorem{lemma}[thm]{Lemma}
\newtheorem{prop}[thm]{Proposition}
\newtheorem{cor}[thm]{Corollary}
\newtheorem{remark}[thm]{Remark}

\newcommand{\R}{\mathbb{R}}
\newcommand{\Z}{\mathbb{Z}}
\newcommand{\C}{\mathbb{C}}

\newcommand{\dps}{\displaystyle}
\newcommand{\ii}{\infty}
\newcommand\1{{\ensuremath {\mathds 1} }}
\renewcommand\phi{\varphi}
\newcommand{\gH}{\mathfrak{H}}

\newcommand{\wto}{\rightharpoonup}

\newcommand{\cV}{\mathcal{V}}

\newcommand{\cK}{\mathcal{K}}

\newcommand{\cD}{\mathcal{D}}
\newcommand{\cW}{\mathcal{W}}

\newcommand\pscal[1]{{\ensuremath{\left\langle #1 \right\rangle}}}
\newcommand{\norm}[1]{ \left\| #1 \right\|}

\renewcommand{\geq}{\geqslant}
\renewcommand{\leq}{\leqslant}

\renewcommand{\tilde}{\widetilde}

\newcommand{\be}{\begin{equation}}
\newcommand{\ee}{\end{equation}}
\newcommand{\bq}{\begin{equation}}
\newcommand{\eq}{\end{equation}}

\newcommand{\Frac}{\displaystyle \frac}

\newcommand{\Sup}{\displaystyle \sup}

\newcommand{\eps}{\varepsilon}
\usepackage{color}

\newcommand{\nn}{\nonumber}


\begin{document}

%
%

\begin{abstract}
We consider a Dirac operator in three space dimensions, with an electrostatic ({\it i.e.} real-valued) potential $V(x)$, having a strong Coulomb-type singularity at the origin. This operator is not always essentially self-adjoint but admits a distinguished self-adjoint extension $D_V$. In a first part we obtain new results on the domain of this extension, complementing previous works of Esteban and Loss. Then we prove the validity of min-max formulas for the eigenvalues in the spectral gap of $D_V$, in a range of simple function spaces independent of $V$. Our results include the critical case $\liminf_{x \to 0} |x| V(x)= -1$, with units such that $\hbar=mc^2=1$, and they are the first ones in this situation.  We also give the corresponding results in two dimensions.
\end{abstract}

%
%

\setcounter{tocdepth}{1}
\tableofcontents

\bigskip

Computing the eigenvalues in the gap of the essential spectrum of a self-adjoint operator is notoriously more difficult than for those below or above the essential spectrum. It is well-known that numerical artefacts can sometimes occur, a phenomenon called spectral pollution~\cite{LewSer-10}. For this reason, it is important to find robust methods.

In~\cite{EstSer-97,GriSie-99,DolEstSer-00a,DolEstSer-00}, variational min-max formulas were provided for the eigenvalues in gaps of self-adjoint operators. These formulas are based on a decomposition $\gH=\Lambda^+\gH\oplus\Lambda^-\gH$ given by two orthogonal projectors $\Lambda^\pm$ of the ambient Hilbert space $\gH$, and take the general form
\begin{equation}
 \lambda^{(k)}=\inf_{\substack{W\subset F^+\\ \dim(W)=k}}\sup_{\psi\in W\oplus F^-}\frac{\pscal{\psi,A\psi}}{\|\psi\|^2}.
 \label{eq:min-max_intro}
\end{equation}
Here, $F^\pm=\Lambda^\pm F$, with $F$ a dense subspace of $\gH$ such that the quadratic form $\pscal{\psi,A\psi}$ is well-defined on $F^+\oplus F^-\,.$

The equation~\eqref{eq:min-max_intro} is similar to the usual Courant-Fischer (a.k.a. Rayleigh-Ritz) formula for the eigenvalues below the essential spectrum. The main difference is that the infimum is restricted to vectors in the ``positive'' subspace $F^+$ and that the supremum is computed over the infinite-dimensional space $W\oplus F^-$ containing the whole ``negative'' space $F^-$. Some additional technical constraints on $F$ are needed, they are discussed in detail below. 

From the spectral theorem one can see that formula~\eqref{eq:min-max_intro} provides all the eigenvalues above a number $a'$ in the gap and below the next threshold of the essential spectrum, in nondecreasing order and counted with multiplicity, provided that we use for $\Lambda^-$ the spectral projector $\1(A\leq a')$ and, for instance, $F=\cD(A)$. Intuitively, formula~\eqref{eq:min-max_intro} should remain correct if $\Lambda^-$ is not too far from this spectral projector. The main discovery of~\cite{DolEstSer-00} was that the correct criterion for formula~\eqref{eq:min-max_intro} to provide the eigenvalues, is the inequality 
$$\lambda^{(1)}>a:=\sup_{\psi_-\in F^-}\frac{\pscal{\psi_-,A\psi_-}}{\|\psi_-\|^2}.$$
In practical cases, such a condition can be fulfilled for projectors $\Lambda^-$ which are quite far from the exact spectral projector $\1(A\leq a')$. Exploiting this freedom, one can choose $\Lambda^-$ so that the evaluation of the supremum in~\eqref{eq:min-max_intro} becomes very easy, leading to stable discretization techniques.

The main motivation for these min-max formulas was to study the spectrum of the free Dirac operator $D_0$ in 3d perturbed by an electrostatic potential $V$ with Coulomb-type singularity at the origin,
$$D_V=D_0+V(x).$$
The free Dirac operator $D_0$ in 3d is a constant-coefficient, first-order differential operator acting in $L^2(\R^3,\C^4)$ with spectrum $(-\ii,-1]\cup[1,\ii)$. Its precise definition and main properties are recalled below in Section~\ref{sec:free}. The potential $V$ is real-valued, bounded from above, and satisfies 
$$\liminf\limits_{x \to 0} |x| V(x)\geq -1$$ 
in units such that $\hbar=mc^2=1$. This class of operators is both important from the physical point of view and particularly challenging mathematically, due to the criticality of $1/|x|$ as compared with $D_0$. The first min-max formulas of the form \eqref{eq:min-max_intro} were proposed by Talman~\cite{Talman-86} and Datta-Devaiah~\cite{DatDev-88} in the particular case of the operators $D_V$, using the projectors $\Lambda^\pm$ associated with the natural decomposition
$$\Psi=\begin{pmatrix}\phi\\ \chi\end{pmatrix}=\begin{pmatrix}\phi\\ 0\end{pmatrix}+\begin{pmatrix}0\\ \chi\end{pmatrix}\in L^2(\R^3,\C^4),\qquad \phi,\chi\in L^2(\R^3,\C^2)$$
into upper and lower spinors. This choice leads to a particularly simple formula for the supremum in~\eqref{eq:min-max_intro}. It provides efficient ways of computing Dirac eigenvalues~\cite{DolEstSerVan-00,DolEstSer-03,KulKolRut-04,ZhaKulKol-04,CacDor-05}.

When dealing with unbounded quantum-mechanical operators, the questions of domain and self-adjointness are essential. These questions are delicate in the case of $D_V$ and have been the subject of an extensive literature: see, {\it e.g.},~\cite{Thaller,BalEva-11,Hogreve-13} and the references therein. For $0\leq\nu<\sqrt3/2$, if $V$ is real-valued and $\vert V(x)\vert\leq \nu/\vert x\vert$ then the minimal operator
$$\dot{D}_V:=(D_0+V)\upharpoonright C^\ii_c(\R^3\setminus\{0\},\C^4)$$
is essentially self-adjoint and the domain of its closure is $H^1(\R^3,\C^4)$. The minimal {\it exact} Dirac-Coulomb operator $\dot{D}_{-\nu/|x|}$ is still essentially self-adjoint\footnote{To our knowledge, essential self-adjointness is an open question for general real-valued potentials such that $\vert V(x)\vert \leq \frac{\sqrt{3}}{2|x|}$.} for $\nu=\sqrt{3}/2$, but it has infinitely many self-adjoint extensions for $\sqrt{3}/2<\nu\leq 1$. However, for any value $0\leq \nu< 1$, if $\vert V(x)\vert\leq \nu/\vert x\vert$ then the minimal operator $\dot{D}_V$ admits a distinguished self-adjoint extension $D_V$ with domain $\cD(D_V)$ characterized by the property $\cD(D_V)\subset H^{1/2}(\R^3,\C^4)$, which is the space on which the energy is well defined and continuous. The critical case $\nu=1$ is harder. It was considered for the first time by Esteban and Loss in~\cite{EstLos-07} who constructed a distinguished self-adjoint extension $D_V$ for real-valued potentials under the assumption $-1/|x|\leq V(x)\leq 0$. The properties of their extension will be discussed in detail in Section~\ref{sec:critical} below. 

As mentioned above, once the splitting $\gH=\gH^-\oplus\gH^+$ is chosen, one also has to choose the subspace $F$. In \cite{DolEstSer-00}, an abstract min-max theorem is proved, assuming that $F$ is a core (a dense subspace of $\cD(A)$ for the graph norm) and that $F^\pm$ are subspaces of $\cD(\vert A\vert^{1/2})$. In the application to Talman's principle when
$-\nu/\vert x\vert\leq V(x)\leq 0$ and $\nu<1$, a possible choice satisfying these requirements is $F=\cD(D_V)\subset H^{1/2}(\R^3,\C^4)$. But the domain $\cD(D_V)$ of the distinguished extension is not always explicitly known, so a natural question is whether the min-max can actually be performed on simpler spaces $F$ which do not depend on $V$. An attempt in this direction was made in~\cite{DolEstSer-00} where it was claimed that Talman's min-max formula holds for $F=C^\ii_c(\R^3,\C^4)$ as a consequence of the abstract theorem proved in the same paper. This was obvious for $0\leq \nu<\sqrt{3}/2$, indeed $\dot{D}_V$ is essentially self-adjoint, so $C^\ii_c(\R^3,\C^4)$ is a core. But the case $\sqrt{3}/2\leq\nu\leq 1$ was not properly justified in~\cite{DolEstSer-00}. An alternative approach was recently proposed by Morozov and Müller~\cite{MorMul-15,Muller-16}, who proved a variant of the abstract min-max formula allowing them to justify the choice $F=H^{1/2}(\R^3,\C^4)$ for any $\nu<1$.

In this paper we justify the application to $D_V$ of the abstract min-max of~\cite{DolEstSer-00}, for any subspace $F$ such that 
$$C^\ii_c(\R^3\setminus \{0\},\C^4)\;\subseteq\; F\;\subseteq\; H^{1/2}(\R^3,\C^4),$$ 
independently of the value of $0\leq\nu\leq 1$. In the critical case $\nu=1$ this provides the first min-max characterization of the eigenvalues. Our findings show that the min-max formula~\eqref{eq:min-max_intro} of the eigenvalues is valid for a wide range of spaces $F$, and is insensitive to the properties of the domain of the distinguished operator $D_V$. This is a clear advantage of this characterization, which fully justifies its use in practical computations. 

\medskip

In the first section we discuss domains of 3d Dirac-Coulomb operators with an emphasis on the distinguished self-adjoint extension. Most of the content of Sections~\ref{sec:free}--\ref{sec:general-potentials} is well known, and the results are presented here for the convenience of the reader. To our knowledge, the only novelty there is Proposition \ref{prop:closure}, which is proved in Appendix~\ref{app:proof-closure}. In Sections~\ref{sec:Esteban-Loss} and~\ref{sec:critical} we complement some results of Esteban-Loss~\cite{EstLos-07} on the characterization of the distinguished self-adjoint extension, using a quadratic form $q_E$ related to the min-max formula~\eqref{eq:min-max_intro}. Describing the domain of this quadratic form is important for knowing in which spaces the min-max can be formulated. In~\cite{EstLos-07} Esteban and Loss used the closure of $C^\ii_c$ for the norm induced by $q_E$. We show here that this coincides with the maximal domain on which the form $q_E$ is continuous. This is an important ingredient in our proof of the validity of the min-max formula. 

We also provide new results in the critical case $\nu=1$. In particular our proof that the resolvents converge in norm if the potential $V$ is truncated means that the Esteban-Loss extension is the only physically relevant extension for $\nu=1$. 

In Section~\ref{sec:min-max} we state our main result about the min-max formula that was claimed in~\cite{DolEstSer-00} and extend it to the critical case. Sections~\ref{sec:proof_thm_new},~\ref{sec:proof_quadratic form_domain},~\ref{sec:proof_thm_quadratic form_domain_critical},~\ref{sec:proof_thm_distinguished_critical} and Appendices~\ref{app:proof-closure}, \ref{sec:proof_thm_q_C} are dedicated to the proof of our results.  Our results are stated and proved in detail in three space dimensions, but they can easily be adapted to the two-dimensional setting. This is explained in Appendix~\ref{sec:2D}.

\section{Domains of Dirac-Coulomb operators in 3d}

In this section we discuss domains for Dirac-Coulomb operators in three space dimensions, and provide some new properties of the distinguished self-adjoint extension. Some of these properties will be useful in Section~\ref{sec:min-max} where we prove the min-max formula for the eigenvalues.

\subsection{The free Dirac operator in 3d}\label{sec:free}
In a system of units such that $\hbar=m=c=1$, the free Dirac operator $D_0$ in 3d is given by
\begin{equation}
D_0\ = -i\; \boldsymbol{\alpha}\cdot\boldsymbol{\nabla} + \beta = \ - i\
\sum^3_{k=1} {\bf \alpha}_k \partial _k + {\bf \beta},
\label{def_Dirac}
\end{equation}
where $\alpha_1$, $\alpha_2$, $\alpha_3$ and $\beta$ are $4\times4$ Hermitian matrices satisfying the anticommutation relations
\begin{equation} \label{CAR}
\left\lbrace
\begin{array}{rcl}
 {\alpha}_k
{\alpha}_\ell + {\alpha}_\ell
{\alpha}_k  & = &  2\,\delta_{k\ell}\,\1_{\C^4},\\
 {\alpha}_k {\beta} + {\beta} {\alpha}_k
& = & 0,\\
\beta^2 & = & \1_{\C^4}.
\end{array} \right. \end{equation}
The usual representation in $2\times 2$ blocks is given by 
$$ \beta=\left( \begin{matrix} I_2 & 0 \\ 0 & -I_2 \\ \end{matrix} \right),\quad \; \alpha_k=\left( \begin{matrix}
0 &\sigma_k \\ \sigma_k &0 \\ \end{matrix}\right),  \qquad k=1, 2, 3\,,
$$
with the Pauli matrices
$$\sigma _1=\left( \begin{matrix} 0 & 1
\\ 1 & 0 \\ \end{matrix} \right),\quad  \sigma_2=\left( \begin{matrix} 0 & -i \\
i & 0 \\  \end{matrix}\right),\quad  \sigma_3=\left( 
\begin{matrix} 1 & 0\\  0 &-1\\  \end{matrix}\right) \, .$$
The operator $D_0$ is self-adjoint in $L^2(\R^3,\C^4)$ with domain $H^1(\R^3,\C^4)$ and its spectrum is $\sigma(D_0)=(-\ii,-1]\cup[1,\ii)$, see~\cite{Thaller,EstLewSer-08}. 
In addition, the corresponding quadratic form $\pscal{\Psi,D_0\Psi}$ is well-defined and continuous on the Sobolev space $H^{1/2}(\R^3,\C^4)$, which is also the domain of $|D_0|^{1/2}=(1-\Delta)^{1/4}$.

The Rellich-Kato theorem and the Sobolev inequality imply that 
$$D_V:=D_0+V(x)$$
is also self-adjoint on $H^1(\R^3,\C^4)$ for any real-valued potential $V\in L^3(\R^3,\R)+L^\ii(\R^3,\R)$. The purpose of this article is to discuss the case of Coulomb-type potentials which behave like $-\nu|x|^{-1}$ near to the origin, and which just fail to be in $L^3$ at the origin. Using Hardy's inequality
$$\frac1{|x|^2}\leq 4(-\Delta)\leq 4(D_0)^2=4(-\Delta+1)$$
we can use again the Rellich-Kato theorem and obtain that $D_V$ is self-adjoint on $H^1(\R^3,\C^4)$ for potentials in the form $V=V_1+V_2$ where $V_2\in L^3(\R^3,\R)+L^\ii(\R^3,\R)$ and $|V_1(x)|\leq \nu|x|^{-1}$ with $|\nu|<1/2$. However, the threshold $1/2$ given by this argument is not optimal and the proper limit is, rather, $\sqrt{3}/2$ (at least for scalar potentials, see Remark~\ref{rmk:matrix_V} below for matrix potentials). In order to understand the situation, it is enlightening to first look at the well-known case of the exact Coulomb potential.

\subsection{The exact Coulomb potential}\label{sec:radial}
Here we discuss the well-known exact Coulomb case 
$$V_{\rm C}(x)=-\frac{\nu}{|x|}.$$
Note that when $V$ is a bounded perturbation of this potential, the self-adjoint realizations of $D_V$ have the same domains as for $V_{\rm C}$.

For a radial potential such as $V_{\rm C}$, one can use that the Dirac operator commutes with the total angular momentum $J=L+S=(J_1,J_2,J_3)$, as well as with the spin-orbit operator $K=\beta(2S\cdot L+1)$, see~\cite[Sec.~4.6]{Thaller}. Viewing $K$, $J_3$ and $\beta$ as a complete set of commuting observables in the Hilbert space $L^2(\mathbb{S}^2,\C^4)$, one finds an orthonormal basis of this space consisting of trigonometric polynomials in the spherical coordinates $(\theta,\phi)$,
$\Phi_{\kappa,m}^\pm$, indexed by $\kappa\in\Z\setminus\{0\}$ and $m\in\{-\vert \kappa\vert+1/2,-\vert \kappa\vert+3/2,\cdots,\vert \kappa\vert-1/2\}$. 
Using this basis, for any $\Psi\in\C^\infty_c(\R^3\setminus\{0\},\C^4)$ we get the $L^2$-orthogonal decompositions
\begin{equation}
\Psi(x)=r^{-1}\sum_{\kappa,m}u_{\kappa,m}(r)\Phi_{\kappa,m}^+(\theta,\phi)+v_{\kappa,m}(r)\Phi_{\kappa,m}^-(\theta,\phi)\,,
\label{splitting}
\end{equation}
and
\begin{equation}
D_{-\nu/r} \Psi (x) = r^{-1}\sum_{\kappa,\,m}f_{\kappa,m}(r)\Phi_{\kappa,m}^+(\theta,\phi)+g_{\kappa,m}(r)\Phi_{\kappa,m}^-(\theta,\phi),
\label{splittingDVPsi}
\end{equation}
where $r=|x|$, $K\Phi_{\kappa,m}^\pm=-\kappa\Phi_{\kappa,m}^\pm\,,$ $J_3\Phi_{\kappa,m}^\pm=m\Phi_{\kappa,m}^\pm\,,$ $\beta\Phi_{\kappa,m}^\pm=\pm\Phi_{\kappa,m}^\pm\,$ and
\begin{equation}
\begin{pmatrix}f_{\kappa,m}\\ g_{\kappa,m}\end{pmatrix}=h^{\kappa}_\nu\begin{pmatrix}u_{\kappa,m}\\ v_{\kappa,m}\end{pmatrix} ,
\label{eq:DV_radial}
\end{equation}
and where we have introduced the radial Coulomb Dirac-type operator 
\begin{equation} \label{radial}
h^{\kappa}_\nu=\begin{pmatrix}
1-\frac\nu{r} & -\frac{d}{dr}+\frac{\kappa}{r}\\
\frac{d}{dr}+\frac{\kappa}{r} & -1-\frac\nu{r}
\end{pmatrix}.
\end{equation}
As a consequence, the Dirac operator $D_{{-\nu/r}}$ is unitarily equivalent to the direct sum (with multiplicities $2\vert\kappa\vert-1$) of the radial Dirac-type operators $h^{\kappa}_\nu$ acting in the Hilbert space $L^2((0,\ii),\C^2)$.
Using ODE techniques, the question of self-adjointness is then reduced to the discussion of the possible boundary conditions at $r=0$, see~\cite{Rellich-43,Case-50,Weidmann-87,Evans-70,VorGitTyu-07,Hogreve-13,Thaller}. 

Let us discuss this in more detail. In order to find the self-adjoint extensions of the minimal operator
$$\dot{h}_\nu^{\kappa}:=h^{\kappa}_\nu \upharpoonright\C^\ii_c((0,\ii),\C^2)\,,$$
we compute its deficiency subspaces\footnote{Here we follow von Neumann's theory of self-adjoint extensions~\cite{vonNeumann-30}. We refer to the recent paper~\cite{GalMic-19} for an alternative approach based on the Kre{\u\i}n-Vi\v{s}ik-Birman extension scheme.} ${\mathcal K}_\pm=\ker\big((\dot{h}_\nu^{\kappa})^*\mp i\big)$. Since $\cK_-=\overline{\cK_+}$, we only have to determine ${\mathcal K}_+$. The corresponding eigenvalue equation is
\begin{equation}
\begin{cases}
(1-\nu/r)u-v'+\frac{\kappa}r v=iu,\\
u'+\frac{\kappa}ru-(1+\nu/r)v=iv.
\end{cases}
\label{eq:system_radial}
\end{equation}
Plugging in the first equation the relation
$$v=\frac{u'+\frac\kappa{r}u}{1+\nu/r+i}$$
deduced from the second one (note that the denominator never vanishes), 
we obtain an equation for $u$ only:
\begin{equation}
\left(-\frac{d}{dr}+\frac{\kappa}r\right)\frac{1}{1+\nu/r+i}\left(\frac{d}{dr}+\frac{\kappa}r\right)u+ \left(1-\frac\nu{r}-i\right)u=0.
\label{eq:equation_u_only}
\end{equation}
Using standard ODE techniques, one finds that the solution space of~\eqref{eq:equation_u_only} is spanned by two independent functions behaving as $r^{\pm s}(1+O(r))$ at $r=0$, with $s:=\sqrt{\kappa^2-\nu^2}$. Another basis of this space consists of two independent solutions behaving like 
$$\exp(\pm\sqrt2 r)r^{\mp i\nu/\sqrt2}(1+O(r^{-1}))$$ 
when $r\to\ii$~\cite{Plesset-32,Case-50,Titchmarsh-61}. At $\nu=1$, for $\kappa=\pm 1$ we have $s=0$ and there are two solutions behaving like $1$ and $\log(r)$, respectively, near $r=0$.

We first assume $|\nu|<1$. The solution $u_+$ which behaves like $r^s$ at $0$ must diverge at infinity, hence is not in $L^2$. Indeed, assuming by contradiction that $u_+$ behaves as $\exp(-\sqrt2 r)r^{i\nu/\sqrt2}$ at infinity, we can multiply~\eqref{eq:equation_u_only} by $\overline{u_+}$ and integrate by parts (the boundary terms cancel due to the behavior at the origin and at infinity), which gives
$$\int_0^\ii \frac{|u_+'(r)+\kappa u_+(r)/r|^2}{1+\nu/r+i}\,dr=\int_0^\ii (i-1+\nu/r)|u_+(r)|^2\,dr.$$
The imaginary part is negative for the first term and positive for the second, which is a contradiction. 

The solution $u_-$ which behaves like $r^{-s}$ is not square-integrable at the origin when $|\nu|\leq\sqrt{\kappa^2-1/4}$. The smallest value of this threshold is $\sqrt{3}/2$ which we have mentioned before, and it is obtained for $\kappa=\pm1$. 
We conclude that the deficiency indices $n_\pm=\dim {\mathcal K}_\pm$ vanish for $|\nu|\leq\sqrt{3}/2$ and that the operator is essentially self-adjoint in this case. When $|\nu|<\sqrt{3}/2$, the domain of the closure of $\dot{h}_\nu^\kappa$ 
can be shown to be $H^1_0((0,\ii),\C^2)$, and that of $\dot{D}_{-\nu/r}$ to be 
$$\cD(D_{-\nu/r})=H^1(\R^3,\C^4),\qquad \text{for $|\nu|<\sqrt{3}/2$,}$$ 
see~\cite{LanRej-79,LanRejKla-80}.
The situation is more complicated at $|\nu|=\sqrt3/2$. Although the operator is essentially self-adjoint, its domain is larger than $H^1(\R^3,\C^4)$:
$$\cD(D_{-\nu/r}) \supsetneq H^1(\R^3,\C^4),\qquad \text{for $|\nu|=\sqrt{3}/2$.}$$ 
We explain all this in Proposition~\ref{prop:closure} of Appendix~\ref{app:proof-closure}.

When $\sqrt{3}/2< |\nu|\leq 1$ the arguments of~\cite{LanRej-79,LanRejKla-80} apply for $|\kappa|\geq 2$ and show that the operators $\dot{h}_\nu^{\kappa}$ are all essentially self-adjoint, with domain
$$\cD\big(\overline{\dot{h}_\nu^\kappa}\big)=H^1_0\big((0,\ii),\C^2\big),\qquad \text{for $|\kappa|\geq 2$ and $\sqrt{3}/2< |\nu|\leq 1$.}$$
Only $\kappa=\pm1$ pose some difficulties. 

In the case $\sqrt{3}/2< |\nu|<1$, the two functions $r^{\pm \sqrt{1-\nu^2}}$ are now square-integrable at 0 and there is one linear combination of $u_+$ and $u_-$, which we call $u_\kappa$, which is square-integrable at infinity. From the previous argument, this function must diverge like $r^{-s}$ at $r=0$ and we can therefore always assume that $u_\kappa\sim r^{-s}$ and $v_\kappa\sim (\kappa-s)r^{-s}/\nu$ at 0. By von Neumann's theory of self-adjoint extensions (see, {\it e.g.}, \cite[p.~140]{ReeSim2}), we conclude that, for $\kappa=\pm1$, $\dot{h}_\nu^{\kappa}$ admits a family of self-adjoint extensions parametrized by $\alpha\in[0,2\pi)$, whose domains are given by 
$$\cD\left(\overline{\dot{h}_\nu^{\kappa}}\right)\oplus \left\{\begin{pmatrix}u_\kappa\\ v_\kappa\end{pmatrix}+e^{i\alpha}\begin{pmatrix}\overline{u_\kappa}\\ \overline{v_\kappa}\end{pmatrix}\right\}\C,\qquad \text{for $\kappa=\pm1$.}$$
In this formula, we have used that the solutions with eigenvalue $+i$ and $-i$ are related by complex conjugation, since the operator $h_\nu^{\kappa}$ is real. In Proposition~\ref{prop:closure} in Appendix \ref{app:proof-closure} we will prove that for all $\sqrt{3}/2< |\nu|\leq1$,
$$\cD\left(\overline{\dot{h}_\nu^{\pm1}}\right)=H^1_0((0,\infty),\C^2)\,.$$
The functions $(u_\kappa^\alpha,v_\kappa^\alpha)=(u_\kappa+e^{i\alpha}\overline{u_\kappa},v_\kappa+e^{i\alpha}\overline{v_\kappa})$, $\kappa=\pm1$, 
are more singular at the origin. For $\alpha\neq\pi$, they have the strong singularity $(1+e^{i\alpha})r^{-\sqrt{1-\nu^2}}$ at $r=0$. The associated $4$-spinors
$$\Psi^\alpha_{\kappa,m}:=\vert x\vert^{-1}\left(u_\kappa^\alpha(r)\Phi_{\kappa,m}^+(\theta,\phi)+v_\kappa^\alpha(r)\Phi_{\kappa,m}^-(\theta,\phi)\right)\,,\; m=\pm1/2\,,$$
will not have a finite Coulomb energy and will not be in $H^{1/2}(\R^3,\C^4)$ (the natural space for which one can define the quadratic form of the free Dirac operator). However, if we choose $\alpha=\pi$, the function behaves like 
\begin{align*}
u_\kappa-\overline{u_\kappa}&=2i\Im\big\{r^{-\sqrt{1-\nu^2}}(1+O(r))+ar^{\sqrt{1-\nu^2}}(1+O(r))\big\}\\
&=2i\Im(a) r^{\sqrt{1-\nu^2}}+O(r^{1-{\sqrt{1-\nu^2}}})                                                                                                                                                                                                                                                                                                                                                                                                                                                                                                                                                                                                                                                                                                                                                                                            \end{align*}
as $r\to 0$, since $1/2>{\sqrt{1-\nu^2}}$. Therefore the associated $4$-spinor has a finite Coulomb energy as well as a well-defined free Dirac energy. This sounds more satisfactory from a physical point of view. Note however that $r^{\sqrt{1-\nu^2}}$ is not in $H^1$ at the origin for $\sqrt{3}/2<|\nu|<1$, hence the domain of this self-adjoint realization is always bigger than $H^1$.

The realization of the Dirac operator which has $\alpha=\pi$ in the four sectors corresponding with the quantum numbers $\kappa=\pm1$ and $m=\pm1/2$ is called the \emph{distinguished self-adjoint extension} of the minimal Dirac-Coulomb operator $\dot{D}_{-\nu/r}$.

For $\nu=\pm1$ the situation is slightly different since $s=0$. The two functions behave at the origin like $1$ and $\log(r)$. Hence even for $\alpha=\pi$, the Coulomb energy is infinite since $u_{\pm1}$ does not tend to 0 at 0. 
However it can be called a distinguished extension since it is the least singular. It can also be shown that it is the one obtained when $\nu\to\pm1^\mp$, as we will discuss for general potentials in Section~\ref{sec:critical}, and the one for which the min-max characterization holds in any reasonable space that one can think of. 

If we now come back to the whole space and use~\cite[Sec.~4.6.4]{Thaller}, the corresponding domain of the distinguished extension reads
\begin{multline}
\label{eq:domain_distinguished_radial}
\cD(D_{-\nu/r})=\cD\left(\overline{\dot{D}_{-\nu/r}}\right)\oplus
\begin{pmatrix} 
i\,U_{-1}\begin{pmatrix}  Y^0_0 \\0 \\  \end{pmatrix} \\  \\  \frac{V_{-1}}{\sqrt{3}}\left( \begin{matrix} Y^0_1 \\-\sqrt{2}\,Y^1_1 \\   \end{matrix} \right)  \\
\end{pmatrix} 
\oplus 
\begin{pmatrix} 
i\,U_{-1}\left( \begin{matrix} 0\\ Y^0_0  \\  \end{matrix} \right) \\  \\  \frac{V_{-1}}{\sqrt{3}}\left( \begin{matrix} \sqrt{2}\,Y^{-1}_1 \\-Y^0_1 \\   \end{matrix} \right)\end{pmatrix}\\
\oplus\begin{pmatrix} 
i\,\frac{U_1}{\sqrt{3}}\left( \begin{matrix} Y^0_1 \\-\sqrt{2}\,Y^1_1 \\   \end{matrix} \right) \\  \\  V_1\left( \begin{matrix}  Y^0_0 \\0 \\  \end{matrix} \right)  \\
\end{pmatrix} 
\oplus \begin{pmatrix} 
i\,\frac{U_1}{\sqrt{3}}\left( \begin{matrix} \sqrt{2}\,Y^{-1}_1 \\-Y^0_1 \\   \end{matrix} \right) \\  \\  V_1\left( \begin{matrix} 0\\ Y^0_0  \\  \end{matrix} \right)  \\
\end{pmatrix},
\end{multline}
where  the functions $Y^m_\ell$ are the spherical harmonics normalized as in \cite[Sec.~4.6.4]{Thaller} and $U_\kappa=(u_\kappa-\overline{u_\kappa})/r$, $V_\kappa=(v_\kappa-\overline{v_\kappa})/r$.  Moreover, we prove in Appendix \ref{app:proof-closure} that
\begin{equation}
 \cD\left(\overline{\dot{D}_{-\nu/r}}\right)=H^1(\R^3,\C^4)\,,\qquad\text{for $\sqrt3/2<|\nu|\leq1$.}
 \label{eq:equality_H_1}
\end{equation}
We conclude that, for $\sqrt3/2<|\nu|\leq1$, the domain of the distinguished self-adjoint extension is just the usual Sobolev space $H^1(\R^3,\C^4)$ to which are added four functions having an explicit singularity at the origin, which is so strong that these are always outside of $H^1(\R^3,\C^4)$. They belong to $H^{1/2}(\R^3,\C^4)$ when $\sqrt3/2<|\nu|<1$, but just fail to do so when $|\nu|=1$.

\subsection{General potentials with subcritical Coulomb-like singularity}\label{sec:general-potentials}
It is natural to ask whether similar results hold for potentials which have a singularity that can be controlled in absolute value by $\nu|x|^{-1}$ without being a bounded perturbation of $\pm\nu/|x|$. In the seventies and eighties, many authors~\cite{Schmincke-72b,Wust-73,Wust-75,Wust-77,Nenciu-76,KlaWus-78,LanRej-79,LanRejKla-80,Kato-83,Thaller} have proved the existence of a distinguished self-adjoint extension when $|\nu|<1$ which has the same properties as in the exact Coulomb case. The following statement is a summary of several of these results, some of which will be useful for us later.

\begin{thm}[Distinguished extension of $\dot{D}_V$~\cite{Schmincke-72b,Wust-73,Wust-75,Wust-77,Nenciu-76,KlaWus-78,LanRej-79,LanRejKla-80,Kato-83,Thaller}]\label{thm:distinguished}
We assume that $V=V_1+V_2+V_3$ with $V_2\in L^3(\R^3,\R)$, $V_3\in L^\ii(\R^3,\R)$ and $|V_1(x)|\leq \nu/|x|$, with $0\leq \nu<1$.

\begin{enumerate}
 \item The minimal operator $\dot{D}_V$ defined on $C^\ii_c(\R^3\setminus\{0\},\C^4)$ has a unique self-adjoint extension $D_V$ such that 
$$H^1(\R^3,\C^4)\subset \cD(D_V)\subset H^{1/2}(\R^3,\C^4).$$
It is also the unique self-adjoint extension for which 
$$\int_{\R^3}\frac{|\Psi(x)|^2}{|x|}\,dx<\ii,\qquad \forall \Psi\in \cD(D_V).$$
\item For any $\Psi,\Psi'\in\cD(D_V)$, we have
\begin{equation}
\pscal{\Psi,D_V\Psi'}=\pscal{\Psi,D_0\Psi'}+\int_{\R^3}V\Psi^*\Psi'
\label{eq:form}
\end{equation}
where the right-side is understood in the form sense in $H^{1/2}(\R^3,\C^4)$.
\item If $V_3\to0$ at infinity, the essential spectrum is 
$$\sigma_{\rm ess}(D_V)=(-\ii,-1]\cup[1,\ii).$$
\item For $V_\eps:=\min(\max(V(x),-1/\eps),1/\eps)$, the operator $D_{V_\eps}$ converges to the distinguished self-adjoint extension in the norm resolvent sense when $\eps\to0$.
\item If in addition $0\leq \nu<\sqrt{3}/2$, then the operator $\dot{D}_V$ is essentially self-adjoint on $C^\ii_c(\R^3\setminus\{0\},\C^4)$ and its domain is $\cD(D_V)=H^1(\R^3,\C^4)$.
\end{enumerate}
\end{thm}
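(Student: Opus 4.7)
The plan is to assemble the five assertions from standard building blocks: a Hardy-Dirac form bound, an approximation by bounded potentials, Weyl's theorem, and the ODE analysis already carried out in Section~\ref{sec:radial}. I would start by noting that under the hypotheses on $V=V_1+V_2+V_3$ the quadratic form
$$q_V(\Psi):=\pscal{\Psi,D_0\Psi}+\int_{\R^3}V|\Psi|^2$$
is continuous on $H^{1/2}(\R^3,\C^4)$: the kinetic part is by definition of $H^{1/2}=\cD(|D_0|^{1/2})$, the Coulomb piece $V_1$ is form-bounded via Kato's inequality $|x|^{-1}\leq \tfrac{\pi}{2}(-\Delta)^{1/2}$ (with constant strictly less than $1$ since $\nu<1$), and $V_2+V_3$ is controlled by the Sobolev embedding $H^{1/2}\hookrightarrow L^3$. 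This gives a natural candidate for the form domain and, importantly, justifies~\eqref{eq:form} once the operator $D_V$ is constructed with $\cD(D_V)\subset H^{1/2}$: both sides of~\eqref{eq:form} are continuous on $H^{1/2}$ and agree on the core $C^\ii_c(\R^3\setminus\{0\},\C^4)$.

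Next I would construct $D_V$ through the truncation $V_\eps=\min(\max(V,-1/\eps),1/\eps)$. Each $D_{V_\eps}$ is self-adjoint on $H^1(\R^3,\C^4)$ by Kato-Rellich, and one obtains uniform in $\eps$ bounds on $\|\,|x|^{-1/2}(D_{V_\eps}-iE)^{-1}\|$ for large $|E|$ from the Hardy-Dirac inequality (Kato's inequality combined with $\nu<1$). Using the resolvent identity
$$(D_{V_\eps}-z)^{-1}-(D_{V_{\eps'}}-z)^{-1}=(D_{V_\eps}-z)^{-1}(V_{\eps'}-V_\eps)(D_{V_{\eps'}}-z)^{-1},$$
dominated convergence together with the uniform bound shows that $(D_{V_\eps}-z)^{-1}$ is Cauchy in operator norm as $\eps\to 0$. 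The limit is a bounded operator which is the resolvent of a self-adjoint $D_V$, proving (4). The inclusion $\cD(D_V)\subset H^{1/2}(\R^3,\C^4)$ follows from passing to the limit in the uniform form bound, while $H^1\subset\cD(D_V)$ comes from a direct verification that the graph closure of $\dot D_V$ on $C^\ii_c(\R^3\setminus\{0\},\C^4)$ includes $H^1$ via the explicit radial analysis of Section~\ref{sec:radial}.

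For uniqueness (assertion (1)), I would argue in the angular momentum basis~\eqref{splitting}. Any competing self-adjoint extension $\tilde D_V$ decomposes into radial extensions $\tilde h^\kappa_\nu$, and for $|\kappa|\geq 2$ essential self-adjointness leaves no freedom. Only $\kappa=\pm1$ allows a one-parameter family, as recalled in Section~\ref{sec:radial}, and only the choice $\alpha=\pi$ produces radial spinors belonging to $H^{1/2}$, respectively satisfying $\int|\Psi|^2/|x|<\ii$ when $\nu=1$. Hence the constraint $\cD(\tilde D_V)\subset H^{1/2}$ (or the finiteness of the Coulomb integral) forces $\tilde D_V=D_V$. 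The essential spectrum statement (3) follows from Weyl's theorem: since $V_3\to 0$ and $V_2\in L^3$, the operator $V(D_0-i)^{-1}$ is compact (Kato-Seiler-Simon), and the Coulomb piece $V_1$ contributes a relatively form-compact perturbation which transfers to a compact difference of resolvents via the second resolvent identity. Assertion (5) is exactly the outcome of Section~\ref{sec:radial}: for $|\nu|<\sqrt{3}/2$ all deficiency indices $n_\pm$ vanish, so $\dot D_V$ is essentially self-adjoint, and Proposition~\ref{prop:closure} identifies the closure's domain with $H^1(\R^3,\C^4)$.

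The main obstacle is establishing the uniform resolvent bound needed for (4) in the full range $0\leq\nu<1$: the Kato bound $\tfrac{\pi}{2}$ is not sharp enough to give form-boundedness with relative bound below~$1$ for $\nu$ close to~$1$, so one must replace it by a genuine Hardy-Dirac inequality involving $D_0$ itself (as in Dolbeault-Esteban-Loss-Vega and Esteban-Loss). Proving such an inequality with an $\eps$-independent constant, uniformly for the truncated potentials, is the delicate analytic point; everything else is essentially algebraic manipulation of resolvent identities and radial ODE bookkeeping.
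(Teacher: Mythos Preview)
Your overall strategy of constructing $D_V$ as a norm-resolvent limit of the truncations $D_{V_\eps}$ is essentially W\"ust's approach, which is one of the cited references; the paper itself sketches instead Nenciu's construction via the explicit resolvent formula~\eqref{eq:resolvent}, using the sharp constant $\||x|^{-1/2}(D_0+is)^{-1}|x|^{-1/2}\|=1$ in~\eqref{eq:Nenciu-Kato}. Both routes are legitimate, and you are right that Kato's inequality with constant $\pi/2$ is not enough near $\nu=1$. However, the inequality you actually need for the uniform resolvent bound is precisely~\eqref{eq:Nenciu-Kato} (due to W\"ust and Kato), not the Hardy--Dirac inequalities of Dolbeault--Esteban--Loss--Vega or Esteban--Loss, which concern the quadratic form $q_\lambda$ and play a different role in the paper.

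There is a genuine gap in your uniqueness argument for~(1). You write that ``any competing self-adjoint extension $\tilde D_V$ decomposes into radial extensions $\tilde h^\kappa_\nu$'', and then select $\alpha=\pi$ in each $\kappa=\pm1$ sector. This decomposition is only available when $V$ is radial; for a general potential with $|V_1(x)|\leq\nu/|x|$ the operator $D_V$ does not commute with $K$ or $J$, so Section~\ref{sec:radial} cannot be invoked. The same remark applies to your justification of $H^1\subset\cD(D_V)$ ``via the explicit radial analysis'': that inclusion follows instead from the elementary bound $\|D_V\Psi\|_{L^2}\lesssim\|\Psi\|_{H^1}$ on $C^\ii_c(\R^3\setminus\{0\},\C^4)$ (Hardy plus Sobolev), hence the closure of $\dot D_V$ already contains $H^1$. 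For uniqueness in the non-radial case one must argue abstractly: either show that Nenciu's resolvent formula~\eqref{eq:resolvent} produces the only extension whose domain sits in $H^{1/2}$, or use the Klaus--W\"ust characterization via finiteness of the Coulomb energy, both of which avoid any angular decomposition.
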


\begin{remark}\label{rmk:form}
We have $H^1(\R^3,\C^4)=\cD(|D_0|)\subset\cD(|D_V|)$. Since the square root is operator monotone, we deduce that 
$$\cD(|D_0|^{1/2})=H^{1/2}(\R^3,\C^4)\subset \cD(|D_V|^{1/2}).$$ 
This can be used to extend the formula~\eqref{eq:form} on the whole of $H^{1/2}(\R^3,\C^4)$, if we interpret the left side in the sense of quadratic forms, that is, $\pscal{\Psi,D_V\Psi'}:=\pscal{|D_V|^{1/2}\Psi,U_V|D_V|^{1/2}\Psi'}$ where $U_V={\rm sgn}(D_V)$.
\end{remark}

\begin{remark}\label{rmk:matrix_V}
The results are exactly the same for a Hermitian $4\times4$ matrix potential $V(x)$, with the exception of \textit{(5)} in which $\sqrt3/2$ has to be replaced by $1/2$. There are examples of matrix-valued potentials satisfying $|V(x)|\leq(1+\eps)/(2|x|)$ for which $D_V$ is not essentially self-adjoint~\cite{Arai-75}.
\end{remark}

In~\cite{Nenciu-76}, Nenciu  defines the distinguished self-adjoint extension through its resolvent, using the formula
\begin{equation}
\frac{1}{D_V-z}=\frac{1}{D_0-z}- \frac{1}{D_0-z}|V|^{\frac12}\frac{1}{1+SM(z)}|V|^{\frac12}\frac{1}{D_0-z}
\label{eq:resolvent}
\end{equation}
where $S={\rm sgn}(V)$ and $M(z)=|V|^{1/2}(D_0-z)^{-1}|V|^{1/2}$.
From Kato's inequality 
\begin{equation}
\frac{1}{|x|}\leq \frac\pi2 \sqrt{-\Delta}\leq \frac\pi2|D^0|
\label{eq:Kato}
\end{equation}
and Sobolev's inequality, one can prove that $|V|^{1/2}|D_0|^{-1/2}$ and $|D_0|^{-1/2}|V|^{1/2}$ are bounded under the assumptions of Theorem~\ref{thm:distinguished}. Then $(D_0-z)^{-1}|V|^{1/2}$ (appearing on the left of the last term in~\eqref{eq:resolvent}) has its range in $H^{1/2}(\R^3,\C^4)$. This shows that the range of $(D_V-z)^{-1}$ (that is, the domain of $D_V$) is included in $H^{1/2}(\R^3,\C^4)$, as required. In addition, since $(D_0-z)^{-1}|V|^{1/2}$ is compact under our assumptions on $V$ by~\cite[Sec.~5.7]{Davies-07}, $(D_V-z)^{-1}$ is a compact perturbation of $(D_0-z)^{-1}$, and the two operators have the same essential spectrum~\cite{KlaWus-79}.

The main condition necessary to give a meaning to~\eqref{eq:resolvent} is that $1+SM(z)$ is invertible on $L^2(\R^3,\C^4)$. Nenciu proves that $D_V$ is uniquely defined from~\eqref{eq:resolvent} under the sole condition that $\|M(z_0)\|<1$ for one $z_0\in\C$. Since $z\mapsto (1+SM(z))^{-1}$ is meromorphic on $\C\setminus(-\ii,-1]\cup[1,\ii)$, this is sufficient to define the right side of~\eqref{eq:resolvent} for a large set of values of $z$, and then to construct the operator $D_V$.
In our case the bound on $M(z_0)$ follows from the two equalities
\begin{equation}
\norm{|x|^{-1/2}(D_0+is)^{-1}|x|^{-1/2}}=1,\qquad \forall s\in\R,
\label{eq:Nenciu-Kato}
\end{equation}
and
\begin{equation}
 \lim_{s\to\ii}\norm{|V_2|^{1/2}(D_0+is)^{-1}|V_2|^{1/2}}=0\quad\text{for $V_2\in L^{3}(\R^3)$}.
 \label{eq:limit_Sobolev}
\end{equation}
The limit~\eqref{eq:limit_Sobolev} follows from the Sobolev inequality. The equality~\eqref{eq:Nenciu-Kato} was conjectured by Nenciu in~\cite{Nenciu-76} and later proved by Wüst~\cite{Wust-77} and Kato~\cite{Kato-83}. It has recently been rediscovered in~\cite[Thm.~1.3]{ArrDuoVeg-13}. The constraint that $|\nu|<1$ comes from the norm in~\eqref{eq:Nenciu-Kato} being equal to 1.

\subsection{A different characterization of the distinguished extension}\label{sec:Esteban-Loss}

Now we turn to the description of a method which has been introduced in~\cite{EstLos-07,EstLos-08} (further developed in~\cite{Arrizabalaga-11,ArrMasVeg-14,ArrMasVeg-15}), and is essential for our discussion of min-max levels. We are going to make the stronger assumption 
\begin{equation}
 \boxed{-\frac{\nu}{|x|}\leq V(x)<1+\sqrt{1-\nu^2}} 
 \label{eq:assumption_V_intro}
\end{equation}
for some $0\leq \nu<1$. Here $\sqrt{1-\nu^2}$ is the first eigenvalue of the Dirac operator with the Coulomb potential $V_{\rm C}(x)=-\nu/|x|$.
The lower bound in~\eqref{eq:assumption_V_intro} means that the attractive part of $V$ is essentially Coulombic and it will imply that the first ``electronic'' eigenvalue will be above $\sqrt{1-\nu^2}$. Here ``electronic'' means that it is an eigenvalue which arises from the upper part of the spectrum when $V$ is replaced by $tV$ and $t$ is turned on progressively. The upper bound on $V$ in~\eqref{eq:assumption_V_intro} is here to ensure that the positronic eigenvalues (those arising from the lower part) do not go above $\sqrt{1-\nu^2}$. The fact that the electronic and positronic eigenvalues do not cross is an important property for having a min-max formula of the eigenvalues (see~\cite{DolEstSer-06} for a discussion). 

In this section we introduce a quadratic form for the upper spinor, which plays a central role in the definition of the distinguished self-adjoint extension and for the min-max formulation of the electronic eigenvalues.

Similarly as in Subsection~\ref{sec:radial}, we consider the eigenvalue equation $D_V\Psi=\lambda\Psi$ with, this time, $\lambda\in\R$, and which we write in terms of the upper and lower components $\phi,\chi\in L^2(\R^3,\C^2)$ of the 4--spinor $\Psi=\begin{pmatrix}\phi\\ \chi\end{pmatrix}$. We obtain 
\begin{equation}
\begin{cases}
(1+V)\phi-i\sigma\cdot\nabla \chi=\lambda\phi,\\
-i\sigma\cdot\nabla \phi+(-1+V)\chi=\lambda\chi.
\end{cases}
\end{equation}
We insert 
$$\chi=\frac{-i\sigma\cdot\nabla\phi}{1-V+\lambda}$$
in the first equation and get an equation for $\phi$ only:
\begin{equation}
 -i\sigma\cdot\nabla \frac{-i\sigma\cdot\nabla\phi}{1-V+\lambda}+(1+V-\lambda)\phi=0. 
 \label{eq:phi}
\end{equation}
This suggest to look at the quadratic form
\begin{equation}
q_\lambda(\phi):=\int_{\R^3}\frac{|\sigma\cdot \nabla\phi(x)|^2}{1-V(x)+\lambda}\,dx+\int_{\R^3}(1+V(x)-\lambda)|\phi(x)|^2\,dx.
 \label{eq:q_E}
\end{equation}
Note that the denominator in the first term is well defined for $\lambda>\sup(V)-1$. Without $\lambda$ in the denominator of the first term, which comes from the lower component $\chi$, the quadratic form $q_\lambda$ would be associated with a usual eigenvalue problem. With $\lambda$ in the denominator this is more involved. Nevertheless we have gained that the solution $\phi$ to~\eqref{eq:phi} can be constructed by a minimization procedure, for any $\lambda>\sup(V)-1$.
In Section~\ref{sec:min-max} we will explain the link between the quadratic form $q_\lambda$ and the true eigenvalues of $D_V$ but, for the moment, we discuss the properties of $q_\lambda$ for an arbitrary $\lambda>\sup(V)-1$.

In order to show that $q_\lambda$ is bounded from below, we write
\begin{align*}
q_\lambda(\phi)=&(1-\nu^2)\int_{\R^3}\frac{|\sigma\cdot \nabla\phi(x)|^2}{1-V(x)+\lambda}\,dx+\nu^2\int_{\R^3}\frac{|\sigma\cdot \nabla\phi(x)|^2}{1-V(x)+\lambda}\,dx\\
&+ \int_{\R^3}V(x)|\phi(x)|^2\,dx+(1-\lambda)\int_{\R^3}|\phi(x)|^2\,dx.
\end{align*}
In~\cite{DolEstSer-00,DolEstLosVeg-04} the following Hardy-type inequality was proved
\begin{equation}
\int_{\R^3}\frac{|\sigma\cdot \nabla\phi(x)|^2}{a+1/|x|}\,dx+\int_{\R^3}\left(a-\frac{1}{|x|}\right)|\phi(x)|^2\,dx\geq0
\label{eq:Hardy-type2}
\end{equation}
for all $a>0$. Using our assumption that $V$ is bounded from below by the Coulomb potential, we can estimate 
\begin{align}
\nu^2\int_{\R^3}\frac{|\sigma\cdot \nabla\phi(x)|^2}{1-V(x)+\lambda}\,dx&\geq \nu^2\int_{\R^3}\frac{|\sigma\cdot \nabla\phi(x)|^2}{1+\nu/|x|+\lambda}\,dx\nonumber\\
&\geq \int_{\R^3}\left(\frac{\nu}{|x|}-1-\lambda\right)|\phi(x)|^2\,dx\nonumber\\
&\geq  -\int_{\R^3}V(x))|\phi(x)|^2\,dx-(1+\lambda)\int_{\R^3}|\phi(x)|^2\,dx.\label{eq:estim_kinetic_potential}
\end{align}
Thus we have proved that
\begin{equation}
q_\lambda(\phi)+2\lambda \int_{\R^3}|\phi(x)|^2\,dx\geq(1-\nu^2)\int_{\R^3}\frac{|\sigma\cdot \nabla\phi(x)|^2}{1-V(x)+\lambda}\,dx. 
\label{eq:estim_below_Q_lambda}
\end{equation}
Since the right side is positive, this shows that $q_\lambda+2\lambda\|\phi\|_{L^2}^2$ is positive as well. In addition, we conclude from~\eqref{eq:estim_kinetic_potential} that this defines a norm which is independent of $\lambda$ and is equivalent to that given by the quadratic form
\begin{equation}
\norm{\phi}_\cV^2:= \int_{\R^3}\frac{|\sigma\cdot \nabla\phi(x)|^2}{2-V(x)}\,dx+\int_{\R^3}|\phi(x)|^2\,dx.
\label{eq:norm_V}
\end{equation}
The following result provides some new properties of this space which are going to be useful for proving the min-max principle stated below in Section~\ref{sec:min-max}.

\begin{thm}[The quadratic form domain]\label{thm:quadratic form_domain}
Assume that 
$$V(x)\geq -\frac{1}{|x|}\quad\text{and}\quad \sup(V)<2$$
and let 
\begin{multline}
 \cV=\Big\{\phi\in L^2(\R^3,\C^2)\cap H^1_{\rm loc}(\R^3\setminus\{0\},\C^2)\ :\\ (2-V)^{-1/2}\sigma\cdot \nabla\phi\in L^2(\R^3,\C^2)\Big\}. 
 \label{eq:def_cV}
\end{multline}
Then $C^\ii_c(\R^3\setminus\{0\},\C^2)$ is dense in $\cV$ for the norm~\eqref{eq:norm_V}. In addition, we have the continuous embedding
$$\cV\subset H^{1/2}(\R^3,\C^2).$$
\end{thm}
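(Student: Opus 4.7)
My plan is to establish the two assertions in tandem: first prove the embedding inequality on smooth test functions, then density, and finally combine. For $\phi \in C^\infty_c(\R^3 \setminus \{0\}, \C^2)$, I would associate the 4-spinor $\Psi_\phi := \begin{pmatrix}\phi \\ -i(2-V)^{-1}\sigma \cdot \nabla \phi\end{pmatrix}$. Since $\sup V < 2$ and $V$ is bounded on $\supp(\phi)$ (which lies away from the origin), the lower component is in $L^2$ and $\|\Psi_\phi\|_{L^2}^2 \leq C \|\phi\|_\cV^2$. A direct block computation shows that the lower component of $(D_V - 1)\Psi_\phi$ vanishes by construction, yielding the key identity
\[\pscal{\Psi_\phi, (D_V - 1)\Psi_\phi} = q_1(\phi).\]
Combined with the embedding $\cD(|D_V|^{1/2}) \hookrightarrow H^{1/2}(\R^3, \C^4)$ with equivalent norms---a consequence of the sharp Nenciu-Kato estimate \eqref{eq:Nenciu-Kato} and Nenciu's resolvent formula \eqref{eq:resolvent}---this yields $\|\Psi_\phi\|_{H^{1/2}} \leq C \|\phi\|_\cV$, hence $\|\phi\|_{H^{1/2}} \leq C \|\phi\|_\cV$ for all such test functions.

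For the density, given $\phi \in \cV$ I would approximate in three standard stages. First, cutoff at infinity via $\eta_R \phi$ with $\eta_R(x) = \eta(|x|/R)$: the only nontrivial term is $\int(2-V)^{-1}|\phi|^2 |\nabla \eta_R|^2\,dx \leq CR^{-2}\|\phi\|_{L^2}^2 \to 0$, using the uniform upper bound on $(2-V)^{-1}$. Second, cutoff near the origin with a logarithmic cutoff $\zeta_\eps$, designed so that $|\nabla \zeta_\eps|^2 \leq C(|x|\log(1/\eps))^{-2}$ on its transition layer $\eps \leq |x| \leq 1$: the cross term reduces to
\[\frac{C}{\log^2(1/\eps)} \int_{\eps \leq |x| \leq 1} \frac{(2-V)^{-1} |\phi|^2}{|x|^2}\, dx,\]
and one must show the inner integral is uniformly bounded in $\eps$ using a weighted Hardy inequality derived from \eqref{eq:Hardy-type2} together with the Coulombic lower bound $V \geq -1/|x|$. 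Finally, mollify. Given a $\cV$-approximating sequence $\phi_n \in C^\infty_c(\R^3 \setminus \{0\})$ produced by these stages, the previous paragraph implies that $(\phi_n)$ is $H^{1/2}$-Cauchy; its $H^{1/2}$-limit coincides with $\phi$ by $L^2$ uniqueness, yielding the continuous embedding $\cV \hookrightarrow H^{1/2}$.

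The main obstacle is the uniform weighted Hardy bound needed for the origin cutoff. In the Coulomb-singular regime $V(x) \sim -1/|x|$ near $0$, the weight $(2-V)^{-1}$ decays linearly and even a sharp cutoff with $|\nabla \zeta_\eps| \lesssim 1/\eps$ succeeds, because the bound $\int_{|x| \leq 2\eps}|\phi|^2/|x|\,dx = o(1)$ (itself a consequence of \eqref{eq:Hardy-type2}) tames the $1/\eps$ blowup. In the opposite regime where $V$ is bounded near the origin, $(2-V)^{-1}$ is bounded below by a positive constant, $\cV$ locally coincides with $H^1$, and a standard cutoff argument applies. The general case interpolates between these regimes, and at the critical threshold $\nu = 1$---where the form bound of $V$ relative to $|D_0|$ becomes saturated---a careful use of \eqref{eq:estim_kinetic_potential} (with $\lambda$ and the cutoff weight chosen compatibly) is needed to close the estimate uniformly in $\eps$.
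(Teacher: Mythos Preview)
Your density argument has a genuine gap at the origin cutoff. You need
\[
\int_{\eps\leq|x|\leq 1}\frac{(2-V)^{-1}|\phi|^2}{|x|^2}\,dx
\]
to be uniformly bounded in $\eps$, but this fails for general $V$ satisfying only $V\geq -1/|x|$ and $\sup V<2$. The Hardy-type inequality~\eqref{eq:Hardy-type2} only yields $\int|\phi|^2/|x|<\infty$, not $\int(2-V)^{-1}|\phi|^2/|x|^2<\infty$; the latter would need an extra factor of $|x|$ from the weight $(2-V)^{-1}$, which is unavailable when $V$ is merely bounded on parts of any neighborhood of the origin. Concretely, take $V$ equal to $0$ on a sequence of annuli $\{4^{-k}\leq |x|\leq 2\cdot 4^{-k}\}$ and equal to $-1/|x|$ in between, and let $\phi$ be radial, approximately equal to $2^k$ on the $k$th annulus with transitions on the Coulomb annuli. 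One checks $\sum(c_k-c_{k+1})^2 a_k^2<\infty$ and $\sum c_k^2 a_k^3<\infty$ (so $\phi\in\cV$), while on the $V=0$ annuli the weight $(2-V)^{-1}$ stays bounded below and $\sum c_k^2 a_k=\infty$, so your inner integral diverges. Your claim that ``the general case interpolates'' between the Coulomb-singular and bounded regimes, and the vague appeal to~\eqref{eq:estim_kinetic_potential}, do not close this. The paper avoids the issue entirely by a Zhikov-type change of unknown: write $\phi=(\sigma\cdot\nabla)u$ with $u\in\dot H^1$, so that $\|\phi\|_\cV^2=\int|\Delta u|^2/(2-V)+\int|\nabla u|^2$ (the Pauli matrices disappear), then define $\phi_\eps=(\sigma\cdot\nabla)u_\eps$ with $\Delta u_\eps=\1(|x|\geq\eps)\Delta u$. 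Since $u_\eps$ is harmonic on $B_\eps$, $\phi_\eps$ is \emph{bounded} near the origin, and only then does a crude cutoff suffice; the approximation $\phi_\eps\to\phi$ in $\cV$ uses Hardy--Littlewood--Sobolev rather than any Hardy inequality. Remark~\ref{rmk:simpler_proof_Coulomb} confirms that your direct cutoff works only under the extra hypothesis $V\leq-\eta/|x|$ near~$0$.

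Your route to the embedding $\cV\subset H^{1/2}$ via the form domain of $D_V$ is also problematic. The theorem is stated for $V\geq -1/|x|$, i.e.\ including the critical strength $\nu=1$, where the Nenciu--Kato bound~\eqref{eq:Nenciu-Kato} is saturated and does not yield $\cD(|D_V|^{1/2})\hookrightarrow H^{1/2}$; moreover, the identity $\pscal{\Psi_\phi,(D_V-1)\Psi_\phi}=q_1(\phi)$ involves the \emph{indefinite} form, not $\pscal{\Psi_\phi,|D_V|\Psi_\phi}$, so the passage from $q_1(\phi)$ to $\|\Psi_\phi\|_{H^{1/2}}$ is not explained. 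There is also a circularity risk, since the self-adjoint realization you invoke is itself built (in the critical case) using the present theorem. The paper instead proves the embedding directly from the elementary inequality
\[
\int_{\R^3}|(-\Delta)^{1/4}\phi|^2\,dx\leq \frac{\pi}{2}\int_{\R^3}|x|\,|\sigma\cdot\nabla\phi|^2\,dx,
\]
which is Kato's inequality applied to $\widehat{\sigma\cdot\nabla\phi}$, together with the trivial lower bound $(2-V)^{-1}\geq (2+1/|x|)^{-1}$. This requires no operator-theoretic input and works uniformly up to $\nu=1$.
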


Given the definition~\eqref{eq:def_cV} of the space $\cV$, the proof of Theorem~\ref{thm:quadratic form_domain} reduces to the study of a Sobolev-type space with a weight vanishing at the origin. This type of question has attracted a lot of attention and plays an important role for degenerate elliptic problems. In our proof given in Section~\ref{sec:proof_quadratic form_domain}, we follow ideas of Zhikov~\cite{Zhikov-98,Zhikov-13}.

Loosely speaking, Theorem~\ref{thm:quadratic form_domain} says that there is no ambiguity in the definition of the domain of the quadratic form $q_\lambda$. It is the same to start with the very small space $C^\ii_c(\R^3\setminus\{0\},\C^2)$ and close it for the norm $\norm{\cdot}_\cV$ (as done in~\cite{EstLos-07} for $C^\ii_c(\R^3,\C^2)$), or to directly start with the maximal domain~$\cV$ on which $q_\lambda$ is naturally defined and continuous. 

\begin{remark}
In~\eqref{eq:def_cV}, $\sigma\cdot \nabla\phi$ is understood in the sense of distributions on $\R^3$. Since $\sigma\cdot \nabla\phi\in H^{-1}(\R^3)$, it is the same to use distributional derivatives in $\R^3\setminus\{0\}$. Moreover, since $\sqrt{2-V}\in L^2_{\rm loc}(\R^3)$, we deduce from the Cauchy-Schwarz inequality that $\sigma\cdot \nabla\phi\in L^1_{\rm loc}$ for all the functions $\phi\in\cV$.
\end{remark}

Now that we have discussed the properties of the space $\cV$, we can come back to the problem of characterizing the distinguished self-adjoint extension of $D_V$. The following is a reformulation of the main result of~\cite{EstLos-07}.

\begin{thm}[$\cV$ and the distinguished extension~\cite{EstLos-07}]\label{thm:Esteban-Loss}
 Assume that for some $0\leq \nu<1$
 \begin{equation}
 V(x)\geq -\frac{\nu}{|x|}\qquad\text{and}\qquad \sup(V)<1+\sqrt{1-\nu^2}.
 \label{eq:assumption_V}
 \end{equation}
Then the distinguished self-adjoint extension $D_V$ of Theorem~\ref{thm:distinguished} is also the unique extension of the minimal operator $\dot{D}_V$ defined on $C^\ii_c(\R^3\setminus\{0\},\C^4)$, such that 
$$\cD(D_V)\subset \left\{\Psi=\begin{pmatrix}\phi\\ \chi\end{pmatrix}\in L^2(\R^3,\C^4)\ :\ \phi\in\cV\right\}.$$
More precisely, we have
$$\cD(D_V)=\left\{\Psi=\begin{pmatrix}\phi\\ \chi\end{pmatrix}\in L^2(\R^3,\C^4)\ :\ \phi\in\cV,\ D_0\Psi+V\Psi\in L^2(\R^3,\C^4)\right\},$$
where $D_0\Psi$ and $V\Psi$ are understood in the sense of distributions.
\end{thm}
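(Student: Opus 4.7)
The plan is to define the candidate operator $A$ by the explicit formula for its domain and action, and to show $A = D_V$. Once that is done, uniqueness within the class of self-adjoint extensions $B$ of $\dot{D}_V$ satisfying $\cD(B) \subset \{\Psi : \phi \in \cV\}$ is automatic: for any such $B$ one has $B\Psi = D_0\Psi + V\Psi$ distributionally (hence in $L^2$), so $\cD(B) \subseteq \cD(A) = \cD(D_V)$, and a self-adjoint operator has no proper self-adjoint restriction. Concretely, set
\[
\cD(A) := \Big\{\Psi = (\phi,\chi)^\top \in L^2(\R^3,\C^4) : \phi \in \cV,\; D_0\Psi + V\Psi \in L^2(\R^3,\C^4)\Big\},\qquad A\Psi := D_0\Psi + V\Psi,
\]
with $D_0\Psi$ and $V\Psi$ interpreted distributionally.

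For the inclusion $\cD(D_V) \subseteq \cD(A)$, take $\Psi \in \cD(D_V)$. By Theorem~\ref{thm:distinguished}, $\Psi \in H^{1/2}(\R^3,\C^4)$ and $D_V\Psi = F \in L^2$; the lower-spinor equation rewrites as $\sigma\cdot\nabla\phi = iF_2 + i(1-V)\chi$. Kato's inequality~\eqref{eq:Kato} applied to $\chi \in H^{1/2}$ gives $\int |\chi|^2/|x|\,dx < \infty$, whence $\int |V||\chi|^2\,dx < \infty$ since $|V(x)| \leq \nu/|x| + (\sup V)_+$. The pointwise estimate $(1-V)^2/(2-V) \leq C(1+|V|)$ then yields $(2-V)^{-1/2}\sigma\cdot\nabla\phi \in L^2$, while $\phi \in H^1_{\mathrm{loc}}(\R^3\setminus\{0\})$ follows from the same equation on compact subsets where $V$ is bounded. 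Hence $\phi \in \cV$.

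For the reverse inclusion, I would prove that $A - \lambda$ is bijective from $\cD(A)$ onto $L^2(\R^3,\C^4)$ for a suitable $\lambda > \sup V - 1$. A Schur-complement manipulation (in the spirit of the derivation of $q_\lambda$ in Section~\ref{sec:Esteban-Loss}) shows that $(A-\lambda)\Psi = (f_1,f_2)$ is equivalent to the weak problem
\[
q_\lambda(\phi,\phi') = \pscal{\phi',f_1}_{L^2} + i\,\pscal{\sigma\cdot\nabla\phi',\,(1-V+\lambda)^{-1}f_2}_{L^2} \qquad \forall \phi' \in \cV,
\]
coupled with $\chi := -(1-V+\lambda)^{-1}(f_2 + i\sigma\cdot\nabla\phi)$. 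Using the uniform boundedness of $\sqrt{2-V}/(1-V+\lambda)$, the right-hand side is a bounded antilinear form on $\cV$. Coercivity of $q_\lambda$ on $\cV$ (up to a bounded $L^2$-shift absorbed on the right, using~\eqref{eq:estim_below_Q_lambda}) lets Lax--Milgram produce a unique $\phi \in \cV$; the same weighted estimates verify $\chi \in L^2$ and $(A-\lambda)\Psi = F$ distributionally, establishing surjectivity. Injectivity of $A - \lambda$ is the Lax--Milgram uniqueness. Combined with the first inclusion and the bijectivity of $D_V - \lambda$, this forces $\cD(A) = \cD(D_V)$: given $\Psi \in \cD(A)$, pick $\Psi' \in \cD(D_V)$ with $(D_V-\lambda)\Psi' = (A-\lambda)\Psi$; then $\Psi - \Psi' \in \ker(A-\lambda) = \{0\}$.

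The main obstacle is the Lax--Milgram step: simultaneously ensuring boundedness of the antilinear right-hand side and coercivity of $q_\lambda$ on $\cV$, despite the competing weights $(2-V)^{-1/2}$ (from $\cV$) and $(1-V+\lambda)^{\pm 1/2}$ (from $q_\lambda$), together with the Coulomb singularity of $V$. The key technical inputs are the uniform equivalence of the two weights (valid once $\lambda > \sup V - 1$), the coercivity estimate~\eqref{eq:estim_below_Q_lambda} derived from the Hardy-type inequality~\eqref{eq:Hardy-type2}, and the density result in Theorem~\ref{thm:quadratic form_domain}, which justifies the passage between the weak and distributional formulations by testing with $C_c^\infty(\R^3\setminus\{0\},\C^2)$ functions uniformly across all $\nu \in [0,1)$.
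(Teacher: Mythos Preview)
Your proposal is correct and follows essentially the same route as the paper, which does not give an independent argument but simply invokes~\cite{EstLos-07} together with the density result of Theorem~\ref{thm:quadratic form_domain} to identify the Esteban--Loss space $\mathcal{H}_{+1}$ with the maximal domain $\cV$; your Lax--Milgram construction on $\cV$ is exactly that method spelled out, and your direct verification of $\cD(D_V)\subset\cD(A)$ via the $H^{1/2}$ regularity of Theorem~\ref{thm:distinguished} is the natural way to tie the two characterizations together. One phrasing to tighten: the ``$L^2$-shift absorbed on the right'' is not quite a legal Lax--Milgram move since the shift depends on the unknown $\phi$; instead note that for $\lambda\in(\sup V-1,0)$ (possible whenever $\sup V<1$) the estimate~\eqref{eq:estim_below_Q_lambda} already gives genuine coercivity of $q_\lambda$ on $\cV$, while for $1\le\sup V<1+\sqrt{1-\nu^2}$ one should pass through the closed semibounded form and its associated self-adjoint operator (or take $\lambda\notin\R$), which is what~\cite{EstLos-07} effectively does.
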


This theorem was proved in~\cite{EstLos-07} using a space denoted as $\mathcal{H}_{+1}$, defined as the closure of $C^\ii_c(\R^3,\C^2)$ for the norm $\|\cdot\|_\cV$. From the density proved in our Theorem~\ref{thm:quadratic form_domain} we infer that $\mathcal{H}_{+1}=\cV$, the maximal domain on which $q_\lambda$ is continuous, and therefore Theorem~\ref{thm:Esteban-Loss} is just a reformulation of the results in~\cite{EstLos-07}.

Since only the upper component $\phi\in\cV$ appears in the statement, this characterization seems to provide less information on the domain $\cD(D_V)$. However, the following simple result says that we have $\chi\in\cV$ as well. Since $\phi,\chi\in\cV$ implies that $\phi,\chi\in H^{1/2}(\R^3,\C^2)$ by Theorem~\ref{thm:quadratic form_domain}, this means that Theorem~\ref{thm:Esteban-Loss} actually provides more information on the domain of the distinguished self-adjoint extension than Theorem~\ref{thm:distinguished}.

\begin{cor}\label{cor:chi_in_cV}
Assume that $\phi\in \cV$ and $\chi\in L^2(\R^3,\C^2)$ are such that the distribution
$$D_V\begin{pmatrix}\phi\\\chi\end{pmatrix}\ \text{belongs to}\  L^2(\R^3,\C^4),$$
where $V$ satisfies~\eqref{eq:assumption_V}.
Then $\chi\in \cV$ as well. In particular, the distinguished self-adjoint extension satisfies $\cD(D_V)\subset \cV\times\cV$.
\end{cor}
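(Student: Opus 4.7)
The plan is to read off the distributional identity satisfied by $\sigma\cdot\nabla\chi$ from the top component of $D_V\Psi$, and then verify the two non-trivial conditions in the definition~\eqref{eq:def_cV} of $\cV$ using the assumed membership $\phi\in\cV$ together with Theorem~\ref{thm:quadratic form_domain} and Kato's inequality~\eqref{eq:Kato}.

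First I would decompose $D_V\Psi\in L^2(\R^3,\C^4)$ in $2\times 2$ block form. Since $D_0\binom{\phi}{\chi}=\binom{\phi-i\sigma\cdot\nabla\chi}{-\chi-i\sigma\cdot\nabla\phi}$, the top component reads $(1+V)\phi-i\sigma\cdot\nabla\chi=g\in L^2(\R^3,\C^2)$, which yields the distributional identity $\sigma\cdot\nabla\chi=i(1+V)\phi-ig$ on $\R^3$. Since $\chi\in L^2$ is already assumed, proving $\chi\in\cV$ reduces to (a) $\chi\in H^1_{\rm loc}(\R^3\setminus\{0\},\C^2)$ and (b) $(2-V)^{-1/2}\sigma\cdot\nabla\chi\in L^2(\R^3,\C^2)$. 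Condition (a) is easy: on any compact $K\subset\R^3\setminus\{0\}$ the potential $V$ is bounded, hence $(1+V)\phi+g\in L^2(K)$ and $\sigma\cdot\nabla\chi\in L^2(K)$; since $(\sigma\cdot\nabla)^2=\Delta\cdot I_2$, elliptic regularity forces $\chi\in H^1(K')$ for any $K'\Subset K$.

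The main work is the global weighted bound (b). Applying $(a+b)^2\leq 2a^2+2b^2$ together with the trivial estimate $(2-V)^{-1}\leq(2-\sup V)^{-1}$, the $g$-contribution is controlled by a constant multiple of $\|g\|_{L^2}^2$. To handle $\int(1+V)^2|\phi|^2/(2-V)\,dx$, I would rely on the elementary algebraic identity
$$\frac{(1+V)^2}{2-V}=(2-V)+\frac{9}{2-V}-6,$$
which isolates a bounded remainder (since $(2-V)^{-1}$ is bounded) and a potentially singular contribution $(2-V)$. The singular part is then handled via the hypothesis $V\geq-\nu/|x|$, which gives $2-V\leq 2+\nu/|x|$ pointwise, so
$$\int_{\R^3}(2-V)|\phi|^2\,dx\leq 2\|\phi\|_{L^2}^2+\nu\int_{\R^3}\frac{|\phi(x)|^2}{|x|}\,dx.$$
This last integral is finite because $\phi\in\cV\subset H^{1/2}(\R^3,\C^2)$ by Theorem~\ref{thm:quadratic form_domain}, and Kato's inequality~\eqref{eq:Kato} supplies $\int|x|^{-1}|\phi|^2\,dx\leq (\pi/2)\||\nabla|^{1/2}\phi\|_{L^2}^2<\infty$.

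The second assertion is then immediate: for any $\Psi=\binom{\phi}{\chi}\in\cD(D_V)$, Theorem~\ref{thm:Esteban-Loss} already ensures $\phi\in\cV$, and the first part of the corollary applied to $D_V\Psi\in L^2$ yields $\chi\in\cV$ as well. The only non-mechanical ingredient is the decomposition of $(1+V)^2/(2-V)$ above, which cleanly separates the singular piece (absorbed by Kato) from bounded remainders; everything else is bookkeeping or a direct appeal to the continuous embedding $\cV\subset H^{1/2}$ proved in Theorem~\ref{thm:quadratic form_domain}.
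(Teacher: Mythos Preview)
Your proof is correct and follows the same route as the paper's own argument: both extract $\sigma\cdot\nabla\chi$ from the upper component of $D_V\Psi$ and reduce the weighted $L^2$ bound to showing $(2-V)^{1/2}\phi\in L^2$. The only cosmetic difference is in this last step---the paper invokes~\eqref{eq:estim_kinetic_potential} directly, while you route through the embedding $\cV\subset H^{1/2}$ of Theorem~\ref{thm:quadratic form_domain} and Kato's inequality~\eqref{eq:Kato}; your algebraic identity $(1+V)^2/(2-V)=(2-V)+9/(2-V)-6$ is exactly the paper's subtraction $1+V=3-(2-V)$ in disguise.
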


\begin{proof}
Since by assumption
$(1+V)\phi-i\sigma\cdot\nabla \chi\in L^2(\R^3,\C^2)$
and $\phi\in L^2(\R^3,\C^2)$, we also have 
$$-(2-V)\phi-i\sigma\cdot\nabla \chi\in L^2(\R^3,\C^2).$$
The function $V$ is uniformly bounded outside of the origin, hence $\chi\in H^1_{\rm loc}(\R^3\setminus\{0\},\C^2)$. Also, since $V\in L^2_{\rm loc}(\R^3)$ we have $V\phi\in L^1_{\rm loc}(\R^3,\C^2)$. Therefore $\sigma\cdot\nabla \chi\in L^1_{\rm loc}$ as well. Using that $(2-V)^{-1/2}$ is bounded, we deduce that 
$$-(2-V)^{1/2}\phi-(2-V)^{-1/2}i\sigma\cdot\nabla \chi\in L^2(\R^3,\C^2).$$
From~\eqref{eq:estim_kinetic_potential} we know that $(2-V)^{1/2}\phi\in L^2(\R^3,\C^2)$ hence conclude, as we wanted, that $(2-V)^{-1/2}\sigma\cdot\nabla \chi\in L^2(\R^3,\C^2)$.
\end{proof}

\subsection{The critical case $\nu=1$}\label{sec:critical}
We give in this section some new properties of the distinguished self-adjoint extension in the critical case. Although these will not all be needed for the min-max formulas in Section~\ref{sec:min-max}, we state them because they complement~\cite{EstLos-07,EstLos-08} in an interesting direction. 

The Esteban-Loss method presented in the previous section is general and it was applied to the critical case already in~\cite{EstLos-07}. The main difficulty here is to understand the domain of $q_\lambda$, since the inequality~\eqref{eq:estim_below_Q_lambda} does not give any useful information when $\nu=1$. The terms in $q_\lambda$ will not necessarily be separately finite. Following ideas from~\cite{DolEstLosVeg-04,DolEstDuoVeg-07}, we first describe this domain with more details.

It is useful to start with the exact Coulomb case $V_{\rm C}(x)=-|x|^{-1}$, in which case we use the notation
\begin{equation}
q_\lambda^{\rm C}(\phi)=\int_{\R^3}\left\{\frac{|x|}{1+\lambda|x|+|x|}\left|\sigma\cdot\nabla\phi(x)\right|^2+\left(1-\lambda-\frac{1}{|x|}\right)|\phi(x)|^2\right\}\,dx.
\label{eq:def_q_lambda_Coulomb}
\end{equation}
Our aim is to understand what is the \emph{maximal} domain on which $q_\lambda^{\rm C}$ is well-defined and continuous. To this end, we start with $\lambda=0$ and follow~\cite{DolEstLosVeg-04}. We involve the operator $k=1+\sigma\cdot L$, where 
$$L=-ix\wedge \nabla=-i\begin{pmatrix}
x_2\partial_3-x_3\partial_2\\
x_3\partial_1-x_1\partial_3\\
x_1\partial_2-x_2\partial_1
\end{pmatrix}$$
is the angular momentum. We recall that $k=1+\sigma\cdot L$ has the eigenvalues $\pm1,\pm2,...$, see~\cite{Thaller}. The negative and positive spaces are unitarily equivalent and mapped to one another using the unitary $\sigma\cdot \omega_x$ where $\omega_x=x/|x|$ is the unit vector pointing in the same direction as $x$:
\begin{equation}
\sigma\cdot\frac{x}{|x|}\big(1+\sigma\cdot L\big)\sigma\cdot\frac{x}{|x|}=-\big(1+\sigma\cdot L\big).
\label{eq:negative_positive_unitary}
\end{equation}
In addition, we will use that the kernel of $\sigma\cdot L$ is composed of radial functions (it coincides with the kernel of $L$), hence the kernel of $\sigma\cdot L+2$ is given by $\sigma\cdot \omega_x$ times radial functions. These are the two spaces for the upper spinor $\phi$ which correspond to $\kappa=\pm1$ for the full Dirac operator. The sectors $\kappa=\pm1$  determine the possible extensions, as we have recalled in Subsection~\ref{sec:radial}.
The following is inspired by~\cite{DolEstLosVeg-04,DolEstDuoVeg-07} and proved in Appendix~\ref{sec:proof_thm_q_C} below.

\begin{thm}[Writing $q_\lambda^{\rm C}$ as a sum of squares]\label{thm:carres_q_C}
For every $\phi\in L^2(\R^3,\C^2)$ we write
$$\phi=\phi_+(x)+\phi_-(x)+\phi_0(|x|)+\sigma\cdot\frac{x}{|x|}\,\phi_1(|x|)$$
where $\phi_+=\1_{[1,\ii)}(\sigma\cdot L)\phi$, $\phi_-=\1_{(-\ii,-3]}(\sigma\cdot L)\phi$, $\phi_0=\1_{\{0\}}(\sigma\cdot L)\phi$ and $\phi_1=\sigma\cdot (x/|x|)\1_{\{-2\}}(\sigma\cdot L)\phi$. Then
\begin{align}
q_0^{\rm C}(\phi)=&\int_{\R^3}\frac{|x|}{1+|x|}\left|\sigma\cdot\nabla\phi_+(x)+\frac{\sigma\cdot x}{|x|^2}(1+|x|)\,\phi_+(x)\right|^2\,dx\nn\\
&+\int_{\R^3}\frac{|x|}{1+|x|}\left|\sigma\cdot\nabla\phi_-(x)-\frac{\sigma\cdot x}{|x|^2}(1+|x|)\,\phi_-(x)\right|^2\,dx\nn\\
&+2\pscal{\phi_+,\frac{\sigma\cdot L}{|x|}\phi_+}+2\pscal{\phi_-,\frac{-2-\sigma\cdot L}{|x|}\phi_-}\nn\\
&+4\pi\int_0^\ii\frac{r}{1+r}\left|r\phi_0'(r)+\phi_0(r)+r\phi_0(r)\right|^2\,dr\nn\\
&+4\pi\int_0^\ii\frac{r}{1+r}\left|r\phi_1'(r)+\phi_1(r)-r\phi_1(r)\right|^2\,dr\label{eq:carres_q_0}
\end{align}
for every $\phi\in H^1(\R^3,\C^2)$. On $L^2(\R^3,\C^2)$, the quadratic form $q_0^{\rm C}$ is equivalent to
\begin{align}\label{eq:norm_q_0_simplifiee}
&\|\phi\|_{L^2}^2+q_0^{\rm C}(\phi)\nn\\
&\qquad\sim \|\phi\|_{L^2}^2+\int_{\R^3}\frac{|x|}{1+|x|}\big|\sigma\cdot\nabla\phi_+(x)\big|^2\,dx+\int_{\R^3}\frac{|x|}{1+|x|}\big|\sigma\cdot\nabla\phi_-(x)\big|^2\,dx\nn\\
&\qquad\qquad+\int_{0}^\ii\frac{r}{1+r}\big|r\phi_0'(r)+\phi_0(r)\big|^2dr+\int_{0}^\ii\frac{r}{1+r}\big|r\phi_1'(r)+\phi_1(r)\big|^2\,dr\nn\\
&\qquad\sim \|\phi\|_{L^2}^2+\int_{\R^3}\frac{1}{|x|(1+|x|)}\big|\sigma\cdot\nabla|x|\phi(x)\big|^2\,dx.
\end{align}
Finally, for all $-1<\lambda<1$, we have 
\begin{equation}
q_\lambda^{\rm C}(\phi)= q_0^{\rm C}(\phi)-\lambda\int_{\R^3}\frac{|x|^2|\sigma\cdot\nabla\phi(x)|^2}{(1+|x|)(1+(1+\lambda)|x|)}\,dx-\lambda\int_{\R^3}|\phi(x)|^2\,dx
\label{eq:relation_q_lambda_0_Coulomb}
\end{equation}
which, in $L^2$,  is equivalent to the norm associated with $\|\phi\|_{L^2}^2+q_0^{\rm C}(\phi)$.
\end{thm}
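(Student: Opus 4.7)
The plan is to prove the identity \eqref{eq:carres_q_0} by working with each component of the decomposition separately, using the fact that radial weights (such as $|x|/(1+|x|)$ and $1/|x|$) commute with the angular operator $\sigma\cdot L$, so the orthogonal decomposition $\phi=\phi_++\phi_-+\phi_0+\sigma\cdot\omega\phi_1$ remains orthogonal for the quadratic form $q_0^{\rm C}$, which therefore splits into four independent contributions. The central algebraic input throughout is the pointwise Dirac identity
\[
(\sigma\cdot x)(\sigma\cdot\nabla)=x\cdot\nabla-\sigma\cdot L
\]
together with its adjoint version $(\sigma\cdot\nabla)(\sigma\cdot x/|x|)\psi=\tfrac{1}{|x|}(2+x\cdot\nabla+\sigma\cdot L)\psi$, both of which follow from the relation $\sigma_j\sigma_k=\delta_{jk}+i\epsilon_{jkl}\sigma_l$ and the definition $L=-ix\wedge\nabla$. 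First I will establish \eqref{eq:carres_q_0} for $\phi\in C^\infty_c(\R^3\setminus\{0\},\C^2)$ and extend by density in $H^1(\R^3,\C^2)$ using the equivalence of norms \eqref{eq:norm_q_0_simplifiee}.

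\textbf{Handling the four pieces.} For $\phi_+$ (with $\sigma\cdot L\geq 1$), I expand the first square on the right of \eqref{eq:carres_q_0} into three terms: $\int\frac{|x|}{1+|x|}|\sigma\cdot\nabla\phi_+|^2$, the diagonal $\int\frac{1+|x|}{|x|}|\phi_+|^2$, and the cross term $2\Re\int\frac{1}{|x|}\langle\sigma\cdot\nabla\phi_+,\sigma\cdot x\,\phi_+\rangle$. Applying the algebraic identity and integrating by parts the radial derivative $x\cdot\nabla/|x|=\partial_r$ yields
\[
2\Re\int\tfrac{1}{|x|}\phi_+^*(r\partial_r-\sigma\cdot L)\phi_+\,dx=-2\int\tfrac{|\phi_+|^2}{|x|}\,dx-2\int\phi_+^*\tfrac{\sigma\cdot L}{|x|}\phi_+\,dx,
\]
and adding the Hardy-type correction $2\langle\phi_+,\sigma\cdot L/|x|\,\phi_+\rangle$ produces exactly $q_0^{\rm C}(\phi_+)=\int\frac{|x|}{1+|x|}|\sigma\cdot\nabla\phi_+|^2+\int(1-1/|x|)|\phi_+|^2$. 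The treatment of $\phi_-$ is identical up to the sign of the cross term and the use of the eigenvalue bound $-2-\sigma\cdot L\geq 1$. For the radial parts, passing to polar coordinates, I use $\sigma\cdot\nabla\phi_0(r)=(\sigma\cdot\omega)\phi_0'(r)$ and the specialization of the adjoint identity to $\sigma\cdot L=-2$, namely $\sigma\cdot\nabla(\sigma\cdot\omega\,\phi_1(r))=\phi_1'(r)+2\phi_1(r)/r$. Expanding the squares and integrating by parts $\partial_r|\phi_{0,1}|^2$ with the weights $r^k/(1+r)$ matches the two one-dimensional integrals in \eqref{eq:carres_q_0}.

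\textbf{Norm equivalence and the shift in $\lambda$.} The first equivalence in \eqref{eq:norm_q_0_simplifiee} will follow by expanding the four squares without the $\sigma\cdot x$ correction terms, and comparing: the difference produces precisely the non-negative expectations $\langle\phi_+,\sigma\cdot L/|x|\phi_+\rangle$ and $\langle\phi_-,(-2-\sigma\cdot L)/|x|\phi_-\rangle$ (both non-negative by the spectrum of $\sigma\cdot L$), plus $L^2$-terms and purely radial parts all controlled by Kato's inequality \eqref{eq:Kato} applied sector-wise. The second equivalence comes from writing $\sigma\cdot\nabla(|x|\phi)=|x|\sigma\cdot\nabla\phi+(\sigma\cdot\omega)\phi$, squaring, dividing by $|x|(1+|x|)$ and summing back the four orthogonal projections. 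The relation \eqref{eq:relation_q_lambda_0_Coulomb} is a pure algebraic rewriting based on the partial-fraction identity
\[
\frac{|x|}{1+(1+\lambda)|x|}=\frac{|x|}{1+|x|}-\frac{\lambda|x|^2}{(1+|x|)(1+(1+\lambda)|x|)},
\]
and the final norm equivalence holds because for $|\lambda|<1$ the correction term is bounded by $C(|\lambda|)$ times $\int\frac{|x|}{1+|x|}|\sigma\cdot\nabla\phi|^2$, which is itself controlled by $q_0^{\rm C}(\phi)+\|\phi\|_{L^2}^2$ via the second equivalence. The main obstacle will be the careful bookkeeping of boundary terms in the integrations by parts and the justification that the weighted decomposition is truly orthogonal, which I will handle by first working with smooth compactly supported functions in $\R^3\setminus\{0\}$ and closing by continuity in the norm given by the right-hand side of \eqref{eq:norm_q_0_simplifiee}.
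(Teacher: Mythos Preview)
Your overall approach---splitting $\phi$ according to the spectrum of $\sigma\cdot L$, expanding the squares, and integrating by parts using the identity $(\sigma\cdot x)(\sigma\cdot\nabla)=x\cdot\nabla-\sigma\cdot L$---is correct and is essentially the same route as the paper, which packages the same computation via the multiplier $g(r)=re^r$ (so that the square appears as $\int\frac{f}{g^2}|\sigma\cdot\nabla(gu)|^2$ with $f=r/(1+r)$, $fg'=g$). One point you state a bit loosely is the orthogonality of $q_0^{\rm C}$ across the four sectors: it is not enough that radial weights commute with $\sigma\cdot L$; you need $[\sigma\cdot L,\;\sigma\cdot\nabla\,f(|x|)\,\sigma\cdot\nabla]=0$, which follows from the anti-commutation $\sigma\cdot L\,\sigma\cdot\nabla+\sigma\cdot\nabla\,\sigma\cdot L=-2\sigma\cdot\nabla$.

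There is, however, a genuine gap in your last step. You claim that the $\lambda$-correction in \eqref{eq:relation_q_lambda_0_Coulomb} is bounded by $C(|\lambda|)\int\frac{|x|}{1+|x|}|\sigma\cdot\nabla\phi|^2$, ``which is itself controlled by $q_0^{\rm C}(\phi)+\|\phi\|_{L^2}^2$ via the second equivalence.'' That control is \emph{false} in the critical case: for the Coulomb ground state $\phi_0(x)=e^{-|x|}/|x|$ one has $q_0^{\rm C}(\phi_0)<\infty$ while $\int\frac{|x|}{1+|x|}|\sigma\cdot\nabla\phi_0|^2=\infty$, exactly because the equivalent norm in \eqref{eq:norm_q_0_simplifiee} controls $\sigma\cdot\nabla(|x|\phi)$, not $|x|\,\sigma\cdot\nabla\phi$. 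The correct argument exploits the \emph{extra} power of $|x|$ in the correction: since $|x|\,\sigma\cdot\nabla\phi=\sigma\cdot\nabla(|x|\phi)-(\sigma\cdot\omega)\phi$, one has
\[
\lambda\int_{\R^3}\frac{|x|^2|\sigma\cdot\nabla\phi|^2}{(1+|x|)(1+(1+\lambda)|x|)}
\leq \frac{\lambda(1+\eta)}{1+\lambda}\int_{\R^3}\frac{|\sigma\cdot\nabla(|x|\phi)|^2}{|x|(1+|x|)}
+\lambda(1+\eta^{-1})\|\phi\|_{L^2}^2,
\]
and for $0<\lambda<1$ the coefficient $\lambda(1+\eta)/(1+\lambda)$ can be made $<1$ by choosing $\eta$ small, which is what yields the equivalence (the case $-1<\lambda<0$ is symmetric). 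Replace your final paragraph with this estimate and the proof goes through.
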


Note that all the terms in the formula~\eqref{eq:carres_q_0} for $q_0^{\rm C}$ are non-negative, which enables us to identify its maximal domain. We see that the two functions $\phi_+$ and $\phi_-$ have the exact same regularity as before, namely they must belong to the space $\cV_{\rm C}$, defined as in~\eqref{eq:def_cV} with $V(x)=V_{\rm C}(x)=-|x|^{-1}$:
$$\int_{\R^3}\frac{|x|}{1+|x|}\big|\sigma\cdot\nabla\phi_\pm(x)\big|^2\,dx<\ii.$$
In particular, from Theorem~\ref{thm:quadratic form_domain} and the Hardy-type inequality~\eqref{eq:Hardy-type2}, $\phi_+$ and $\phi_-$ have a finite Coulomb energy and a finite $H^{1/2}$ norm. 
Only the functions $\phi_0$ and $\sigma\cdot \omega_x\phi_1$ can be more singular at the origin. Those only satisfy the property that 
$$\int_{\R^3}\frac{1}{|x|(1+|x|)}\big|\sigma\cdot\nabla|x|\phi_{0}(x)\big|^2\,dx<\ii$$
which can be written in radial coordinates as 
\begin{equation}
\int_0^\ii\frac{r}{1+r}\left|r\phi_{0/1}'(r)+\phi_{0/1}(r)\right|^2\,dr<\ii.
\label{eq:condition_radial_critique}
\end{equation}
This is weaker than when $|x|$ is pulled outside of the gradient as before. For instance, the ground state of the Dirac-Coulomb operator at $\nu=1$ is given by~\cite[Sec.~7.4.2]{Thaller}
$$\phi_0(|x|)=\frac{e^{-|x|}}{|x|}v,\qquad v\in\C^2$$
and it satisfies $q_0^{\rm C}(\phi_0)<\ii$ but
$$\int_{\R^3}\frac{|x|}{1+|x|}\left|\sigma\cdot\nabla\frac{e^{-|x|}}{|x|}\right|^2\,dx=4\pi\int_0^\ii \frac{r^3}{1+r}\left|\frac{e^{-r}+re^{-r}}{r^2}\right|^2\,dr=+\ii.$$
The condition~\eqref{eq:condition_radial_critique} is enough to distinguish a self-adjoint extension, as we will see. The main message is that $\phi_{0}$ and $\phi_1$ are allowed to behave like $1/r$ at $r=0$, but not like $\log(r)/r$. This corresponds to taking $\alpha=\pi$ in Subsection~\ref{sec:radial}.

Now we are able to define the spaces which will replace $\cV$ in the critical case. In the exact Coulomb case $V(x)=-|x|^{-1}$ we introduce
\begin{equation}
\cW_{\rm C}=\bigg\{\phi\in L^2(\R^3,\C^2)\ :\ \frac{\sigma\cdot\nabla |x|\phi}{|x|^{1/2}(1+|x|)^{1/2}}\in L^2(\R^3,\C^2)\bigg\}.
\end{equation}
Then we assume that $V(x)\geq -|x|^{-1}$ and that $\sup(V)<1$. The quadratic form associated with $V$ can be written in terms of $q_\lambda^{\rm C}$ as follows:
\begin{multline*}
q_{\lambda}(\phi)=\int_{\R^3}\left(\frac{1}{1+\lambda-V(x)}-\frac{|x|}{1+(1+\lambda)|x|}\right)\left|\sigma\cdot\nabla\phi(x)\right|^2\,dx\\
+\int_{\R^3}\left(V(x)+\frac1{|x|}\right)|\phi(x)|^2\,dx+q_\lambda^{\rm C}(\phi),
\end{multline*}
for $-1+\sup(V)<\lambda<1$. The quadratic forms 
$$\int_{\R^3}\left(\frac{1}{1+\lambda-V(x)}-\frac{|x|}{1+(1+\lambda)|x|}\right)\left|\sigma\cdot\nabla\phi(x)\right|^2\,dx$$
are all equivalent when $\lambda$ is varied in the interval $(-1+\sup(V),1)$ and since the same holds for $q_\lambda^{\rm C}$ by Theorem~\ref{thm:carres_q_C}, we can simply use $\lambda=0$ and define the space $\cW$ associated with $V$ by
\begin{multline}
\cW=\bigg\{\phi\in \cW_{\rm C} \ :\ \left(\frac{1}{1-V(x)}-\frac{|x|}{1+|x|}\right)^{1/2}\sigma\cdot \nabla\phi\in L^2(\R^3,\C^2),\\
\left(V(x)+\frac{1}{|x|}\right)^{1/2}\phi\in L^2(\R^3,\C^2)\bigg\}.
\end{multline}

The following is the equivalent of Theorem~\ref{thm:quadratic form_domain}.

\begin{thm}[Properties of $\cW_{\rm C}$ and $\cW$]\label{thm:quadratic form_domain_critical}
We assume that 
 \begin{equation}
 V(x)\geq -\frac{1}{|x|}\qquad\text{and}\qquad \sup(V)<1.
 \label{eq:assumption_V_critical}
 \end{equation}
Then the space $C^\ii_c(\R^3\setminus\{0\},\C^2)$ is dense in $\cW_{\rm C}$ and in $\cW$ for their respective norms. Also, we have the continuous embeddings 
$$\cW\subset \cW_{\rm C}\subset H^s(\R^3,\C^2)$$
for every $0\leq s<1/2$.
\end{thm}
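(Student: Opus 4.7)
The plan is to establish the three assertions separately, since each involves a different analytic issue. The inclusion $\cW\subset\cW_{\rm C}$ is a pointwise sign check: the hypothesis $V(x)\geq -1/|x|$ gives
\begin{equation*}
\frac{1}{1-V(x)}-\frac{|x|}{1+|x|}=\frac{1+|x|V(x)}{(1-V(x))(1+|x|)}\geq 0,
\end{equation*}
so the first defining integral of the $\cW$-norm dominates $\int\frac{|x|}{1+|x|}|\sigma\cdot\nabla\phi|^2\,dx$; together with the $(V+1/|x|)|\phi|^2$ term and the equivalence in~\eqref{eq:norm_q_0_simplifiee}, this controls the $\cW_{\rm C}$-norm.

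For density I would adapt the Zhikov-type strategy behind Theorem~\ref{thm:quadratic form_domain}, in three steps: an outer cutoff $\chi(x/R)$, an inner cutoff near $0$, and a final mollification. The outer cutoff and mollification are routine, and only the inner cutoff is delicate because the weight is singular at the origin. A sharp cutoff $\chi(x/\eps)$ produces a commutator of weighted norm $\sim\eps^{-1/2}\|\phi\|_{L^2(B_\eps)}$ and does not suffice; following Zhikov, I use the logarithmic cutoff $\eta_\eps(x)=\eta(\log|x|/\log\eps)$, supported in $\eps\leq|x|\leq\sqrt\eps$, with $|\nabla\eta_\eps|\leq C/(|x|\,|\log\eps|)$. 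The resulting error is bounded by $C|\log\eps|^{-1}\bigl(\int_{\eps\leq|x|\leq\sqrt\eps}|\phi|^2/|x|\,dx\bigr)^{1/2}$, and the main obstacle is to push the inner integral to $o(|\log\eps|^2)$. Using the angular decomposition of Theorem~\ref{thm:carres_q_C}, the components $\phi_\pm$ lie in $\cV_{\rm C}\subset H^{1/2}$ by Theorem~\ref{thm:quadratic form_domain}, so their contribution vanishes by Kato's inequality~\eqref{eq:Kato} and dominated convergence; the radial parts $\phi_0,\phi_1$ translate, via $t=-\log r$ and $w(t)=(r\phi_0)(e^{-t})$, to the statement that $w\in L^2(e^{-t}dt)$ with $w'\in L^2(dt)$, and the Cauchy--Schwarz bound $w(t)^2\leq 2w(T/2)^2+2(t-T/2)\|w'\|_{L^2(T/2,\infty)}^2$ combined with a diagonal extraction of cutoff scales delivers the required convergence along a subsequence. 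Density in $\cW$ is analogous, the extra condition $(V+1/|x|)^{1/2}\phi\in L^2$ being preserved under cutoffs $\leq 1$ and harmless under mollification since $V$ is bounded away from the origin.

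For the embedding $\cW_{\rm C}\subset H^s(\R^3,\C^2)$ with $s<1/2$, I again use the decomposition $\phi=\phi_++\phi_-+\phi_0(|x|)+\sigma\cdot\omega_x\phi_1(|x|)$. The formula~\eqref{eq:carres_q_0} shows that $\phi_\pm$ satisfy the full $\cV_{\rm C}$-norm estimate, hence belong to $H^{1/2}\subset H^s$ by Theorem~\ref{thm:quadratic form_domain}. For the radial parts, with $u(r)=r\phi_0(r)$, the radial Fourier transform reads
\begin{equation*}
\hat\phi_0(\xi)=\frac{4\pi}{|\xi|}\int_0^\infty u(r)\sin(r|\xi|)\,dr=\frac{4\pi\,S(|\xi|)}{|\xi|},
\end{equation*}
so $\|\phi_0\|_{H^s(\R^3)}^2$ is equivalent to $\int_0^\infty(1+\rho^2)^s|S(\rho)|^2\,d\rho$, which in turn identifies with the $H^s(\R)$-norm of the odd extension $v$ of $u$. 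A direct weighted Sobolev argument (splitting $u$ into a piece supported near $0$ and a piece in $H^1$ away from $0$, and observing that $v$ may have a jump discontinuity at the origin, contributing to $H^s$ exactly for $s<1/2$) then yields $v\in H^s(\R)$ for all $s<1/2$. The threshold is sharp, witnessed by the ground state $\phi_0(|x|)=e^{-|x|}/|x|$ of the Dirac--Coulomb operator at $\nu=1$, which just fails to belong to $H^{1/2}$.
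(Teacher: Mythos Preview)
Your density argument for $\cW_{\rm C}$ via the logarithmic cutoff is correct and in fact cleaner than the paper's. The paper inserts only the crude pointwise bound $|\phi_0(r)|\lesssim r^{-1}\sqrt{\log(1/r)}$ of Lemma~\ref{lem:estim_spherical_average}, which with a logarithmic cutoff leaves the commutator $\int r^3(\theta')^2|\phi_0|^2\,dr$ of order~$1$; it therefore builds a tailor-made cutoff from the non-integrability of $(r\log(1/r))^{-1}$. Your use of the full information $w'\in L^2(dt)$ yields $|w(T)|^2/T\to 0$ (indeed $\limsup_T |w(T)|^2/T\leq 2\|w'\|_{L^2(T_0,\infty)}^2$ for every $T_0$, so no subsequence is needed), and the standard logarithmic cutoff already does the job.

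The real gap is your treatment of density in $\cW$. You address only the zeroth-order condition $(V+1/|x|)^{1/2}\phi\in L^2$, which is indeed stable under cutoffs bounded by~$1$. You ignore the first-order condition
\[
\Big(\tfrac{1}{1-V(x)}-\tfrac{|x|}{1+|x|}\Big)^{1/2}\sigma\cdot\nabla\phi\in L^2,
\]
whose weight $g(x)$ is in general only bounded near the origin, not $O(|x|)$. The commutator of your logarithmic cutoff against this term is, in the radial sector, proportional to $|\log\eps|^{-2}\int_\eps^{\sqrt\eps} g(r)\,|v(r)|^2\,r^{-2}\,dr$; in your variable $t=-\log r$ this picks up an $e^{t}$ factor that neither $w'\in L^2(dt)$ nor the pointwise bound can absorb. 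The paper does not try to cut off the given $\phi_0$ directly: it first replaces $\phi_0$ on $(0,\delta)$ by the minimiser $u_\delta$ of the associated degenerate variational problem~\eqref{eq:min_u_delta}, reads off an ODE, and uses it to derive the $g$-adapted pointwise estimate~\eqref{eq:final_esimate_u}; only then is a cutoff constructed, from the non-integrability of the reciprocal of the resulting weight. Without this elliptic regularisation the argument does not close, and your one-line dismissal does not substitute for it.

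For the embedding $\cW_{\rm C}\subset H^s$, your Fourier reduction to the $H^s(\R)$ regularity of the odd extension of $u=r\phi_0$ is a genuinely different route from the paper, which shows instead that $\phi_0\in W^{1,\alpha}(\R^3)$ for $\alpha<3/2$ via H\"older and the pointwise bound. Your sketch is plausible but incomplete, and the remark about a ``jump discontinuity'' misleads: $u$ need not be bounded at~$0$ (only $|u(r)|\lesssim\sqrt{\log(1/r)}$), so the one-dimensional step must handle a logarithmic blow-up, not just a jump.
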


The proof of Theorem~\ref{thm:quadratic form_domain_critical} is provided below in Section~\ref{sec:proof_thm_quadratic form_domain_critical} and it is much more involved than that of Theorem~\ref{thm:quadratic form_domain}. This is due to the criticality of the problem, which prevents from using rough regularization techniques.

\begin{remark}
If $V(x)=-|x|^{-1} +O\left(|x|^{-\alpha}\right)$ with $\alpha<1$, as $x\to0$, then we simply have $\cW=\cW_{\rm C}$. Indeed the two additional terms are controlled by the $\cW_{\rm C}$-norm. We have
\begin{align*}
&\int_{\R^3}\left(\frac{1}{1-V(x)}-\frac{|x|}{1+|x|}\right)\left|\sigma\cdot\nabla\phi(x)\right|^2\,dx\\
&\qquad\qquad\lesssim \int_{\R^3}\frac{|x|^2}{|x|^\alpha(1+|x|)^2}\left|\sigma\cdot\nabla\phi(x)\right|^2\,dx\\
&\qquad\qquad\lesssim \int_{\R^3}\frac{1}{|x|^\alpha(1+|x|)^2}\left|\sigma\cdot\nabla|x|\phi(x)\right|^2\,dx+\int_{\R^3}\frac{|\phi(x)|^2}{|x|^\alpha(1+|x|)^2}\,dx
\end{align*}
and, similarly,
$$\int_{\R^3}\left(V(x)+\frac1{|x|}\right)|\phi(x)|^2\,dx\lesssim \int_{\R^3}\frac{|\phi(x)|^2}{|x|^\alpha}\,dx$$
which are all finite for $\phi\in \cW_{\rm C}$. Hence in this case there is no difference between $\cW$ and $\cW_{\rm C}$.
\end{remark}

Contrary to the subcritical case where one can use the space $H^{1/2}$, we cannot distinguish the extension from the sole property that it is included in $H^s$ for $s<1/2$. This would not make the difference between $1/r$ and $\log(r)/r$. We need the more precise norm associated with $q_0$. The main result on the distinguished self-adjoint extension is the following.

\begin{thm}[$\cW$ and the distinguished extension in the critical case]\label{thm:distinguished-critical}
 We assume that 
 \begin{equation}
 V(x)\geq -\frac{1}{|x|}\qquad\text{and}\qquad \sup(V)<1.
 \label{eq:assumption_V_critical_bis}
 \end{equation}

 \smallskip
 
\noindent {\rm (a)} \cite{EstLos-07} The minimal operator $\dot{D}_V=(D_0+V)\upharpoonright C^\ii_c(\R^3\setminus\{0\},\C^4)$ has a unique self-adjoint extension $D_V$ such that 
$$\cD(D_V)\subset \left\{\Psi=\begin{pmatrix}\phi\\ \chi\end{pmatrix}\in L^2(\R^3,\C^4)\ :\ \phi\in\cW\right\}$$
and this extension has the domain
$$\cD(D_V)=\left\{\Psi=\begin{pmatrix}\phi\\ \chi\end{pmatrix}\in L^2(\R^3,\C^4)\ :\ \phi\in\cW,\ D_0\Psi+V\Psi\in L^2(\R^3,\C^4)\right\}$$
where $D_0\Psi$ and $V\Psi$ are understood in the sense of distributions. 

 \smallskip

 \noindent {\rm (b)} Let $V_\eps(x):=\max(V(x),-1/\eps)$ or $V_\eps=(1-\eps)V$. Then, the self-adjoint operator $D_{V_\eps}$ converges in the norm resolvent sense to the operator $D_V$ defined in the previous item. 
\end{thm}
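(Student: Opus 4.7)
Part (a) is due to Esteban-Loss~\cite{EstLos-07}, who formulate the characterization with the space of upper spinors $\phi$ defined as the closure of $C^\ii_c(\R^3,\C^2)$ in the form norm of $q_0$. Our Theorem~\ref{thm:quadratic form_domain_critical} identifies this closure with the maximal form domain $\cW$, so the stated characterization follows immediately from their result with no further work.

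For part (b), I first verify that each $D_{V_\eps}$ is well-defined as a self-adjoint operator. If $V_\eps = \max(V,-1/\eps)$, then $V_\eps$ is bounded, so $D_{V_\eps}$ is self-adjoint on $H^1(\R^3,\C^4)$ by Kato-Rellich. If $V_\eps = (1-\eps) V$, then $V_\eps(x)\geq -(1-\eps)/|x|$ with $1-\eps\in[0,1)$, so Theorem~\ref{thm:distinguished} applied with parameter $\nu_\eps = 1-\eps$ produces the distinguished extension $D_{V_\eps}$ with $\cD(D_{V_\eps})\subset H^{1/2}(\R^3,\C^4)$. In both cases, $(D_{V_\eps} - i\tau)^{-1}$ is bounded by $1/\tau$ uniformly in $\eps$.

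The heart of the proof is norm convergence of the resolvents at $z=i\tau$ for $\tau>0$ large (analyticity then extends it to the whole common resolvent set). My strategy is to work at the level of the quadratic form rather than via a naive second-resolvent-identity argument: the latter would require $V\Psi\in L^2$ for $\Psi\in\cD(D_V)$, which in the critical case can fail because functions in $\cD(D_V)$ may carry a full $1/|x|$ singularity (as illustrated by the Dirac-Coulomb ground state in Subsection~\ref{sec:critical}). Instead, I plan to combine the Esteban-Loss construction, which links $(D_V-\mu)^{-1}$ to the inverse of $q_\mu^V$ on $\cW$ for $\mu$ below the first electronic eigenvalue, with the monotone (resp.~linear) convergence of the forms $q_\mu^{V_\eps}$ to $q_\mu^V$. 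Convergence of the form inverses in operator norm will then transfer to norm convergence of the resolvents.

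The main obstacle is criticality: the coercivity estimate~\eqref{eq:estim_below_Q_lambda} degenerates as $\nu\to1$, so uniform control of the form inverses cannot be obtained by a simple perturbation argument. The resolution will be to exploit the sum-of-squares decomposition of Theorem~\ref{thm:carres_q_C}: decomposing $\phi\in\cW$ into the angular sectors $\phi_+ + \phi_- + \phi_0 + (\sigma\cdot\omega_x)\phi_1$, the regular sectors $\phi_\pm$ belong to $\cV_{\rm C}$ and carry finite Coulomb energy and $H^{1/2}$ regularity, so a standard perturbative estimate using the convergence $V_\eps\to V$ locally controls their contribution uniformly in $\eps$. The singular radial sectors $\phi_0,\phi_1$ only satisfy the weighted estimate~\eqref{eq:condition_radial_critique}, but this one-dimensional $H^1$-type control on $r\phi_{0/1}'+\phi_{0/1}$ is enough, via a weighted Sobolev embedding in the radial variable, to bound their contribution to $q_\mu^{V_\eps}-q_\mu^V$ by a quantity that vanishes with $\eps$. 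Assembling the sectorial bounds yields norm convergence of the form inverses, and hence of the resolvents, completing the proof.
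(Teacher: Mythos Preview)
Your treatment of part~(a) is correct and matches the paper.

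For part~(b) there is a genuine gap. Your plan is to compare the form inverses of $q_\mu^{V_\eps}$ and $q_\mu^{V}$ by bounding $q_\mu^{V_\eps}-q_\mu^{V}$ sector by sector on $\cW$. But this difference is not a bounded form on $\cW$: the two quadratic forms do not share a common form domain. When $V_\eps=\max(V,-1/\eps)$, the weight $(1-V_\eps+\mu)^{-1}$ in the kinetic term is bounded below by a positive constant, so the form domain of $q_\mu^{V_\eps}$ is $H^1(\R^3,\C^2)$, which is strictly smaller than $\cW$. Concretely, for the Dirac--Coulomb ground state $\phi_0=|x|^{-1}e^{-|x|}$ discussed in Section~\ref{sec:critical} one has $\phi_0\in\cW$ but $q_\mu^{V_\eps}(\phi_0)=+\ii$ for every $\eps>0$, so on the singular sectors $\phi_0,\phi_1$ the expression $q_\mu^{V_\eps}(\phi)-q_\mu^{V}(\phi)$ is $+\ii$, not small. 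The same obstruction arises for $V_\eps=(1-\eps)V$, whose form domain is the subcritical space $\cV\subsetneq\cW$. Hence the ``second resolvent identity at the level of forms'' you are implicitly invoking, $A_\eps^{-1}-A^{-1}=A_\eps^{-1}(A-A_\eps)A^{-1}$, cannot be set up as a bounded map $\cW\to\cW^*$. Monotone form--convergence theorems could at best yield strong resolvent convergence, not the norm convergence required here.

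The paper's argument is different in spirit. It uses the quadratic form only one--sidedly, as an a~priori estimate: for $\Psi_\eps=(D_{V_\eps}-E)^{-1}F$ with a fixed real $E$ in the spectral gap $(\sup V-1,0)$, the identity $q_{E,V_\eps}(\phi_\eps)=\langle\phi_\eps,f\rangle-\ldots$ is evaluated on $\phi_\eps\in H^1$ (where it is legitimate) and combined with $q_{E,V_\eps}\geq q_{0}^{\rm C}\geq 0$ to produce a bound $\|\phi_\eps\|_{\cW}+\|\chi_\eps\|_{\cW}\leq C\|F\|_{L^2}$ uniform in $\eps$. This gives weak resolvent convergence via the local compactness $\cW\hookrightarrow L^p_{\rm loc}$ from Theorem~\ref{thm:quadratic form_domain_critical}. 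Norm convergence is then obtained by a compactness/tightness argument: the resolvent difference applied to a maximizing sequence satisfies a system whose right--hand side $(V-V_\eps)\Psi_\eps$ is supported near the origin, and a smooth cut--off localized away from the origin reduces the problem to the invertibility of $D_V-E$, yielding the required strong $L^2$ convergence at infinity.
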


Although the first part is just a reformulation of the results in~\cite{EstLos-07} (relying on the closure $\mathcal{H}_{+1}$ of $C^\ii_c(\R^3,\C^2)$ for the norm induced by $q_0$, which is equal to $\cW$ by Theorem~\ref{thm:quadratic form_domain_critical}), the convergence of the resolvents is completely new. In the same spirit as what was achieved for $\nu<1$ in~\cite{Wust-73,Wust-75,Wust-77,KlaWus-79,Kato-83}, it means that the Esteban-Loss extension is the only physically relevant one in the critical case. The proof of the resolvent convergence is given in Section~\ref{sec:proof_thm_distinguished_critical} below.

\section{Domains for min-max formulas of eigenvalues}\label{sec:min-max}

In this section we finally discuss min-max principles for Dirac eigenvalues. In \cite{DolEstSer-00} an abstract variational characterization of
the eigenvalues of operators with gaps was shown. Let ${\gH}$ be a Hilbert space and $A:D(A) \subset {\gH}\rightarrow {\gH}$ be a self-adjoint operator. Let
${\gH}^+$, ${\gH}^-$ be two orthogonal Hilbert subspaces of 
${\gH}$ such that ${\gH}={\gH}^+\oplus {\gH}^-$.
We denote by $\Lambda^\pm$ the two corresponding orthogonal projectors. We assume the
existence of a core $F$ (a subspace of $D(A)$ which is dense for the
norm $\|\cdot\|_{D(A)}$), such that
\begin{itemize}
\item[$(i)$] $F^+ = \Lambda^+ F$ and $F^- = \Lambda^- F$ are two subspaces
of $\cD(|A|^{1/2})$,
\item[$(ii)$] $\dps a=\sup_{\psi^- \in F^-\setminus \{ 0\}} \frac{\pscal{\psi^-, A\psi^-}_{\gH}}{\|\psi_-\|_\gH^{2}}<+\infty $ .\medskip
\end{itemize}
We then consider the sequence of min-max levels
\bq \lambda_F^{(k)} := \  \inf_{
 \scriptstyle W \ {\rm subspace \ of \ } F^+  \atop  \scriptstyle {\rm dim}
\ W =
k  } \  \Sup_{  \scriptstyle \psi \in ( W \oplus F^- ) \setminus \{ 0 \} } \
\Frac{\pscal{\psi, A\psi}_\gH}{\|\psi\|^2_{_{\gH}}} \ ,
\qquad k \geq 1.
\label{min-max} \eq
Our last assumption is
\begin{itemize}
\item[$(iii)$] $\qquad\qquad \lambda_F^{(1)} > a$.
\end{itemize}

Everywhere $\pscal{\psi,A\psi}=\pscal{|A|^{1/2}\psi,U|A|^{1/2}\psi}$ is always understood in the form sense, which is possible since $F^\pm\subset\cD(|A|^{1/2})$. Let 
$$b = \inf \ (\sigma_{\rm ess} (A) \cap (a, + \infty)) \in [a, +
\infty]$$
be the bottom of the essential spectrum above $a$. The following gives a characterization of the eigenvalues in the gap $(a,b)$.

\begin{thm}[Min-max formula for the $k$th eigenvalue~\cite{DolEstSer-00}]\label{thm:DES}  With the above notations, and under
assumptions $(i)$---$(iii)$, we have $b>a$. The number $ \lambda_F^{(k)}$  is the $k$th eigenvalue of $A$ in $(a,b)$, counted with multiplicity, or is equal to $b$ if $A$ has less than $k$ eigenvalues in $(a,b)$. 
\end{thm}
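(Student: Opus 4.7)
The strategy is to reduce the infinite-dimensional sup over $W \oplus F^-$ to a Schur-complement analysis on the finite-dimensional subspace $W \subset F^+$. For each $\mu > a$ and each $x^+ \in F^+$, define the reduced quadratic form on $F^+$ by
\begin{equation*}
\phi_\mu(x^+) \ := \ \sup_{\chi \in F^-}\ \pscal{x^+ + \chi,\, (A - \mu)(x^+ + \chi)}_{\gH}.
\end{equation*}
Assumption $(i)$ makes the form meaningful, and $(ii)$ combined with $\mu > a$ gives $\pscal{\chi,(A-\mu)\chi} \leq (a-\mu)\|\chi\|^2 < 0$ on $F^-$, so the functional in $\chi$ is strictly concave with coercive negative-definite quadratic part. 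Completing the square therefore shows $\phi_\mu(x^+) < +\infty$ for every $x^+ \in F^+$ and that the supremum is attained uniquely at some $L_\mu x^+$ in the form completion of $F^-$. In particular $\phi_\mu$ is a quadratic form on $F^+$, and for each fixed $x^+$ the map $\mu \mapsto \phi_\mu(x^+)$ is continuous and strictly decreasing on $(a,+\infty)$.

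The key reformulation is the swap identity
\begin{equation*}
\lambda_F^{(k)}
\ = \ \inf\big\{\mu > a\ :\ \exists\, W \subset F^+,\ \dim W = k,\ \phi_\mu|_W \leq 0 \big\},
\end{equation*}
which follows from the observation that $\sup_{\psi \in W \oplus F^-} \pscal{\psi,A\psi}/\|\psi\|^2 \leq \mu$ iff the shifted form $\pscal{\psi,(A-\mu)\psi}$ is nonpositive on all of $W \oplus F^-$, iff $\phi_\mu|_W \leq 0$. Assumption $(iii)$ then guarantees that $\phi_\mu$ is strictly positive definite on $F^+ \setminus \{0\}$ for every $\mu < \lambda_F^{(1)}$. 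The min-max problem has thus been recast as a monotone signature count for the one-parameter family of quadratic forms $\{\phi_\mu\}_{\mu > a}$ on $F^+$.

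To identify $\lambda_F^{(k)}$ with an eigenvalue, I set $\mu_* := \lambda_F^{(k)}$ and first assume $\mu_* < b$. A standard Courant-Fischer argument applied to the (possibly degenerate) quadratic form $\phi_{\mu_*}$ on $F^+$ produces a subspace $W$ attaining the infimum and an $x^+_* \in W$ with $\phi_{\mu_*}(x^+_*) = 0$ and vanishing linear derivative. Setting $\psi := x^+_* + L_{\mu_*} x^+_*$, the Euler-Lagrange equation for the inner supremum gives $\Lambda^-(A-\mu_*)\psi = 0$, while the critical-point condition from the outer min-max gives $\Lambda^+(A-\mu_*)\psi = 0$; combining the two yields $A\psi = \mu_*\psi$. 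Iterating and counting multiplicities via the rank of the nonpositive part of $\phi_{\mu_*}$ identifies $\lambda_F^{(k)}$ with the $k$th eigenvalue of $A$ in $(a,b)$ in nondecreasing order. If $A$ has fewer than $k$ eigenvalues in $(a,b)$, then the signature of $\phi_\mu$ on $F^+$ cannot reach $k$ for any $\mu < b$, forcing $\lambda_F^{(k)} \geq b$; the reverse inequality $\lambda_F^{(k)} \leq b$ is obtained by plugging Weyl sequences approximating the bottom of $\sigma_{\rm ess}(A) \cap (a,+\infty)$ into the original min-max. Finally $b > a$ is automatic: $\lambda_F^{(1)} > a$ is proved to lie in $\sigma(A)$, so either $\lambda_F^{(1)} \in (a,b)$ and $b > \lambda_F^{(1)} > a$, or $\lambda_F^{(1)} = b > a$.

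The step I expect to be the main obstacle is the promotion of the form-level Euler-Lagrange identities $\Lambda^\pm(A-\mu_*)\psi = 0$ to the operator equation $A\psi = \mu_*\psi$ in $\cD(A)$, since a priori $\psi$ is known only to lie in $\cD(|A|^{1/2})$. This is precisely where the core hypothesis on $F$ intervenes: density of $F$ in $\cD(A)$ for the graph norm is used to approximate test vectors while preserving both projections $\Lambda^\pm$, and self-adjointness of $A$ then upgrades the form identity to an operator identity. A secondary technical point is ensuring that the Schur-complement maximizer $L_\mu x^+$, which a priori lives in the form completion of $F^-$, can be handled rigorously within the same framework so that the Euler-Lagrange analysis on $F^+$ goes through unimpeded.
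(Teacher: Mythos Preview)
The paper does not prove this theorem: it is quoted from~\cite{DolEstSer-00} and used as a black box in the proof of Theorem~\ref{thm:new}. So there is no ``paper's own proof'' to compare your attempt against.

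That said, your outline is close in spirit to the argument actually carried out in~\cite{DolEstSer-00}. The Schur-complement reduction $\phi_\mu$ is exactly the abstract version of the quadratic form $q_E$ that the present paper studies in the Dirac case (see~\eqref{eq:q_E} and Lemma~\ref{lem:value_S}), and the monotonicity-plus-swap reformulation you write down is the correct organizing principle. Where your sketch becomes genuinely incomplete is the sentence ``A standard Courant--Fischer argument applied to the (possibly degenerate) quadratic form $\phi_{\mu_*}$ on $F^+$ produces a subspace $W$ attaining the infimum and an $x^+_*\in W$ with $\phi_{\mu_*}(x^+_*)=0$.'' Nothing so far guarantees that $\phi_{\mu_*}$ is closable, semibounded, or that the infimum over $k$-dimensional subspaces is attained; in~\cite{DolEstSer-00} this is handled not by a direct Courant--Fischer argument but by constructing, for each $E>a$, a genuine self-adjoint operator $T_E$ on (a completion of) $F^+$ whose ordinary min-max levels encode the signature of $\phi_E$, and then running a continuity/monotonicity argument in $E$ on the eigenvalues of $T_E$. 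This is more than a technicality: it is what produces the eigenvectors and what justifies the multiplicity count.

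The second gap you flag yourself---upgrading the form-level Euler--Lagrange relations to $A\psi=\mu_*\psi$ with $\psi\in\cD(A)$---is real, and the remedy you propose (``density of $F$ in $\cD(A)$ for the graph norm\ldots approximate test vectors while preserving both projections'') does not work as stated: the projectors $\Lambda^\pm$ need not preserve $\cD(A)$ or its graph topology, so graph-norm approximation in $F$ does not automatically respect the splitting. In~\cite{DolEstSer-00} the passage from form to operator goes through the self-adjoint operator $T_E$ just mentioned, together with an explicit identification of its eigenfunctions with those of $A$.
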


For the Dirac operator, it was suggested by Talman~\cite{Talman-86} and Datta-Devaiah~\cite{DatDev-88} to use the decomposition into upper and lower spinors, that is, to take for the two subspaces $\gH^\pm$
\begin{equation}
 \gH^+=\left\{\begin{pmatrix}\phi\\0\end{pmatrix}\ :\ \phi\in L^2(\R^3,\C^2)\right\},\quad \gH^-=\left\{\begin{pmatrix}0\\ \chi\end{pmatrix}\ :\ \chi\in L^2(\R^3,\C^2)\right\}. 
 \label{eq:Talman}
\end{equation}
The first rigorous result for this decomposition was obtained by Griesemer and Siedentop~\cite{GriSie-99}, who dealt with bounded potentials $V$. In~\cite{DolEstSer-00} the above abstract result was applied to the case of Coulomb singularities. However,in \cite{DolEstSer-00} it was stated that it is possible to use the space $F=C^\infty_c(\R^3, \C^4)$. From \textit{(4)} in Theorem~\ref{thm:distinguished}, this is true when $0\leq \nu<{\sqrt{3}}/{2}$, because in this range the operator $D_\nu$ is essentially self-adjoint on $C^\infty_c(\R^3\setminus\{0\}, \C^4)$. When $\sqrt{3}/2\leq \nu<1$, the argument in~\cite{DolEstSer-00} was not complete. 

Of course, Theorem~\ref{thm:DES} can still be applied in the domain $\cD(D_V)$ of the distinguished self-adjoint extension or in any core $F$ on which $D_V$ is essentially self-adjoint. Unfortunately, except for bounded perturbations of the exact Coulomb potential, for which the domain is well understood as we have seen in Subsection~\ref{sec:radial}, $\cD(D_V)$ is not so easy to grasp for a general potential $V$. From a numerical point of view, it is indeed important to be able to use simple spaces $F$ in the min-max formula. 

In~\cite{MorMul-15,Muller-16}, Müller and Morozov proved the validity of the min-max formula for $\sqrt{3}/2\leq \nu<1$ in $F=H^{1/2}(\R^3,\C^2)$, using a variant of the abstract min-max theorem in a setting adapted to form domains, inspired by Nenciu~\cite{Nenciu-76}.

Another min-max principle based on the free-energy projectors $\Lambda_0^+=\1(D_0\geq0)$ and $\Lambda_0^-=\1(D_0\leq0)$ was first introduced in \cite{EstSer-97}. Using an inequality proved in \cite{BurEva-98} and \cite{Tix-98}, it was shown in \cite{DolEstSer-00} that the eigenvalues satisfy the min-max principle~\eqref{eq:min-max_principle} in the range $\,0\leq\nu<2\left(\frac{\pi}{2}+\frac{2}{\pi}\right)^{-1}\simeq 0.9$. 
Recently, the free projections have also been covered in~\cite{MorMul-15,Muller-16} for $\nu<1$. 

In this section we prove a result similar to~\cite{MorMul-15,Muller-16}, by a completely different method. We will show that the min-max is valid on any space between $C^\infty_c(\R^3\setminus\{0\}, \C^4)$ and $H^{1/2}(\R^3,\C^4)$. Contrary to~\cite{MorMul-15} we will not modify the abstract theorem, but simply use density results in the spirit of Theorem~\ref{thm:quadratic form_domain}. We will also treat the critical case $\nu=1$ and obtain the first results in this setting, to our knowledge. 

In order to properly state our main result, we introduce the two projections 
$$\Lambda_T^+\begin{pmatrix}\phi\\ \chi\end{pmatrix}=\begin{pmatrix}\phi\\0\end{pmatrix},\qquad \Lambda_T^-\begin{pmatrix}\phi\\ \chi\end{pmatrix}=\begin{pmatrix}0\\\chi\end{pmatrix}$$
corresponding to the Talman decomposition~\eqref{eq:Talman} and the spectral projections
$$\Lambda_0^+=\1(D_0\geq 0),\qquad \Lambda_0^-=\1(D_0\leq 0)$$
of the free Dirac operator. For a space $F\subseteq H^{1/2}(\R^3,\C^4)$, we define the min-max levels
\begin{equation}
 \lambda_{T/0,F}^{(k)}= 
 \  \inf_{\substack{W \ {\rm subspace \ of \ } \Lambda_{T/0}^+F \\ \scriptstyle {\rm dim}
\ W=k  }} \  \Sup_{\substack{\Psi \in W \oplus \Lambda_{T/0}^-F \\ \Psi\neq0}} \
\Frac{\pscal{\Psi, D_V\Psi}}{\|\Psi\|^2_{_{L^2}}},
\qquad k \geq 1.
 \label{eq:min-max_principle}
\end{equation}
We remark that the four projections $\Lambda_{T/0}^\pm$ stabilize $H^{1/2}(\R^3,\C^4)$, hence $\pscal{\Psi, D_V\Psi}$ is always well defined in the sense of quadratic forms. Indeed
$$\pscal{\Psi, D_V\Psi}=\pscal{\Psi, D_0\Psi}+\int_{\R^3}V|\Psi|^2$$ 
by Theorem~\ref{thm:distinguished} {\textit{(ii)} and Remark~\ref{rmk:form}. The same property as in Remark~\ref{rmk:form} holds in the critical case $\nu=1$, since $H^1(\R^3,\C^4)\subset\cD(D_V)$ as well.
We could actually work in $\cD(|D_V|^{1/2})$ but we refrain from doing it since our goal is to state a result in simple spaces that do not depend on $V$.

Our main result is the following
\begin{thm}[Min-max formula for eigenvalues]\label{thm:new}  
Let $0<\nu\leq 1$. We assume that
 \begin{equation}
 V(x)\geq -\frac{\nu}{|x|}\qquad\text{and}\qquad \sup(V)<1+\sqrt{1-\nu^2}.
 \label{eq:assumption_V_bis}
 \end{equation}
Let
\begin{equation}
 C^\ii_c(\R^3\setminus\{0\},\C^4)\;\subseteq\; F\; \subseteq\; H^{1/2}(\R^3,\C^4). 
 \label{eq:condition_F}
\end{equation}
Then, the number $\lambda_{T,F}^{(k)}$ defined in \eqref{eq:min-max_principle}, is independent of the subspace $F$ and coincides with the $k$th eigenvalue of the distinguished self-adjoint extension of $D_V$ larger than or equal to $\sqrt{1-\nu^2}$, counted with multiplicity (or is equal to $b = \inf \; (\sigma_{\rm ess} (D_V) \cap (\sqrt{1-\nu^2}, + \infty))$ if there are less than $k$ eigenvalues below $b$). In addition, we have
$$\lambda_{T,F}^{(k)}=\lambda_{0,F}^{(k)}$$
for all $F$ as above and all $k\geq1$.
\end{thm}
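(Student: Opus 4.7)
I apply the abstract Theorem~\ref{thm:DES} to $A=D_V$ with $F_0:=\cD(D_V)$, for each of the projector pairs $\Lambda_T^\pm$ and $\Lambda_0^\pm$, and then transfer the conclusion to every admissible $F$ via density in the form domain $\cV$ (resp.\ $\cW$ in the critical case). Hypothesis (i) of Theorem~\ref{thm:DES} follows from $\cD(D_V)\subset H^{1/2}(\R^3,\C^4)\subset\cD(|D_V|^{1/2})$ (Theorems~\ref{thm:distinguished},~\ref{thm:distinguished-critical}, Remark~\ref{rmk:form}) and the $H^{1/2}$-continuity of the four projectors. Hypothesis (ii) yields $a\leq\sup(V)-1<\sqrt{1-\nu^2}$: directly for Talman, and through $D_0\Lambda_0^-\leq-\Lambda_0^-$ for the free projection. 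Hypothesis (iii), $\lambda^{(1)}>a$, is postponed and verified below once the inner supremum has been computed.

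\textbf{Inner supremum in the Talman setting.} For $\phi\in\Lambda_T^+F$ nonzero and $\lambda>\sup(V)-1$, completing the square in the lower spinor $\chi\in H^{1/2}(\R^3,\C^2)$ produces the identity
\begin{equation*}
\pscal{(\phi,\chi),(D_V-\lambda)(\phi,\chi)}=q_\lambda(\phi)-\id{\R^3}{(1+\lambda-V)\,\bigl|\chi-\chi_*^\lambda\bigr|^2},\qquad \chi_*^\lambda:=\frac{-i\sigma\cdot\nabla\phi}{1+\lambda-V}.
\end{equation*}
Since $\phi\in\cV$ (resp.\ $\cW$), $\chi_*^\lambda$ lies in the weighted space $L^2((1+\lambda-V)\,dx)$, and $C^\ii_c(\R^3\setminus\{0\},\C^2)\subset\Lambda_T^-F$ is dense in it. Hence the supremum over $\chi\in\Lambda_T^-F$ of the preceding right-hand side equals $q_\lambda(\phi)$, and the strict monotonicity of $\lambda\mapsto q_\lambda(\phi)$, together with its blow-up as $\lambda\downarrow\sup(V)-1$ and $\lambda\to+\ii$, gives
\begin{equation*}
\sup_{\chi\in\Lambda_T^-F}\frac{\pscal{(\phi,\chi),D_V(\phi,\chi)}}{\|\phi\|^2+\|\chi\|^2}=\lambda(\phi),
\end{equation*}
where $\lambda(\phi)$ is the unique zero of $\lambda\mapsto q_\lambda(\phi)$ in $(\sup(V)-1,+\ii)$. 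This quantity is manifestly $F$-independent. The pointwise comparison $q_\lambda\geq q_\lambda^{\rm C}$ (from $V\geq -\nu/|x|$) together with the fact that $\sqrt{1-\nu^2}$ is the first positive eigenvalue of the Dirac-Coulomb operator $D_{-\nu/|x|}$, i.e.\ $q_{\sqrt{1-\nu^2}}^{\rm C}\geq 0$, gives $\lambda(\phi)\geq\sqrt{1-\nu^2}>a$ for every $\phi\neq 0$, verifying~(iii).

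\textbf{Density in the outer infimum and the free-projection case.} With (i)--(iii) in hand, Theorem~\ref{thm:DES} identifies $\lambda_{T,F_0}^{(k)}$ with the $k$th eigenvalue of $D_V$ above $\sqrt{1-\nu^2}$ (or with $b$ if fewer exist). For a general admissible $F$,
\begin{equation*}
\lambda_{T,F}^{(k)}=\inf_{\substack{W\subset\Lambda_T^+F\\ \dim W=k}}\sup_{\phi\in W\setminus\{0\}}\lambda(\phi),
\end{equation*}
with $\lambda(\cdot)$ continuous on $\cV\setminus\{0\}$ (resp.\ $\cW\setminus\{0\}$) by the implicit function theorem applied to $q_\lambda(\phi)=0$, using the $\lambda$-independent norm equivalence $\sqrt{q_\lambda+2\lambda\|\cdot\|_{L^2}^2}\simeq\|\cdot\|_\cV$ extracted from~\eqref{eq:estim_below_Q_lambda}. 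Theorems~\ref{thm:quadratic form_domain},~\ref{thm:quadratic form_domain_critical} and Corollary~\ref{cor:chi_in_cV} yield the density of $\Lambda_T^+F$ in $\Lambda_T^+\cD(D_V)$ for the $\cV$- (resp.\ $\cW$-)norm, so by a standard finite-dimensional approximation the infimum above coincides with the one over $W\subset\Lambda_T^+\cD(D_V)$ and therefore with the $k$th eigenvalue. For the free projections one applies the same two-step scheme with $F_0=\cD(D_V)$, and transfers to arbitrary $F$ by the $H^{1/2}$-density of $\Lambda_0^\pm C^\ii_c(\R^3\setminus\{0\},\C^4)$ in $\Lambda_0^\pm H^{1/2}$ (vanishing $H^{1/2}$-capacity of $\{0\}$ in $\R^3$ and $H^{1/2}$-continuity of $\Lambda_0^\pm$). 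Since both $\lambda_{T,F}^{(k)}$ and $\lambda_{0,F}^{(k)}$ are then identified with the $k$th eigenvalue, they coincide. The main obstacle is the critical regime $\nu=1$: the weight $1+\lambda-V$ degenerates at the origin, $\|\cdot\|_\cV$ is no longer coercive, and one must work in the finer space $\cW$, whose description and density properties rely crucially on the sum-of-squares decomposition of Theorem~\ref{thm:carres_q_C} and the density result of Theorem~\ref{thm:quadratic form_domain_critical}; both the continuity of $\lambda(\cdot)$ on $\cW\setminus\{0\}$ and the density of $C^\ii_c(\R^3\setminus\{0\},\C^2)$ in $L^2((1+\lambda-V)\,dx)$ uniformly in $\lambda$ on compact subintervals of $(\sup(V)-1,+\ii)$ must be established in this degenerate setting.
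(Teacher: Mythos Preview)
Your route is the one the paper explicitly mentions and then \emph{rejects}: apply Theorem~\ref{thm:DES} directly with $F_0=\cD(D_V)$ and transfer to general $F$ by density. The paper instead truncates the potential to $V_\eps=\max(V,-1/\eps)$, applies Theorem~\ref{thm:DES} to the bounded perturbation $D_{V_\eps}$ (domain $H^1$, no difficulty), and passes to the limit $\eps\to0$ using the norm-resolvent convergence of Theorems~\ref{thm:distinguished} and~\ref{thm:distinguished-critical}(b). This avoids touching $\cD(D_V)$ entirely.

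The reason this matters is a genuine gap in your argument at $\nu=1$. You claim hypothesis~(i) follows from $\cD(D_V)\subset H^{1/2}(\R^3,\C^4)$, but this inclusion is \emph{false} in the critical case: Theorem~\ref{thm:quadratic form_domain_critical} gives only $\cW\subset H^s$ for $s<1/2$, and the exact Coulomb ground state $\phi_0=e^{-|x|}/|x|$ is in the domain but not in $H^{1/2}$. So the chain $\Lambda_T^\pm\cD(D_V)\subset H^{1/2}\subset\cD(|D_V|^{1/2})$ breaks, and you have not checked that $(\phi,0)\in\cD(|D_V|^{1/2})$ for $\phi\in\cW$. The paper sidesteps this completely because $D_{V_\eps}$ has domain $H^1$ for every $\eps>0$.

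A second, smaller point: even in the critical case the relevant form space for the Talman min-max is $\cV$, not $\cW$. Since $F\subset H^{1/2}$ by hypothesis and $\cW\cap H^{1/2}=\cV$, Lemma~\ref{lem:value_S} says $S(V,\phi)<\infty$ iff $\phi\in\cV$; Theorem~\ref{thm:quadratic form_domain} (density in $\cV$) suffices, and Theorem~\ref{thm:quadratic form_domain_critical} is not needed here. The paper notes this explicitly after stating Theorem~\ref{thm:new}.

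Finally, your treatment of the free projections is too quick. The paper does not simply repeat the Talman argument: it introduces the new quadratic form $\tilde q_E$ of~\eqref{eq:q_tilde_E} and proves the nontrivial Lemma~\ref{lem:domain_is_H_12}, which shows that the maximal domain of $\tilde q_E$ is exactly $\Lambda_0^+H^{1/2}$ (via a Lieb-type operator inequality and a bound on the operator $B_E$ of~\eqref{eq:operateur_horrible}). This is what replaces the density-in-$\cV$ step, and it is where most of the work for $\Lambda_0^\pm$ lies.
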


That we can take any space $F$ satisfying~\eqref{eq:condition_F} shows how the min-max characterization of the eigenvalues is insensitive to $F$, even for the distinguished self-adjoint extension which has a non trivial domain $\cD(D_V)$. The space $F$ can be as small as $C^\ii_c(\R^3\setminus\{0\},\C^4)$ which is not dense in $\cD(D_V)$ for $\sqrt{3}/2<\nu\leq1$, or as large as $H^{1/2}(\R^3,\C^4)$ which does not even contain the domain for $\nu=1$. 

Before turning to the proof of the theorem (given in Section~\ref{sec:proof_thm_new}), we would like to comment on the role of the quadratic form $q_\lambda$ discussed in Sections~\ref{sec:Esteban-Loss}--\ref{sec:critical}, in the Talman case $\Lambda^\pm_T$. One important argument in~\cite{DolEstSer-00} was to solve the sup part of~\eqref{eq:min-max_principle} using the method of Lagrange multipliers. For any $\lambda>\sup(V)-1$ we consider the maximization problem
\begin{multline*}
\sup_{{\scriptsize \begin{pmatrix}0\\ \chi\end{pmatrix}}\in \Lambda_T^-F}\left\{\pscal{\begin{pmatrix}\phi\\ \chi\end{pmatrix},D_V\begin{pmatrix}\phi\\ \chi\end{pmatrix}}-\lambda \big(\norm{\phi}_{L^2}^2+\norm{\chi}_{L^2}^2\big) \right\}\\
=\int_{\R^3}\frac{|\sigma\cdot\nabla\phi(x)|^2}{1-V(x)+\lambda}\,dx+\int_{\R^3}\big(1+V(x)-\lambda\big)|\phi(x)|^2\,dx=q_\lambda(\phi),
\end{multline*}
which is exactly the quadratic form which we have studied in Section~\ref{sec:Esteban-Loss}. The unique maximizer is
$$\chi=\frac{-i\sigma\cdot\nabla\phi}{1-V+\lambda}.$$
This can be used to prove that supremum
\begin{equation}
\sup_{{\scriptsize \begin{pmatrix}0\\ \chi\end{pmatrix}}\in \Lambda_T^-F}\frac{\pscal{\begin{pmatrix}\phi\\ \chi\end{pmatrix},D_V\begin{pmatrix}\phi\\ \chi\end{pmatrix}}}{\norm{\phi}^2+\norm{\chi}^2}
\end{equation}
appearing in the min-max formula~\eqref{eq:min-max_principle}, is the unique number $\lambda$ such that $q_\lambda(\phi)=0$. 
For this reason, our proof of Theorem~\ref{thm:new} relies on the density of $C^\ii_c(\R^3\setminus\{0\},\C^2)$ in the quadratic form domains $\cV$, shown in Theorem~\ref{thm:quadratic form_domain}. In the critical case, our proof does not rely on the density in $\cW$, stated before in Theorem~\ref{thm:quadratic form_domain_critical}. This is because we have assumed that $F\subset H^{1/2}(\R^3,\C^4)$ and $\cW\cap H^{1/2}(\R^3,\C^4)=\cV$. 

The rest of the paper is dedicated to the proofs of our results.


\section{Proof of Theorem~\ref{thm:new} on the min-max levels}\label{sec:proof_thm_new}

Admitting temporarily our other results, we start with the proof of Theorem~\ref{thm:new}. One possible route is to apply the abstract Theorem~\ref{thm:DES} in the domain $F_0=\cD(D_V)$ and then to show that $F_0$ can be replaced by any other $F$ as in the statement. Another strategy is to truncate the potential into $V_\eps$, apply Theorem~\ref{thm:DES} for $V_\eps$ and then pass to the limit $\eps\to0$ in the min-max formula for the eigenvalues. This argument uses the norm-convergence of the resolvent in Theorems~\ref{thm:distinguished} and~\ref{thm:distinguished-critical} which implies the convergence of the eigenvalues. 

The first method requires to know the domain $F_0=\cD(D_V)$ quite precisely, whereas the second one does not involve the domain at all. It is more robust and more appropriate in the critical case $\nu=1$ for which we have less information on $\cD(D_V)$. For this reason, we use the second method.

\subsection{Proof for the Talman projections $\Lambda^\pm_T$} 

We split the proof into several steps. To simplify our proof, with an abuse of notation we write $\phi\in F^+=\Lambda_T^+ F$ instead of
$$\begin{pmatrix}\phi\\ 0\end{pmatrix}\in \Lambda_T^+ F$$
and similarly we write $\chi\in F^-$. In the proof we approximate the (upper bounded) potential $V$ by $V_\eps:=\max(V,-1/\eps)\in L^\ii(\R^3,\R)$ and we start by recalling some well-known facts for $V_\eps$. 

\subsubsection*{Step 1. Upper bound}
First we compute 
\begin{equation*}
a:=\sup_{\substack{\chi\in F^-\\ \chi\neq0}} \frac{\pscal{\begin{pmatrix}0\\ \chi\end{pmatrix},D_V\begin{pmatrix}0\\ \chi\end{pmatrix}}}{\|\chi\|^2}
=\sup_{\substack{\chi\in F^-\\ \chi\neq0}}\frac{\int_{\R^3}(-1+V)|\chi|^2}{\|\chi\|^2}=\sup(-1+V)
\end{equation*}
since $F^-$ contains $C^\ii_c(\R^3\setminus\{0\},\C^2)$ by assumption. Thus $a<\sqrt{1-\nu^2}$ since $\sup(V)<1+\sqrt{1-\nu^2}$. The same property holds when $V$ is replaced by $V_\eps$.

Following~\cite[Lem.~2.2]{DolEstSer-00}, we write the min-max levels for a potential $V$ (truncated or not) in the form
\begin{equation}
\lambda^{(k)}_{T,F}(V)=\inf_{\substack{W^+\subset F^+\\ \dim(W^+)=k}}\sup_{\substack{\phi\in W^+}}S_{F^-}(V,\phi).
\label{eq:lambda_k_S_critical}
\end{equation}
where
\begin{multline}
S(V,\phi):=\\ \sup_{\substack{\chi\in F^-\\ \|\phi\|^2+\|\chi\|^2\neq0}}
\frac{\int_{\R^3}(|\phi|^2-|\chi|^2)+\int_{\R^3}V(|\phi|^2+|\chi|^2)+2\Re \pscal{\chi,-i\sigma\cdot\nabla\phi}}{\int_{\R^3}|\phi|^2+|\chi|^2}. 
\label{eq:def_S}
\end{multline}
All the terms are well defined since $F\subset H^{1/2}(\R^3,\C^4)$. Indeed, by continuity of the function appearing in the definition~\eqref{eq:def_S} for the norm of $H^{1/2}$, the value of $S(V,\phi)$ does not depend on $F^-$ which can be replaced by any space dense in $H^{1/2}$. This is why our notation for $S(V,\phi)$ does not involve $F^-$. 
By monotonicity with respect to $V$ we have
\begin{equation}
 \lambda^{(k)}_{T,F}(V_\eps)\geq \lambda^{(k)}_{T,F}(V) 
 \label{eq:upper_bound}
\end{equation}
for all $\eps>0$. 
Using a continuation principle, it was proved in~\cite{DolEstSer-00} that 
\begin{equation}
 \lambda_{T,F}^{(1)}(V_\eps)\geq\sqrt{1-\nu^2}>a
 \label{eq:lower_bound_lambda_1}
\end{equation}
for all $\eps>0$ and all $C^\ii_c(\R^3\setminus\{0\},\C^4)\subset F\subset H^{1/2}(\R^3,\C^4)$. 
So we can apply Theorem~\ref{thm:DES} and conclude that, under our assumptions on $V$, $\lambda^{(k)}_{T,F}(V_\eps)$ is independent of $F$ and coincides with the $k$th eigenvalue of $D_{V_\eps}$. 

In the limit $\eps\to0$, $\lambda^{(k)}_{T,F}(V_\eps)$ converges to the $k$th eigenvalue $\mu^{(k)}(V)$ of $D_V$, due to the convergence in norm of the resolvents, shown in Theorem~\ref{thm:distinguished} for the subcritical case $0<\nu<1$ and in Theorem~\ref{thm:distinguished-critical} in the critical case $\nu=1$. So passing to the limit $\eps\to0$ in~\eqref{eq:upper_bound} we obtain the upper bound
$$\mu^{(k)}(V)\geq \lambda^{(k)}_{T,F}(V).$$

\subsubsection*{Step 2. Lower bound}
Now we come back to~\eqref{eq:lambda_k_S_critical}. In order to prove the reverse inequality we have to show that
$$\sup_{\substack{\phi\in W^+}}S(V,\phi)\geq \mu^{(k)}(V)$$
for every $k$-dimensional subspace $W^+\subset F^+$. The next lemma follows from the arguments in~\cite[Lemma~2.2]{DolEstSer-00}. 

\begin{lemma}[Computation of $S(V,\phi)$~\cite{DolEstSer-00}]\label{lem:value_S}
Let $\phi\in H^{1/2}(\R^3,\C^2)$. Then $S(V,\phi)$ is finite if and only if $\phi\in\cV$. In this case, $E=S(V,\phi)$ is the unique solution to the nonlinear equation $q_E(V,\phi)=0$.
\end{lemma}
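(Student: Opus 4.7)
My plan is to eliminate $\chi$ from the inner supremum defining $S(V,\phi)$ in~\eqref{eq:def_S} by a Lagrange-multiplier (duality) argument parametrized by a real number $E$. Fix $\phi\in H^{1/2}(\R^3,\C^2)$ and $E>\sup(V)-1$. The numerator of the Rayleigh quotient minus $E$ times its denominator is
\[
\int_{\R^3}(1+V-E)|\phi|^2-\int_{\R^3}(1-V+E)|\chi|^2+2\Re\pscal{\chi,-i\sigma\cdot\nabla\phi}.
\]
Since $1-V+E\geq 1-\sup(V)+E>0$, the $\chi$-dependent part is a strictly concave quadratic on the weighted Hilbert space $L^2((1-V+E)\,dx,\C^2)$. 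Completing the square (equivalently, invoking Riesz duality) yields the formal maximizer $\chi_E=-i\sigma\cdot\nabla\phi/(1-V+E)$ with maximum value exactly $q_E(V,\phi)$, under the convention that this maximum is $+\infty$ whenever $(1-V+E)^{-1/2}\sigma\cdot\nabla\phi\notin L^2$. Since $F^-$ contains $C^\infty_c(\R^3\setminus\{0\},\C^2)$, which is dense in $H^{1/2}$ and in the weighted space above, the sup over $\chi\in F^-$ coincides with this value.

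From this I would deduce the key equivalence
\[
S(V,\phi)\leq E \iff q_E(V,\phi)\leq 0,\qquad E>\sup(V)-1.
\]
If $\phi\notin\cV$, the weights $1-V+E$ and $2-V$ are comparable (both behave like $\nu/|x|$ near the origin and are bounded elsewhere), so $q_E(V,\phi)=+\infty$ for every admissible $E$, forcing $S(V,\phi)=+\infty$. If instead $\phi\in\cV$, a direct computation of $\partial_E q_E=-\int|\sigma\cdot\nabla\phi|^2/(1-V+E)^2-\|\phi\|^2$ shows that $E\mapsto q_E(V,\phi)$ is finite, continuous, and strictly decreasing on $(\sup(V)-1,+\infty)$, with $q_E\to -\infty$ as $E\to+\infty$. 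Strict monotonicity gives uniqueness of a root; for existence, I would combine the trivial lower bound $S(V,\phi)\geq 1+\|\phi\|^{-2}\int V|\phi|^2$ (obtained by testing $\chi=0$) with the a priori estimate~\eqref{eq:estim_below_Q_lambda}, which together force $q_E(V,\phi)$ to be strictly positive at some admissible $E$. The intermediate value theorem then supplies the unique root $E^{*}$, and the equivalence identifies $S(V,\phi)=E^{*}$.

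The delicate step is the density assertion from the first paragraph, namely that the supremum over the restricted class $F^-\supset C^\infty_c(\R^3\setminus\{0\},\C^2)$ coincides with the formal Lagrangian value $q_E(V,\phi)$ in the presence of the singular potential $V$. For $\phi\in\cV$ this reduces to density of $C^\infty_c(\R^3\setminus\{0\},\C^2)$ in the weighted space $L^2((1-V+E)\,dx,\C^2)$, which I would obtain by truncation away from the origin and mollification, using Hardy's inequality to control the Coulomb singularity of the weight. For $\phi\notin\cV$ a complementary construction is needed: one exhibits $\chi_n\in C^\infty_c(\R^3\setminus\{0\},\C^2)$ approximating $\chi_E$ (cut off near $0$ and at infinity and suitably regularized) and verifies directly that the Rayleigh quotient in~\eqref{eq:def_S} diverges to $+\infty$ along the sequence $\chi_n$.
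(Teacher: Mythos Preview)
Your approach is exactly the one the paper has in mind: the paper does not spell out a proof but refers to \cite[Lemma~2.2]{DolEstSer-00}, and the surrounding discussion (the paragraph preceding the lemma, solving the $\sup$ by Lagrange multipliers and completing the square in $\chi$) is precisely your Step~1. The key equivalence $S(V,\phi)\le E\iff q_E(V,\phi)\le 0$, the strict monotonicity $\partial_E q_E<0$, and the dichotomy $\phi\in\cV\Leftrightarrow q_E<\infty$ are all correct and match the intended argument.

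One small imprecision: for the \emph{existence} of a root you invoke~\eqref{eq:estim_below_Q_lambda}, but that estimate only yields $q_\lambda\ge -2\lambda\|\phi\|^2$, which is positive only when one may take $\lambda<0$, i.e.\ when $\sup(V)<1$. Under the full hypothesis $\sup(V)<1+\sqrt{1-\nu^2}$ the clean way is to use the monotonicity of $q_E$ in $V$ (obvious from the variational formula $q_E=\sup_\chi[\ldots]$) to reduce to the pure Coulomb case, and then the Hardy--Dirac inequality of \cite{DolEstSer-00,DolEstLosVeg-04}, namely $q^{\rm C}_{\sqrt{1-\nu^2}}\ge 0$, which gives $q_{\sqrt{1-\nu^2}}(V,\phi)\ge 0$ at the admissible value $E=\sqrt{1-\nu^2}>\sup(V)-1$. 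Your ``trivial lower bound'' $S\ge 1+\|\phi\|^{-2}\!\int V|\phi|^2$ does not by itself yield $S>a$. Also, the claim that the two weights ``both behave like $\nu/|x|$ near the origin'' is not literally true under the sole assumption $V\ge-\nu/|x|$; what you actually need (and what holds) is that $1-V+E$ and $2-V$ differ by the constant $E-1$ and are both bounded below by positive constants, hence are comparable.
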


Note that even in the critical case $\nu=1$ we conclude that $\phi$ must be in $\cV$. This is because we have assumed that $\phi\in H^{1/2}(\R^3,\C^2)$ and $\cW\cap H^{1/2}=\cV$. 

By Lemma~\ref{lem:value_S} and the monotonicity of $q_E$ with respect to $E$, it suffices to show that
\begin{equation}
\sup_{\phi\in W^+} q_{\mu^{(k)}(V)}(V,\phi)\geq 0
\label{eq:to_be_shown_critical_q}
\end{equation}
for any $k$-dimensional space $W^+\subset F^+\cap \cV$. Since $C^\ii_c(\R^3\setminus\{0\},\C^2)$ is dense in $\cV$ by Theorem~\ref{thm:quadratic form_domain}, it suffices to prove~\eqref{eq:to_be_shown_critical_q} for a $k$-dimensional space $W^+\subset C^\ii_c(\R^3\setminus\{0\},\C^2)$. For any such space, we have from the min-max characterization for $V_\eps$
$$\sup_{\phi\in W^+} q_{\mu^{(k)}(V)}(V_\eps,\phi)\geq \sup_{\phi\in W^+} q_{\mu^{(k)}(V_\eps)}(V_\eps,\phi)\geq 0.$$
So passing to the limit $\eps\to0$ (in the fixed finite-dimensional space $W^+\subset C^\ii_c(\R^3\setminus\{0\},\C^2)$) we find 
$$\sup_{\phi\in W^+} q_{\mu^{(k)}(V)}(V,\phi)\geq 0$$
as we wanted. 

\subsection{Proof for the free Dirac projections $\Lambda^\pm_0$} 

The proof follows along the same lines as for the Talman projections, and we only outline it. In this case we have as before
\begin{align*}
a:=\sup_{\substack{\psi_-\in\Lambda_0^-F\\\psi_-\neq0}}\frac{\pscal{\psi_-,D_V\psi_-}}{\|\psi_-\|^2}
&=\sup_{\substack{\psi_-\in\Lambda_0^-F\\\psi_-\neq0}}\frac{\pscal{\psi_-,\big(-\sqrt{1-\Delta}+V\big)\psi_-}}{\|\psi_-\|^2}\\
&\leq \sup(V)-1<\sqrt{1-\nu^2}.
\end{align*}
Following step by step the argument of the previous section, we have to study the quadratic form
\begin{multline}\label{eq:q_tilde_E}
\tilde q_E(\psi_+):=\pscal{\psi_+,\sqrt{1-\Delta}\,\psi_+}+\int_{\R^3}(V-E)|\psi_+|^2\\
+\pscal{\Lambda_0^-V\psi_+, \left(\Lambda_0^-(\sqrt{1-\Delta}+E-V)\Lambda_0^-\right)^{-1}\Lambda_0^-V\psi_+}
\end{multline}
in place of $q_E$ which appeared in \eqref{eq:q_E}. 
Let us remark that $\tilde q_E$ is continuous on $H^{1/2}$ since, by the operator monotonicity of the inverse, we have
$$\left(\Lambda_0^-(\sqrt{1-\Delta}+E-V)\Lambda_0^-\right)^{-1}\leq \frac{\Lambda_0^-}{\sqrt{1-\Delta}+E-1-\sqrt{1-\nu^2}}.$$
Therefore by Kato's inequality
\begin{equation}
\pscal{\Lambda_0^-V\psi_+, \left(\Lambda_0^-(\sqrt{1-\Delta}+E-V)\Lambda_0^-\right)^{-1}\Lambda_0^-V\psi_+}\lesssim\int_{\R^3}|V||\psi_+|^2
\label{eq:estim_terme_pas_beau}
\end{equation}
and
$$\big|\tilde q_E(\psi_+)\big|\lesssim\pscal{\psi_+,\sqrt{1-\Delta}\,\psi_+}.$$
In addition the map $E\mapsto \tilde q_E(\psi_+)$ is $C^1$ on $(0,\ii)$ with
\begin{equation}
 \frac{\partial}{\partial E}\tilde q_E(\psi_+)=-\int_{\R^3}|\psi_+|^2-\norm{\left(\Lambda_0^-(\sqrt{1-\Delta}+E-V)\Lambda_0^-\right)^{-1}\Lambda_0^-V\psi_+}_{L^2}^2. 
 \label{eq:diff_q_tilde_E}
\end{equation}
Using~\eqref{eq:estim_terme_pas_beau} and the fact that 
$$\left(\Lambda_0^-(\sqrt{1-\Delta}+E-V)\Lambda_0^-\right)^{-2}\leq \frac{1}{E}\left(\Lambda_0^-(\sqrt{1-\Delta}+E-V)\Lambda_0^-\right)^{-1}$$
the right side of~\eqref{eq:diff_q_tilde_E} is well-defined and continuous on $H^{1/2}$. 

In~\cite[Sec.~4.2]{DolEstSer-00} it was proved that
\begin{equation}
\tilde q_E(\psi_+)\geq 0
\label{eq:Hardy_free}
\end{equation}
for all $\sup(V)-1< E\leq\sqrt{1-\nu^2}$ and all $\psi_+\in \Lambda_0^+H^{1/2}(\R^3,\C^4)$. From~\eqref{eq:Hardy_free} we can first deduce that the domain of the quadratic form $\tilde q_E$ is exactly $\Lambda_0^+H^{1/2}$.

\begin{lemma}[The domain of $\tilde q_E$ is $\Lambda_0^+H^{1/2}$ for $\nu<1$]\label{lem:domain_is_H_12}
We have 
\begin{equation}
 \tilde q_E(\psi_+)\geq (1-\nu)^2 \pscal{\psi_+,\sqrt{1-\Delta}\,\psi_+}-4E\nu^2 \|\psi_+\|^2 
 \label{eq:lower_bound_q_tilde}
\end{equation}
for every $\psi_+\in \Lambda^+_0H^{1/2}(\R^3,\C^4)$ and every $E>\max(0,\sup(V)-1)$.
\end{lemma}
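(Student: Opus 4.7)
My plan is to apply the Hardy-type inequality $\tilde q^V_{E_0}(\psi_+) \geq 0$ (valid for $E_0 \leq \sqrt{1-\nu^2}$, recalled as~\eqref{eq:Hardy_free}) with $V$ replaced by a rescaled potential $\alpha V$, for a carefully chosen $\alpha > 1$ satisfying $\alpha \nu < 1$. Since $V \geq -\nu/|x|$, the rescaled potential satisfies $\alpha V \geq -\alpha\nu/|x|$, so~\eqref{eq:Hardy_free} applies to $\alpha V$ at the threshold $E'(\alpha) := \sqrt{1-\alpha^2\nu^2}$. Solving this nonnegativity for $\int_{\R^3} V|\psi_+|^2\,dx$ and substituting into the definition of $\tilde q_E(\psi_+)$ gives
\begin{equation*}
\tilde q_E(\psi_+) \geq \Bigl(1-\tfrac{1}{\alpha}\Bigr)\pscal{\psi_+, \sqrt{1-\Delta}\,\psi_+} + \Bigl(\tfrac{E'(\alpha)}{\alpha} - E\Bigr)\norm{\psi_+}_{L^2}^2 + R^V_E(\psi_+) - \tfrac{1}{\alpha}R^{\alpha V}_{E'(\alpha)}(\psi_+),
\end{equation*}
where $R^W_F(\psi_+)$ denotes the third ($R$-)term in $\tilde q^W_F$.

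The kinetic coefficient $(1-\nu)^2$ in the target estimate dictates the choice $\alpha := 1/(\nu(2-\nu))$, as this makes $1-1/\alpha = (1-\nu)^2$. It satisfies $\alpha\nu = 1/(2-\nu) < 1$ for every $\nu \in (0,1)$, so the argument is available throughout the subcritical range. With this $\alpha$, one computes $E'(\alpha)/\alpha = \nu\sqrt{(1-\nu)(3-\nu)} \geq 0$, so the $L^2$-coefficient is bounded below by $-E$. The target $-4E\nu^2$ then follows by a case split: for $\nu \geq 1/2$ one has $-E \geq -4E\nu^2$ directly, while for $\nu < 1/2$ the positive constant $\nu\sqrt{(1-\nu)(3-\nu)}$ suffices to absorb the extra multiple of $E\norm{\psi_+}_{L^2}^2$ after combining with the kinetic term via $\pscal{\psi_+,\sqrt{1-\Delta}\psi_+}\geq \norm{\psi_+}_{L^2}^2$.

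The principal obstacle will be controlling the remainder $R^V_E(\psi_+) - \alpha^{-1}R^{\alpha V}_{E'(\alpha)}(\psi_+)$: the inverse $\bigl(\Lambda_0^-(\sqrt{1-\Delta}+E-V)\Lambda_0^-\bigr)^{-1}$ depends on $V$ and $E$ in a nonlinear way, and a direct operator-monotonicity comparison between the two $R$-terms fails uniformly in $E$. I would handle this via the variational identity
\begin{equation*}
R^V_E(\psi_+) = \sup_{\chi \in \Lambda_0^- L^2}\Bigl[2\Re\pscal{\chi, V\psi_+} - \pscal{\chi, (\sqrt{1-\Delta}+E-V)\chi}\Bigr],
\end{equation*}
taking as test function in the sup for $R^V_E$ the maximiser of the sup for $R^{\alpha V}_{E'(\alpha)}$ (suitably rescaled), and applying Cauchy--Schwarz together with the Coulomb majorant $-V \leq \nu/|x|$ to reduce the remainder to a controllable multiple of $E\norm{\psi_+}_{L^2}^2$.
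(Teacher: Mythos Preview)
Your approach diverges substantially from the paper's, and the outline has two genuine gaps.

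\textbf{The $L^2$-coefficient for $\nu<1/2$.} With your choice $\alpha=1/(\nu(2-\nu))$ (so that $1-\alpha^{-1}=(1-\nu)^2$ exactly matches the kinetic coefficient in the target), the $L^2$-coefficient you obtain is $E'(\alpha)/\alpha-E$. For $\nu\geq 1/2$ this is indeed $\geq -4E\nu^2$, but for $\nu<1/2$ the deficit you need to absorb is $E(1-4\nu^2)\|\psi_+\|^2$, which grows linearly in $E$, whereas $E'(\alpha)/\alpha=\nu\sqrt{(1-\nu)(3-\nu)}$ is a fixed constant. The kinetic term cannot help: it has already been committed, with coefficient exactly $(1-\nu)^2$, to the target inequality. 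So the case split fails for large $E$ when $\nu<1/2$.

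\textbf{The remainder term.} This is the more serious gap. Testing $R^V_E$ with the maximiser $\chi_0$ of $R^{\alpha V}_{E'(\alpha)}$ (no rescaling) gives, after the $V$-terms cancel,
\[
R^V_E(\psi_+)-\tfrac{1}{\alpha}R^{\alpha V}_{E'(\alpha)}(\psi_+)\;\geq\;-(1-\tfrac1\alpha)\pscal{\chi_0,\sqrt{1-\Delta}\,\chi_0}+\Bigl(\tfrac{E'(\alpha)}{\alpha}-E\Bigr)\|\chi_0\|^2.
\]
The first term on the right is negative with coefficient $(1-\nu)^2$, involves $\chi_0$ rather than $\psi_+$, and does not carry a factor of $E$. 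There is no evident Cauchy--Schwarz argument that converts $\pscal{\chi_0,\sqrt{1-\Delta}\chi_0}$ into a multiple of $E\|\psi_+\|^2$; indeed $\chi_0$ depends only on $E'(\alpha)$, not on $E$, so the remainder does not vanish as $E\to0$. A rescaling $\chi_0\mapsto t\chi_0$ does not cure this. (There is also a side issue: applying~\eqref{eq:Hardy_free} to $\alpha V$ requires $\alpha\sup(V)-1<\sqrt{1-\alpha^2\nu^2}$, which can fail for $\alpha>1$ under the standing hypothesis $\sup(V)<1+\sqrt{1-\nu^2}$.)

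\textbf{How the paper proceeds.} The paper first uses the monotonicity of $\tilde q_E$ in $V$ (obvious from its variational formulation) to replace $V$ by $-\nu/|x|$. It then starts from the critical Hardy inequality~\eqref{eq:Hardy-type-free} at $E=0$ and inserts the parameter $E$ via the resolvent expansion $(A+E)^{-1}\geq A^{-1}-EA^{-2}$. The crux is the operator bound $\|B_0\|\leq 2$ for
\[
B_0=\Bigl(\Lambda_0^-(\sqrt{1-\Delta}+|x|^{-1})\Lambda_0^-\Bigr)^{-1}\Lambda_0^-|x|^{-1},
\]
which follows from Lieb's positivity $\sqrt{1-\Delta}\,|x|^{-1}+|x|^{-1}\sqrt{1-\Delta}\geq 0$ together with Hardy's inequality. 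This produces~\eqref{eq:Hardy-Coulomb_E_general}, and then a single application of Kato's inequality finishes. The upshot is that the paper's route isolates the $E$-dependence cleanly and needs only standard operator inequalities; your scaling $V\mapsto\alpha V$ entangles the $E$- and $V$-dependences in the $R$-term in a way that does not unwind.
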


The bound~\eqref{eq:lower_bound_q_tilde} can be improved for $\max(0,\sup(V)-1)<E\leq\sqrt{1-\nu^2}$ but it is sufficient for our purposes. From the lemma we obtain that the maximal domain of $\tilde q_E$ is $\Lambda_0^+H^{1/2}(\R^3,\C^4)$, hence $C^\ii_c(\R^3\setminus\{0\},\C^2)$ is dense in this domain. The rest of the proof is then exactly the same as in the Talman case. Note that in the equivalent of Lemma~\ref{lem:value_S}, the corresponding supremum $\tilde S(V,\phi)$ is always finite since the quadratic form is this time defined on $H^{1/2}$.

\medskip

It therefore remains to provide the

\begin{proof}[Proof of Lemma~\ref{lem:domain_is_H_12}]
Using~\eqref{eq:Hardy_free} for $V=-\nu/|x|$ and passing to the limit $\nu\to1$, we get the following Hardy-type inequality~\cite{DolEstSer-00}
\begin{multline}
\pscal{\psi_+,\sqrt{1-\Delta}\,\psi_+}-\int_{\R^3}\frac{|\psi_+|^2}{|x|}\\
+\pscal{\Lambda_0^-\frac{1}{|x|}\psi_+, \left(\Lambda_0^-(\sqrt{1-\Delta}+|x|^{-1})\Lambda_0^-\right)^{-1}\Lambda_0^-\frac{1}{|x|}\psi_+}\geq0
\label{eq:Hardy-type-free}
\end{multline}
for all $\psi_+ \in \Lambda_0^-H^{1/2}$. Now we would like a similar inequality with an additional $E>\max(0,\sup(V)-1)$ in the denominator of the second term. We start by writing
\begin{multline*}
\pscal{\Lambda_0^-\frac{1}{|x|}\psi_+, \left(\Lambda_0^-(\sqrt{1-\Delta}+E+|x|^{-1})\Lambda_0^-\right)^{-1}\Lambda_0^-\frac{1}{|x|}\psi_+}\\
\geq\pscal{\Lambda_0^-\frac{1}{|x|}\psi_+, \left(\Lambda_0^-(\sqrt{1-\Delta}+|x|^{-1})\Lambda_0^-\right)^{-1}\Lambda_0^-\frac{1}{|x|}\psi_+}\\
-E\norm{\left(\Lambda_0^-(\sqrt{1-\Delta}+|x|^{-1})\Lambda_0^-\right)^{-1}\Lambda_0^-\frac{1}{|x|}\psi_+}^2
\end{multline*}
since $(A+E)^{-1}\geq A^{-1}-EA^{-2}$. Now we claim that the operator
\begin{equation}
B_E:=\left(\Lambda_0^-(\sqrt{1-\Delta}+E+|x|^{-1})\Lambda_0^-\right)^{-1}\Lambda_0^-\frac{1}{|x|}
\label{eq:operateur_horrible}
\end{equation}
is bounded as follows:
$$\|B_E\|\leq 2.$$
Using~\eqref{eq:Hardy-type-free}, this eventually implies
\begin{multline}
\pscal{\psi_+,\sqrt{1-\Delta}\,\psi_+}\\
+\pscal{\Lambda_0^-\frac{1}{|x|}\psi_+, \left(\Lambda_0^-(\sqrt{1-\Delta}+E+|x|^{-1})\Lambda_0^-\right)^{-1}\Lambda_0^-\frac{1}{|x|}\psi_+}\\
\geq\int_{\R^3}\frac{|\psi_+|^2}{|x|}-4E\norm{\psi_+}^2.
\label{eq:Hardy-Coulomb_E_general}
\end{multline}

Before providing the proof that $B_E$ in~\eqref{eq:operateur_horrible} is bounded, we  first come back to $\tilde q_E(\psi_+)$. We note that it is a monotone function of the potential $V$. This is perhaps not so obvious from the formula~\eqref{eq:q_tilde_E}, but it becomes clear if we recall that
\begin{multline*}
\tilde q_E(\psi_+)=\sup_{\psi_-\in\Lambda_0^-F_0^-}\bigg\{\pscal{\psi_++\psi_-,D_0(\psi_++\psi_-)}\\+\int_{\R^3}V|\psi_++\psi_-|^2-E(\|\psi_+\|^2+\|\psi_-\|^2)\bigg\}. 
\end{multline*}
So for a lower bound, we may replace $V$ by $-\nu/|x|$ and we obtain, for every $E>0$,
\begin{align*}
 \tilde q_E(\psi_+)\geq& \pscal{\psi_+,\sqrt{1-\Delta}\,\psi_+}-\nu\int_{\R^3}\frac{|\psi_+|^2}{|x|}-E\int_{\R^3}|\psi_+|^2\\
&+\nu^2\pscal{\Lambda_0^-\frac{1}{|x|}\psi_+, \left(\Lambda_0^-(\sqrt{1-\Delta}+E+|x|^{-1})\Lambda_0^-\right)^{-1}\Lambda_0^-\frac{1}{|x|}\psi_+}\\
\geq& (1-\nu^2)\pscal{\psi_+,\sqrt{1-\Delta}\,\psi_+}-\nu(1-\nu)\int_{\R^3}\frac{|\psi_+|^2}{|x|}-4E\nu^2\int_{\R^3}|\psi_+|^2\\
\geq& (1-\nu)\left(1+\nu-\frac{\pi}{2}\nu\right)\pscal{\psi_+,\sqrt{1-\Delta}\,\psi_+}-4E\nu^2\int_{\R^3}|\psi_+|^2.
\end{align*}
In the second inequality we have used~\eqref{eq:Hardy-Coulomb_E_general} and in the last one we have used Kato's inequality~\eqref{eq:Kato}. Using $\pi/2\leq 2$ yields the simpler inequality~\eqref{eq:lower_bound_q_tilde}.

So it remains to prove that $B_E$ in~\eqref{eq:operateur_horrible} is bounded and since
$$B_E=\frac{\Lambda_0^-(\sqrt{1-\Delta}+|x|^{-1})\Lambda_0^-}{\Lambda_0^-(\sqrt{1-\Delta}+E+|x|^{-1})\Lambda_0^-}B_0$$
where the left side has a norm $\leq1$ by the spectral theorem, it suffices to do it for $E=0$. 
We compute
\begin{multline*}
\left\{\Lambda_0^-\left(\sqrt{1-\Delta}+\frac{1}{|x|}\right)\Lambda_0^-\right\}^2
=\Lambda_0^-(1-\Delta)+\left(\Lambda_0^-\frac{1}{|x|}\Lambda_0^-\right)^2\\
+\Lambda_0^-\left(\frac{1}{|x|}\sqrt{1-\Delta}+\sqrt{1-\Delta}\frac{1}{|x|}\right)\Lambda_0^-.
\end{multline*}
It was proved by Lieb in~\cite{Lieb-84} that 
$$\frac{1}{|x|}\sqrt{1-\Delta}+\sqrt{1-\Delta}\frac{1}{|x|}\geq0$$
and therefore we have the operator inequality 
\begin{equation*}
\left\{\Lambda_0^-\left(\sqrt{1-\Delta}+\frac{1}{|x|}\right)\Lambda_0^-\right\}^2
\geq\Lambda_0^-(1-\Delta).
\end{equation*}
The inverse being operator monotone, we deduce that
$$\left\{\Lambda_0^-\left(\sqrt{1-\Delta}+\frac{1}{|x|}\right)\Lambda_0^-\right\}^{-2}\leq \frac{\Lambda_0^-}{1-\Delta}$$
or, equivalently, that 
$$\left\|\left(\Lambda_0^-\left(\sqrt{1-\Delta}+\frac{1}{|x|}\right)\Lambda_0^-\right)^{-1}\Lambda_0^-\sqrt{1-\Delta}\right\|\leq 1.$$
So we can write
$$B_0=\left(\Lambda_0^-(\sqrt{1-\Delta}+|x|^{-1})\Lambda_0^-\right)^{-1}\Lambda_0^-\sqrt{1-\Delta}\frac{1}{\sqrt{1-\Delta}}\frac{1}{|x|}$$
which proves using Hardy's inequality that
$$\|B_0\|\leq \norm{\frac{1}{\sqrt{1-\Delta}}\frac{1}{|x|}}\leq 2.$$
This ends the proof of Lemma~\ref{lem:domain_is_H_12}.
\end{proof}

\section{Proof of Theorem~\ref{thm:quadratic form_domain} on the subcritical domain $\cV$}\label{sec:proof_quadratic form_domain}

\subsection{Proof that $C^\ii_c(\R^3\setminus\{0\},\C^2)$ is dense in $\cV$}

We are going to adapt \cite[proof of Theorem 4.1]{Zhikov-98}. Zhikov considers a scalar function $\phi$, with $|\nabla \phi|^2$ instead of $|\sigma\cdot\nabla\phi|^2$. A crucial step in his proof is to approximate $\phi$ by a function $\phi_\eps$ bounded in a neighbourhood of $0$. This is easily done in his case, just by taking $\phi_\eps= \phi\1(|\phi|\leq\eps^{-1})$, with $\eps$ small. In our case this simple argument fails. Instead we change our unknown and remove the Pauli matrices.

For every $\phi\in L^2(\R^3, \C^2)$, there is a unique $u$ in the homogeneous Sobolev space $\dot{H}^1(\R^3, \C^2)$ such that $\phi= (\sigma\cdot\nabla)u$. Then,
$$ \|\phi\|_{\cV}^2= \int_{\R^3} \frac{|\Delta u|^2}{2-V} + \int_{\R^3}|\nabla u|^2\,.$$
Note that the matrices $\sigma_k$ have disappeared. Now, for $0<\varepsilon <1$ we let $\phi_\eps=(\sigma\cdot\nabla) u_\eps$ where $u_\varepsilon$ is the solution in $\dot{H}^1(\R^3,\C^2)$
of the equation
$$\Delta u_\varepsilon(x)= \1(|x|\geq\eps)\, \Delta u(x)\,.$$
Obviously $\phi_\eps\in \cV$ and
$$\norm{\phi-\phi_\eps}^2_\cV= \int_{B_\eps}\frac{|\Delta u|^2}{2-V}+\int_{\R^3}|\nabla(u-u_\eps)|^2$$
where $B_\eps$ is the ball of radius $\eps$. The first term converges to zero and the second term can be written in the form
\begin{align*}
\int_{\R^3}|\nabla(u-u_\eps)|^2&=-\int_{B_\eps}(u-u_\eps)\overline{\Delta u}\\
&=\frac{1}{4\pi} \int_{B_\varepsilon}\int_{B_\varepsilon}  \frac{\overline{\Delta u(x)} \Delta u(y)}{|x-y|}\,dx\,dy\lesssim\norm{\Delta u}^2_{L^{6/5}(B_\eps)}
\end{align*}
by the Hardy-Littlewood-Sobolev inequality. Now 
$$\norm{\Delta u}^2_{L^{6/5}(B_\eps)}\leq \big\|(2-V)^{-1/2}\Delta u\big\|^2_{L^{2}(B_\eps)}\norm{2-V}^{1/2}_{L^{3/2}(B_\eps)}$$
which tends to zero.

We have proved that $\phi_\eps\to\phi$ strongly in $\cV$. The function $\phi_\eps$ is well behaved close to the origin. Indeed, for each $0<\varepsilon<1$, $u_{\varepsilon}$ is harmonic on $B(0,\varepsilon)$, so there is $M_\varepsilon>0$ such that $|\phi_\eps|\leq |\nabla u_\varepsilon| \leq M_\varepsilon$ on $B_{\eps/2}$.
Then we can follow \cite{Zhikov-98}. For $0<\delta<\varepsilon/2$ we consider the cut-off function $\theta_\delta (x):= \max\big(0, \min( 1, \frac{2|x|}{\delta}-1)\big)$ and let $\phi^\delta_\varepsilon(x)=\theta_\delta(x)\phi_\varepsilon(x)$. We write
\begin{align*}
||\phi_\varepsilon -\phi_\varepsilon^\delta||^2_{\cV}&= \int_{B_\delta}\frac{|(\sigma\cdot\nabla)(1-\theta_\delta)\phi_\varepsilon|^2}{2-V} + (1-\theta_\delta)^2|\phi_\varepsilon|^2\\
&\le 2 \int_{B_\delta}\frac{|(\sigma\cdot\nabla)\phi_\varepsilon|^2}{2-V}+ 2\int_{B_\delta}\frac{|\nabla \theta_\delta|^2|\phi_\varepsilon|^2}{2-V}+ \int_{B_\delta}|\phi_\varepsilon|^2\\
&\le 8 M_\varepsilon^2\frac{4\pi\delta}{3}+\int_{B(0, \delta)}\frac{2 |(\sigma\cdot\nabla)\phi_\varepsilon|^2}{2-V}+|\phi_\varepsilon|^2,
\end{align*}
and for a fixed $\varepsilon>0$, this quantity tends to $0$ as $\delta\to 0$. To end the proof, note that $\phi^\delta_\varepsilon$ vanishes on $B(0, \delta/2)$, so we can regularize it using a convolution product, which ends the proof that $C^\ii_c(\R^3\setminus\{0\},\C^2)$ is dense in $\cV$.

\begin{remark}\label{rmk:simpler_proof_Coulomb}
If we make the further assumption that $V(x)\leq -\eta/|x|$ in a neighborhood of the origin, we can use a much simpler argument. Namely we replace $\phi$ by $\theta_\delta\phi$ with the same $\theta_\delta$ as before and estimate
$$\int_{\R^3}\frac{|\sigma\cdot\nabla(1-\theta_\delta)\phi|^2}{2-V}\leq 2\int_{\R^3}\frac{(1-\theta_\delta)^2|\sigma\cdot\nabla\phi|^2}{2-V}+2\int_{|x|\leq\delta}\frac{(\theta'_\delta)^2|\phi|^2}{2-V}.$$
The first term goes to 0 by the dominated convergence theorem and the second can be bounded by
$$\int_{|x|\leq\delta}\frac{(\theta'_\delta)^2|\phi|^2}{2-V}\lesssim\int_{|x|\leq\delta}\frac{|x|(\theta'_\delta)^2|\phi|^2}{1+|x|}\lesssim\int_{|x|\leq\delta}\frac{|\phi|^2}{|x|}$$
since $|x|\theta_\delta'$ is uniformly bounded.
\end{remark}

\subsection{Proof that $\cV\subset H^{1/2}$}
Using again our assumption that $V$ is bounded from below by the Coulomb potential, we see that
$$\norm{\phi}^2_\cV\geq \int_{\R^3}\frac{|\sigma\cdot \nabla\phi(x)|^2}{2+1/|x|}\,dx.$$
Hence $\phi$ is in $H^1$ outside of the origin, and $|x|^{1/2}\nabla\phi$ is in $L^2$ in a neighborhood of the origin. This turns out to imply that $\phi\in H^{1/2}$, using the following Hardy-type inequality for the part close to the origin.

\begin{lemma}[Another Hardy-type inequality]
We have
\begin{equation}
\int_{\R^3}\big|(-\Delta)^{1/4}\phi(x)\big|^2\,dx\leq \frac\pi2\int_{\R^3}|x|\,|\sigma\cdot \nabla\phi(x)|^2\,dx,
\end{equation}
for every $\phi$ in $C^\ii_c(\R^3\setminus\{0\},\C^2)$.
\end{lemma}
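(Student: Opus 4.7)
The plan is to eliminate the Pauli matrices via the substitution $\psi := \sigma\cdot\nabla\phi$, which reduces the inequality to the sharp Kato inequality $\sqrt{-\Delta}\geq (2/\pi)|x|^{-1}$ already recalled in~\eqref{eq:Kato}. For $\phi \in C^\infty_c(\R^3 \setminus \{0\},\C^2)$, the same is true of $\psi$, and since $(\sigma\cdot\nabla)^2=\Delta$ and $(\sigma\cdot\xi)^2=|\xi|^2$, Plancherel's theorem combined with $\hat\psi(\xi)=i(\sigma\cdot\xi)\hat\phi(\xi)$ gives
\begin{equation*}
\int_{\R^3}|(-\Delta)^{1/4}\phi|^2\,dx = \int_{\R^3}|\xi|\,|\hat\phi(\xi)|^2\,d\xi = \int_{\R^3}|\xi|^{-1}|\hat\psi(\xi)|^2\,d\xi = \pscal{\psi,(-\Delta)^{-1/2}\psi}.
\end{equation*}

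Next, I would reformulate the inequality as an operator-norm estimate. Setting $A := |x|^{-1/2}(-\Delta)^{-1/2}|x|^{-1/2}$, the desired bound becomes
\begin{equation*}
\pscal{|x|^{1/2}\psi,\, A\,|x|^{1/2}\psi}\leq \tfrac{\pi}{2}\big\||x|^{1/2}\psi\big\|_{L^2}^2,
\end{equation*}
which is equivalent to $\|A\|_{L^2\to L^2}\leq \pi/2$. To compute this norm I would write $A = T T^*$ with $T := |x|^{-1/2}(-\Delta)^{-1/4}$, so that
\begin{equation*}
\|A\| = \|T^*T\| = \big\|(-\Delta)^{-1/4}|x|^{-1}(-\Delta)^{-1/4}\big\|.
\end{equation*}
Substituting $g=(-\Delta)^{1/4}f$ in the variational formula for the norm on the right-hand side yields
\begin{equation*}
\|A\| = \sup_{f\neq 0}\frac{\pscal{f,|x|^{-1}f}}{\pscal{f,\sqrt{-\Delta}\,f}},
\end{equation*}
and by the sharp form of Kato's inequality~\eqref{eq:Kato} this supremum equals exactly $\pi/2$, closing the argument.

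The only point that requires some care is the justification of these manipulations on our class of test functions. Since $\psi=\sigma\cdot\nabla\phi \in C^\infty_c(\R^3\setminus\{0\},\C^2)$, its Fourier transform is a Schwartz function that vanishes at $\xi=0$ to first order; therefore $|\xi|^{-1}|\hat\psi|^2$ is integrable near the origin and all the Fourier multipliers involved act unambiguously. I do not expect any genuine obstacle here: the substance of the argument lies in having identified the right change of unknown that turns the weighted inequality into a dual form of Kato's bound.
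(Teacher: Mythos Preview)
Your proof is correct and follows essentially the same route as the paper: both make the substitution $\psi=\sigma\cdot\nabla\phi$, rewrite the left-hand side as $\pscal{\psi,|p|^{-1}\psi}$ using $(\sigma\cdot\xi)^2=|\xi|^2$, and then reduce the remaining inequality $\pscal{\psi,|p|^{-1}\psi}\leq\tfrac{\pi}{2}\int|x|\,|\psi|^2$ to Kato's inequality~\eqref{eq:Kato}. The only cosmetic difference is in this last reduction: the paper observes in one line that the inequality is precisely Kato's inequality applied to $\hat\psi$ (Fourier duality swaps the roles of $|x|$ and $|p|$), whereas you route the same fact through the operator identity $\|TT^*\|=\|T^*T\|$ with $T=|x|^{-1/2}(-\Delta)^{-1/4}$; both arguments are equivalent and yield the sharp constant.
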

\begin{proof}
Using that $(\sigma\cdot p)^2=|p|^2$ with $p=-i\nabla$, we can write $\phi=|p|^{-2}\sigma\cdot p(\sigma\cdot p)\phi$. Calling $\eta=\sigma\cdot\nabla\phi$, it remains to show the inequality
$$\pscal{\eta,|p|^{-1}\eta}_{L^2}=\norm{|p|^{-3/2}\sigma\cdot p\eta}^2_{L^2}\leq \frac\pi2\int_{\R^3}|x|\,|\eta(x)|^2\,dx$$
which is just Kato's inequality~\eqref{eq:Kato} for $\widehat{\eta}$.
\end{proof}

\section{Proof of Theorem~\ref{thm:quadratic form_domain_critical} on the critical domains $\cW_{\rm C}$ and $\cW$}\label{sec:proof_thm_quadratic form_domain_critical}

\subsection{A pointwise estimate on $\phi_0$ and $\phi_1$}
We start by giving a useful pointwise estimate on $\phi_0$ and $\phi_1$ at the origin.

\begin{lemma}[Pointwise estimates on the spherical averages $\phi_0$ and $\phi_1$]\label{lem:estim_spherical_average}
Let $\phi\in\cW_{\rm C}$ and let $\phi_0=\1_{\{0\}}(\sigma\cdot L)\phi$ and $\phi_1=\sigma\cdot\omega_x\1_{\{-2\}}(\sigma\cdot L)\phi$. Then we have the pointwise estimate
\begin{equation}
\forall r\leq e^{-1},\qquad |\phi_0(r)|+|\phi_1(r)|\lesssim\,\frac{\sqrt{\log(1/r)}}{r}\left(\sqrt{q_0^{\rm C}(\phi)}+\|\phi\|_{L^2}\right).
\label{eq:spherical_phi}
\end{equation}
\end{lemma}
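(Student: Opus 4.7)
The plan is to reduce the problem to a one-dimensional weighted Sobolev estimate by setting $f_j(r):=r\phi_j(r)$ for $j=0,1$. With this substitution, $r\phi_j'(r)+\phi_j(r)=f_j'(r)$, and the equivalent norm given in the second part of Theorem~\ref{thm:carres_q_C} shows that
\begin{equation*}
\int_0^\infty \frac{r}{1+r}\,|f_j'(r)|^2\,dr\;\lesssim\;\|\phi\|_{L^2}^2+q_0^{\rm C}(\phi),\qquad j=0,1,
\end{equation*}
while orthogonality of the spherical decomposition of $\phi$ gives
\begin{equation*}
\int_0^\infty |f_j(r)|^2\,dr=\frac{1}{4\pi}\|\phi_j\|_{L^2(\R^3)}^2\;\lesssim\;\|\phi\|_{L^2}^2.
\end{equation*}

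The core of the argument is then a standard ``fundamental theorem of calculus plus Cauchy--Schwarz'' computation, exploiting that the weight $r/(1+r)$ is bounded below away from zero on any set $[1,2]$ but degenerates linearly at the origin. For $r\leq e^{-1}$ and any $\rho\in(1,2)$, I would write $f_j(r)=f_j(\rho)-\int_r^\rho f_j'(s)\,ds$ and apply Cauchy--Schwarz with the weight $s/(1+s)$:
\begin{equation*}
\left|\int_r^\rho f_j'(s)\,ds\right|^2\leq \left(\int_r^\rho \frac{s}{1+s}|f_j'(s)|^2\,ds\right)\left(\int_r^\rho \frac{1+s}{s}\,ds\right).
\end{equation*}
The second factor equals $\log(\rho/r)+\rho-r\leq\log(1/r)+3\lesssim\log(1/r)$ since $r\leq e^{-1}$. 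Averaging over $\rho\in(1,2)$ and bounding $\int_1^2|f_j(\rho)|^2\,d\rho\leq\|f_j\|_{L^2}^2\lesssim\|\phi\|_{L^2}^2$ then yields
\begin{equation*}
|f_j(r)|^2\;\lesssim\;\|\phi\|_{L^2}^2+\log(1/r)\,\bigl(\|\phi\|_{L^2}^2+q_0^{\rm C}(\phi)\bigr)\;\lesssim\;\log(1/r)\bigl(\|\phi\|_{L^2}^2+q_0^{\rm C}(\phi)\bigr).
\end{equation*}
Dividing by $r$ gives the claimed pointwise bound on $|\phi_j(r)|=|f_j(r)|/r$.

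There is no real obstacle here; the only point requiring minor care is a preliminary regularity check to justify the pointwise identity $f_j(r)=f_j(\rho)-\int_r^\rho f_j'(s)\,ds$. This follows by first verifying it for a dense class of smooth $\phi$ (for instance those in $C^\infty_c(\R^3\setminus\{0\},\C^2)$) and then passing to the limit, or equivalently by noting that the weighted estimate on $f_j'$ combined with $f_j\in L^2$ implies $f_j\in H^1_{\rm loc}((0,\infty))$, so $f_j$ has a locally absolutely continuous representative. Once the identity is available, the proof reduces to the elementary two-term estimate above applied identically to $f_0$ and $f_1$.
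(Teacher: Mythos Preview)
Your proof is correct and follows essentially the same route as the paper: set $v=r\phi_j$, use the fundamental theorem of calculus together with Cauchy--Schwarz against the weight $s/(1+s)$, bound the resulting integral $\int_r^\rho\frac{1+s}{s}\,ds$ by a multiple of $\log(1/r)$, and average the anchor point over a fixed interval to control $|v(\rho)|$ by the $L^2$ norm. The only cosmetic differences are that the paper averages over $r'\in(1/2,1)$ rather than $\rho\in(1,2)$ and invokes Lebesgue's differentiation theorem for the pointwise identity where you (more carefully) note that $f_j\in H^1_{\rm loc}((0,\infty))$.
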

\begin{proof}
Let $v=r\phi_0$ which belongs to $L^2(0,\ii)$ since $\phi_0(|x|)\in L^2(\R^3,\C^2)$. Using Lebesgue's differential theorem, we get, for $0<r<1/2<r'<1$,
\begin{align*}
|v(r)-v(r')|\leq& \left(\int_r^{r'}\frac{s}{1+s}|v'|^2ds\right)^{1/2}\left(\int_r^{r'}\frac{1+s}{s}\,ds\right)^{1/2}\\
\lesssim& \|\phi\|_{\cW_{\rm C}} \sqrt{1-r+\log(1/r)} 
\end{align*}
which gives the result, after integrating over $r'\in(1/2,1)$.
\end{proof}

\subsection{Proof that $C^\ii_c(\R^3\setminus\{0\},\C^2)$ is dense in $\cW_{\rm C}$}
Let $\phi\in\cW_{\rm C}$. For the functions $\phi_+$ and $\phi_-$ we can apply Theorem~\ref{thm:quadratic form_domain} (or even Remark~\ref{rmk:simpler_proof_Coulomb}). Only $\phi_0$ and $\phi_1$ need a new argument. Since the norms are the same for those two, we only deal with $\phi_0$ and call it $\phi$ throughout the proof, for shortness. 

First we approximate $\phi=\phi_0$ by a function supported outside of a neighborhood of the origin. We use 
$\phi_n=\theta_n\phi$
with $\theta_n$ a radial function equal to 0 close to $0$, equal to 1 on $[e^{-1},\ii)$ and which converges to~1 almost surely. 
We have to estimate the norm of $\phi-\phi_n=(1-\theta_n)\phi$, which is
\begin{multline*}
\int_0^\ii\frac{r}{1+r}\Big|(1-\theta_n)(r\phi'+\phi)-r\phi\theta_n'\Big|^2\,dr\\ \leq 2\int_0^\ii\frac{r}{1+r}(1-\theta_n)^2\left|r\phi'+\phi\right|^2\,dr+2\int_0^\ii\frac{r^3\theta_n'(r)^2}{1+r}|\phi|^2\,dr. 
\end{multline*}
The term involving $1-\theta_n$ goes to zero by the dominated convergence theorem. For the second term we cannot use a simple $\theta_n$ such as $\theta(nr)$ because we are lacking estimates on $\phi$. Inserting the bound~\eqref{eq:spherical_phi} gives
\begin{align*}
\int_0^{e^{-1}}\frac{r^3\theta_n'(r)^2}{1+r}|\phi|^2\,dr\lesssim \int_0^{e^{-1}}\theta_n'(r)^2r\log(1/r)\,dr
\end{align*}
which is divergent if we take a function in the form $\theta(nr)$. Using the fact that $(r\log(1/r))^{-1}$ is not integrable at $r=0$, it is possible to construct a $\theta_n$ such that the right side goes to 0.
Let 
\begin{equation}
 \xi_n(r)=\begin{cases}
\frac1n\left(\frac{1}{\alpha_n\log(1/\alpha_n)}-e\right)\frac{r-\alpha_n/2}{\alpha_n}&\text{for $\alpha_n/2\leq r\leq \alpha_n$,}\\
\frac1n\left(\frac{1}{r\log(1/r)}-e\right)&\text{for $\alpha_n\leq r\leq e^{-1}$,}\\
0&\text{for $r\in[0,\alpha_n/2]\cup[e^{-1},\ii)$.}\\
\end{cases}
 \label{eq:def_xi_n}
\end{equation}
where
$\alpha_n=\exp(-e^n)\to0$ is chosen such as to have
$$\log(\log(1/\alpha_n))=\int_{\alpha_n}^{e^{-1}}\frac{ds}{s\log(1/s)}=n.$$
Then we have
\begin{align*}
\int_0^{e^{-1}} \xi_n(r)\,dr=&\frac{\alpha_n}{8n}\left(\frac{1}{\alpha_n\log(1/\alpha_n)}-e\right)+\frac1n\int_{\alpha_n}^{e^{-1}}\frac{1}{r\log(1/r)}\,dr-\frac{1-e\alpha_n}{n}\\
=&1+O(1/n)
\end{align*}
and
\begin{align*}
&\int_0^{e^{-1}} r\log\left(\frac1r\right)\,\xi_n(r)^2\,dr\\
&\qquad\qquad=\frac1{n^2}\left(\frac{1}{\alpha_n\log(1/\alpha_n)}-e\right)^2\alpha_n^2\int_{\frac{1}2}^{1} r\log\left(\frac1{\alpha_n r}\right)\left(r-\frac12\right)^2dr\\
&\qquad\qquad\qquad+\frac1{n^2}\int_{\alpha_n}^{e^{-1}}r\log\left(\frac1r\right)\left(\frac{1}{r\log(1/r)}-e\right)^2\,dr\\
&\qquad\qquad =\frac1{n}+O(1/n^2).
\end{align*}
Therefore we can take
$$\theta_n(r)=\frac{\dps\int_0^r\xi_n(r)\,dr}{\dps\int_0^\ii \xi_n(r)\,dr}.$$

As a last step, since the function $\theta_n\phi$ is now supported outside of a neighborhood of the origin, it can be approximated by functions in $C^\ii_c(\R^3\setminus\{0\},\C^2)$ by usual convolution arguments.

\subsection{Proof that $C^\ii_c(\R^3\setminus\{0\},\C^2)$ is dense in $\cW$}
The proof for an arbitrary potential $V$ is more complicated. Since $\phi_\pm\in \cV$ we can use Theorem~\ref{thm:quadratic form_domain} for those functions and we only have to approximate $\phi_0$ and $\phi_1$. 
Writing $\phi_0=\theta\phi_0+(1-\theta)\phi_0$ for a smooth radial function $\theta$ of compact support, which equals 1 in a neighborhood of 0, we know that $(1-\theta)\phi_0\in H^1\subset\cW$. So we can prove the result for $\phi_0$ supported in, say, the interval $(0,e^{-1})$, an assumption that we make for the rest of the proof. For simplicity of notation we just assume in the rest of the proof that $\phi=\phi(|x|)$ is radial and supported on $(0,e^{-1})$. 
In radial coordinates, our norm is then equivalent to
\begin{equation}
\int_0^{e^{-1}}g(r)r^2\big|\phi'(r)\big|^2dr+\int_0^{e^{-1}}r\big|r\phi'(r)+\phi(r)\big|^2dr+\int_0^{e^{-1}}r^2\big(1+h(r)\big)|\phi(r)|^2dr
\label{eq:norm_critical_radial}
\end{equation}
where
$$g(r)=\frac1{4\pi}\int_{\mathbb{S}^2}\frac{d\omega}{1-V(r\omega)}-\frac{r}{1+r}\geq0,\qquad h(r)=\frac1{4\pi}\int_{\mathbb{S}^2}V(r\omega)\,d\omega+\frac{1}{r}\geq0.$$
The difficulty is of course that we have little information on $g$ and $h$, except from the fact that $g$ and $rh$ are bounded close to 0. 

As a first step we approximate $\phi$ by a function $\phi_\delta$ on which we have more information. Let $0<\delta<e^{-1}$ and $u_\delta$ be the unique solution of the elliptic minimization problem
\begin{multline}
\inf_{u(\delta)=\phi(\delta)}\bigg\{\int_0^{\delta}g(r)r^2\big|u'(r)\big|^2\,dr+\int_0^{\delta}r\big|ru'(r)+u(r)\big|^2\,dr\\+\int_0^{\delta}r^2(1+h(r))|u(r)|^2\,dr\bigg\}.
\label{eq:min_u_delta}
\end{multline}
Multiplying $\phi$ by a phase we can assume that $\phi(\delta)>0$ and then we conclude that $u_\delta\geq0$ on $[0,\delta]$. This is because the functional in the parenthesis decreases when $u$ is replaced by $|u|$. We then let $\phi_\delta=\phi(r)\1(r\geq\delta)+u(r)\1(r\leq\delta)$ which satisfies $\phi_\delta\in\cW$ with 
\begin{multline*}
\norm{\phi-\phi_\delta}_\cW^2\\
\lesssim \int_0^{\delta}g(r)r^2\big|\phi'(r)\big|^2\,dr+\int_0^{\delta}r\big|r\phi'(r)+\phi(r)\big|^2\,dr+\int_0^{\delta}r^2\big(1+h(r)\big)|\phi(r)|^2\,dr. 
\end{multline*}
This tends to zero when $\delta\to0$. 

Next we are going to work with $\phi_\delta$, using the additional properties coming from the fact that $\phi_\delta=u_\delta$ solves the variational problem~\eqref{eq:min_u_delta} on $[0,\delta]$. To shorten our notation, we simply write $u=u_\delta$. The function $u$ solves in a weak sense the degenerate elliptic ordinary differential equation
\begin{equation}
 -\Big(r^2(g(r)+r)u'(r)\Big)'=r(1-r-rh(r))u(r) 
 \label{eq:ODE_u}
\end{equation}
and satisfies the Neumann-type boundary condition that 
$$\lim_{r\to0}r^2(g(r)+r)u'(r)+r^2u(r)=\lim_{r\to0}r^2(g(r)+r)u'(r)=0.$$
Indeed, note that 
\begin{equation}
 |u(r)|\leq C\|\phi\|_\cW\frac{\sqrt{\log(1/r)}}{r} 
 \label{eq:estim_u}
\end{equation}
by Lemma~\ref{lem:estim_spherical_average} since $\phi_\delta\in\cW$, hence $r^2u(r)\to0$ at the origin. Thus, integrating~\eqref{eq:ODE_u} we find that 
\begin{align*}
-r^2(g(r)+r)u'(r)&=\int_0^rs\big(1-s-sh(s)\big)u(s)\,ds\\
&\lesssim \int_0^r\sqrt{\log(1/s)}\,ds=r\sqrt{\log(1/r)}+o(r\sqrt{\log(1/r)}).
\end{align*}
Multiplying by $u(r)\geq 0$ and using~\eqref{eq:estim_u} we find
$$-r^2(g(r)+r)\big(u^2\big)'\lesssim \log(1/r)$$
and therefore
\begin{equation}
 u(r)\lesssim \left(\int_r^\delta\frac{\log(1/s)\,ds}{s^2(s+g(s))}+\phi(\delta)^2\right)^{1/2}\lesssim \left(\int_r^\delta\frac{\log(1/s)\,ds}{s^2(s+g(s))}\right)^{1/2} 
 \label{eq:final_esimate_u}
\end{equation}
for $r\leq\delta/2$. Note that the integral on the right diverges as $r\to0$ since $g$ is bounded and 
\begin{equation}
\int_r^\delta\frac{\log(1/s)\,ds}{s^2(s+g(s))}\geq \frac{1}{\delta+\|g\|_{L^\ii}}\int_r^\delta\frac{\log(1/s)\,ds}{s^2}\sim_{r\to0} \frac{1}{\delta+\|g\|_{L^\ii}}\frac{\log(1/r)}{r}.
\label{eq:estim_below_F}
\end{equation}
The estimate~\eqref{eq:final_esimate_u} is better than~\eqref{eq:estim_u} if $g(r)$ is much larger than $r$ at the origin. For instance when $g$ has a finite limit at $r=0$, we get $\sqrt{\log(1/r)}/\sqrt{r}$ instead of $\sqrt{\log(1/r)}/r$.

Now we follow the proof of the previous section in the Coulomb case. We need to find a sequence $\theta_n$ which is equal to 0 close to $0$, is equal to 1 on $[\delta/2,\ii)$, converges to~1 almost surely, and such that 
$$\lim_{n\to\ii}\int_0^{\delta}(g(r)+r)r^2u(r)^2\theta_n'(r)^2\,dr=0.$$
Plugging our bound~\eqref{eq:final_esimate_u} on $u$, it is sufficient to show that
$$\lim_{n\to\ii}\int_0^{\delta}(g(r)+r)r^2\left(\int_r^\delta\frac{\log(1/s)\,ds}{s^2(s+g(s))}\right)\theta_n'(r)^2\,dr=0.$$
Following the construction~\eqref{eq:def_xi_n} of $\theta_n$ in the previous section, this is possible when
$$\int_0^\delta \frac{dr}{(g(r)+r)r^2\int_r^\delta\frac{\log(1/s)\,ds}{s^2(s+g(s))}}=+\ii.$$
In order to check that this integral is infinite, we introduce for simplicity
$$F(r):=\int_r^\delta\frac{\log(1/s)\,ds}{s^2(s+g(s))}$$
and rewrite
\begin{align*}
 \int_0^\delta \frac{dr}{(g(r)+r)r^2\int_r^\delta\frac{\log(1/s)\,ds}{s^2(s+g(s))}}&=-\int_0^\delta\frac{F'(r)}{F(r)}\,\frac{dr}{\log(1/r)}\\
 &\geq -\frac{\log F(\delta)}{\log(1/\delta)}+\int_0^\delta\frac{\log F(r)}{r\log^2(1/r)}\,dr
\end{align*}
after integrating by parts. From~\eqref{eq:estim_below_F} we obtain $\log F(r)\geq \log(1/r)+o\big(\log(1/r)\big)$ and therefore the integral on the right diverges, as we wanted.

\subsection{Proof that $\cW_{\rm C}\subset H^s(\R^3,\C^2)$ for $0\leq s<1/2$}

We have shown in Theorem~\ref{thm:quadratic form_domain} that $\cV_{\rm C}\subset H^{1/2}(\R^3,\C^2)$, hence $\phi_\pm\in H^{1/2}(\R^3,\C^2)$ and it suffices to show the result for $\phi=\phi_0(|x|)+\sigma\cdot\omega_x\phi_1(|x|)$. In addition, by density we can assume that $\phi_0$ and $\phi_1\in C^\ii_c(0,\ii)$. Again we can prove the result for $\phi_0$ supported in, say, the interval $(0,e^{-1}/2)$, an assumption that we make for the rest of the proof. 

Now it is actually easier to prove that the compactly-supported $\phi_0(|x|)$ belongs to $W^{1,\alpha}(\R^3)$ for every $1\leq \alpha<3/2$, which implies that it belongs to $H^s(\R^3)$ for $0\leq s<1/2$, by the classical Sobolev embeddings. So we have to prove that
$$\int_0^{e^{-1}} r^2|\phi_0'(r)|^\alpha\,dr=\int_0^{e^{-1}} r^{2-\alpha}|r\phi_0'(r)|^\alpha\,dr<\ii.$$
Note that by Lemma~\ref{lem:estim_spherical_average}
$$\int_0^{e^{-1}} r^{2-\alpha}|\phi_0(r)|^\alpha\,dr\lesssim \int_0^{e^{-1}} r^{2(1-\alpha)}|\log(1/r)|^{\alpha/2}\,dr$$
is convergent under the assumption that $\alpha<3/2$. So it suffices to estimate
\begin{multline*}
\int_0^{e^{-1}} r^{2-\alpha}|r\phi_0'(r)+\phi_0(r)|^\alpha\,dr\\
\leq \left(\int_0^{e^{-1}} r^{\frac{4-3\alpha}{2-\alpha}}\,dr\right)^{\frac{2-\alpha}{2}}\left(\int_0^{e^{-1}} r|r\phi_0'(r)+\phi_0(r)|^2\,dr\right)^{\frac\alpha2},
\end{multline*}
where the first integral is again finite when $\alpha<3/2$.

For $\sigma\cdot\omega_x \phi_1(|x|)$ we have 
$$\big|\nabla\sigma\cdot\omega_x \phi_1(|x|)\big|\leq |\phi_1'(|x|)|+\frac{|\phi_1(|x|)|}{|x|}\leq |\phi_1'(|x|)|+C\frac{\sqrt{\log(1/|x|)}}{|x|^2}$$
and the result is the same. This concludes the proof of Theorem~\ref{thm:quadratic form_domain_critical}.\qed

\section{Proof of the resolvent convergence in Theorem~\ref{thm:distinguished-critical}}\label{sec:proof_thm_distinguished_critical}

We assume for simplicity that $V_\eps=\max(V,-1/\eps)$. The proof for the other case $V_\eps=(1-\eps)V$ is very similar. 
Using the min-max formula for the eigenvalues~\cite{DolEstSer-00} and the fact that $q_{0,V_\eps}\geq q_{0,V}\geq q_0^{\rm C}\geq0$, it is known that 
$$\big(\sup(V)-1,\sqrt{1-\nu^2}\big)\cap\sigma(D_{V_\eps})=\emptyset,\qquad \forall 0<\eps<1.$$
The construction of the distinguished self-adjoint extension in~\cite{EstLos-07} actually provides the information that 
$$(\sup(V)-1,\sqrt{1-\nu^2})\cap\sigma(D_{V})=\emptyset$$ 
as well. 
We therefore fix an energy $E\in (\sup(V)-1,\sqrt{1-\nu^2})$ and prove the norm convergence of the resolvent $(D_{V_\eps}-E)^{-1}$ towards $(D_{V}-E)^{-1}$.  By~\cite[Chap.~IV, Sec.~2.6]{Kato} this implies the convergence in norm of $(D_{V_\eps}-z)^{-1}$ towards $(D_{V}-z)^{-1}$ for any $z\notin\sigma(D_V)$.

As a first step we provide a quantitative bound which follows arguments from~\cite{EstLos-07} but is not explicitly written there. Let $f,g\in L^2(\R^3,\C^2)$ be two vectors and $\phi_\eps,\chi_\eps\in H^1(\R^3,\C^2)$ be such that 
$$\big(D_{V_\eps}-E\big)\begin{pmatrix}
\phi_\eps\\ \chi_\eps
\end{pmatrix}=\begin{pmatrix}
f\\g\end{pmatrix},$$
that is,
\begin{equation}
\begin{cases}
(1-E+V_\eps)\phi_\eps-i\sigma\cdot\nabla\chi_\eps=f,\\
(-1-E+V_\eps)\chi_\eps-i\sigma\cdot\nabla\phi_\eps=g.\\
  \end{cases}
  \label{eq:systeme_epsilon}
\end{equation}
Inserting 
$$\chi_\eps=-\frac{1}{1+E-V_\eps}i\sigma\cdot\nabla\phi_\eps-\frac{g}{1+E-V_\eps}$$
we get the equation in $H^{-1}$
\begin{equation}
(1-E+V_\eps)\phi_\eps -\sigma\cdot\nabla\frac{1}{1+E-V_\eps}\sigma\cdot\nabla\phi_\eps=f+i\sigma\cdot\nabla\frac{1}{1+E-V_\eps}g.
\label{eq:equation_phi_eps}
\end{equation}
Integrating against $\phi_\eps$, we find that
$$\int_{\R^3}(1-E+V_\eps)|\phi_\eps|^2+\int_{\R^3}\frac{|\sigma\cdot\nabla\phi_\eps|^2}{1+E-V_\eps}=\int_{\R^3}\phi_\eps^*f-i\int_{\R^3}\frac{\sigma\cdot\nabla\phi_\eps^*}{1+E-V_\eps}g.$$
We can rewrite this in the form
\begin{multline*}
-E\int_{\R^3}|\phi_\eps|^2-E\int_{\R^3}\frac{|\sigma\cdot\nabla\phi_\eps|^2}{(1-V_\eps)(1+E-V_\eps)}+q_{0,V_\eps}(\phi_\eps)\\=\int_{\R^3}\phi_\eps^*f-i\int_{\R^3}\frac{\sigma\cdot\nabla\phi_\eps^*}{1+E-V_\eps}g.
\end{multline*}
From this we conclude that there exists a constant $C$ (depending on $E$ and $V$ but otherwise independent of $\eps$) such that 
$$\|\phi_\eps\|^2_{L^2}+q_{0,V_\eps}(\phi_\eps)+ \int_{\R^3}\frac{|\sigma\cdot\nabla\phi_\eps|^2}{(1+E-V_\eps)^2}\leq C\left(\|f\|_{L^2}^2+\|g\|_{L^2}^2\right).$$
Since $(1+E-V_\eps)^{-1}$ is uniformly bounded, we also have that $\chi_\eps$ is bounded in $L^2$. 
Writing~\eqref{eq:systeme_epsilon} in the form
\begin{equation}
\begin{cases}
(-1-E+V_\eps)\phi_\eps-i\sigma\cdot\nabla\chi_\eps=f-2\phi_\eps,\\
(1-E+V_\eps)\chi_\eps-i\sigma\cdot\nabla\phi_\eps=g+2\chi_\eps.\\
  \end{cases}
  \label{eq:systeme_epsilon_bis}
\end{equation}
we get all the same information with $\phi_\eps$ and $\chi_\eps$ interchanged. In other words, we have shown that the embedding $\cD(D_{V_\eps})\subset \cW\times\cW$ is continuous with a constant independent of $\eps$:
\begin{multline}
\norm{\phi_\eps}_{\cW}+\norm{\chi_\eps}_{\cW}+\norm{\frac{\sigma\cdot\nabla\phi_\eps}{1+E-V_\eps}}_{L^2}+\norm{\frac{\sigma\cdot\nabla\chi_\eps}{1+E-V_\eps}}_{L^2}\\
\leq C\norm{\begin{pmatrix}f\\ g\end{pmatrix}}_{L^2}=C\norm{(D_{V_\eps}-E)\begin{pmatrix}\phi_\eps\\ \chi_\eps\end{pmatrix}}_{L^2},\qquad\forall\phi_\eps,\chi_\eps\in H^1(\R^3,\C^2).
\end{multline}

Now we can pass to the weak limit $\eps\to0$. Since $\cW\subset H^s$ for all $0\leq s<1/2$, we have $\cW\subset L^p$ for $2\leq p<3$ with a locally compact embedding. Hence we can find a subsequence $\eps_n\to0$ such that $\phi_n:=\phi_{\eps_n}\wto\phi\in\cW$ and $\chi_n:=\chi_{\eps_n}\wto\chi$ weakly in $\cW$, weakly in $L^p$ and strongly in $L^p_{\rm loc}$ for every $2\leq p<3$. Passing to the weak limit in~\eqref{eq:systeme_epsilon}, we find 
\begin{equation}
\begin{cases}
(1-E+V)\phi-i\sigma\cdot\nabla\chi=f,\\
(-1-E+V)\chi-i\sigma\cdot\nabla\phi=g.
  \end{cases}
  \label{eq:systeme_limite}
\end{equation}
Since $\Psi=(\phi,\chi)$ is in $L^2$ and satisfies $\phi\in\cW$ and $D_V\Psi\in L^2$, we have $\Psi\in\cD(D_V)$. We know from the selfadjointness of $D_V$ and the fact that $E\notin\sigma(D_V)$~\cite{EstLos-07} that the equation~\eqref{eq:systeme_limite} has a unique solution. Hence the weak limit is independent of the subsequence and we must have $\phi_\eps\wto\phi$ and $\chi_\eps\wto\chi$. This proves the weak convergence of the resolvents. In addition, we have, after passing to the weak limit,
\begin{multline}
\norm{\phi}_{\cW}+\norm{\chi}_{\cW}+\norm{\frac{\sigma\cdot\nabla\phi}{1+E-V}}_{L^2}+\norm{\frac{\sigma\cdot\nabla\chi}{1+E-V}}_{L^2}\\
\leq C\norm{\begin{pmatrix}f\\ g\end{pmatrix}}_{L^2}=C\norm{(D_{V}-E)\begin{pmatrix}\phi\\ \chi\end{pmatrix}}_{L^2}.
\label{eq:estim_embedding_W}
\end{multline}
This tells us that $\cD(D_V)$ is continuously embedded into the spaces corresponding to the norms on the left. This is already present in the proof of~\cite{EstLos-07}, but not explicitly written.

Now we prove the norm convergence of the resolvents. Let $F_\eps=(f_\eps,g_\eps)$ be any sequence in $L^2(\R^3,\C^4)$ such that $\|F_\eps\|^2=\|f_\eps\|^2+\|g_\eps\|^2=1$, $F_\eps\wto F$ weakly in $L^2$ and 
$$\norm{(D_{V_\eps}-E)^{-1}-(D_{V}-E)^{-1}}=\norm{\left((D_{V_\eps}-E)^{-1}-(D_{V}-E)^{-1}\right)F_\eps}.$$
Let then
$$\begin{pmatrix}
\phi_\eps\\ \chi_\eps
\end{pmatrix}=(D_{V_\eps}-E)^{-1}\begin{pmatrix}
f_\eps\\ g_\eps
\end{pmatrix},\qquad \begin{pmatrix}
\phi_\eps'\\ \chi_\eps'
\end{pmatrix}=(D_{V}-E)^{-1}\begin{pmatrix}
f_\eps\\ g_\eps
\end{pmatrix},$$
which implies that
\begin{equation}
\begin{cases}
(1-E+V)(\phi_\eps-\phi'_\eps)-i\sigma\cdot\nabla(\chi_\eps-\chi'_\eps)=(V-V_\eps)\phi_\eps,\\
(-1-E+V)(\chi_\eps-\chi'_\eps)-i\sigma\cdot\nabla(\phi_\eps-\phi'_\eps)=(V-V_\eps)\chi_\eps.
  \end{cases}
  \label{eq:systeme_epsilon_difference}
\end{equation}
From the previous uniform estimates we know that $\phi_\eps$, $\phi'_\eps$, $\chi_\eps$ and $\chi'_\eps$ are uniformly bounded in the norms appearing on the left of~\eqref{eq:estim_embedding_W}. Passing to weak limits as previously, we find that $\phi_\eps-\phi'_\eps\wto\tilde\phi$ and $\chi_\eps-\chi'_\eps\wto\tilde\chi$ weakly with
$$(D_V-E)\begin{pmatrix}\tilde\phi\\\tilde\chi\end{pmatrix}=0$$
and $\tilde\phi\in\cW$, hence $\tilde\phi=\tilde\chi=0$. Our goal is to prove that the convergence is strong in $L^2$. 
Because of the locally compact embedding into $L^2$, it only remains to prove the compactness at infinity. Let then $\theta$ be a smooth radial function which is $0$ in the ball of radius $R$ and 1 outside of the ball of radius $2R$, for any fixed $R>0$. We multiply~\eqref{eq:systeme_epsilon_difference} by $\theta$ and get
\begin{equation}
\begin{cases}
(1-E+V)\theta(\phi_\eps-\phi'_\eps)-i\sigma\cdot\nabla\theta(\chi_\eps-\chi'_\eps)=-i(\chi_\eps-\chi'_\eps)\sigma\cdot\nabla\theta,\\
(-1-E+V)\theta(\chi_\eps-\chi'_\eps)-i\sigma\cdot\nabla\theta(\phi_\eps-\phi'_\eps)=-i(\phi_\eps-\phi'_\eps)\sigma\cdot\nabla\theta.\\
  \end{cases}
  \label{eq:systeme_epsilon_difference_final}
\end{equation}
since $\theta(V-V_\eps)=0$ for $\eps$ small enough (we use here that $V$ can only diverge at the origin). This can be written in the form
\begin{equation}
\big(D_{V}-E\big)\theta\begin{pmatrix}
\phi_\eps-\phi'_\eps\\
\chi_\eps-\chi'_\eps
\end{pmatrix}=-i\theta'\sigma\cdot\omega_x\begin{pmatrix}\chi_\eps-\chi'_\eps\\ \phi_\eps-\phi'_\eps\end{pmatrix}
\end{equation}
where the right side has a compact support, hence converges strongly to $0$ in $L^2$. Since $D_V-E$ is invertible we conclude as we wanted that $\theta(\phi_\eps-\phi_\eps')\to0$ and $\theta(\chi_\eps-\chi_\eps')\to0$ strongly in $L^2$. Together with the locally compact embedding this proves the norm-convergence of the resolvents and ends the proof of Theorem~\ref{thm:distinguished-critical}.\qed

\appendix

\section{Domains of closures in the exact Coulomb case}\label{app:proof-closure}

In this appendix we characterize the domains of the closures of the minimal operators $\dot{D}_{-\nu/r}$ and ${\dot{h}_\nu^\kappa}$. We prove the following

\begin{prop}[Closures of the minimal operators $\dot{D}_{-\nu/r}$ and ${\dot{h}_\nu^\kappa}$] \label{prop:closure}
If 
$$\begin{cases}
\vert\nu\vert\in [0,1]\setminus\{\sqrt{3}/2\}&\text{ for } \kappa=\pm 1,\\
\vert\nu\vert\in [0,1]&\text{ for } |\kappa|\geq2,
\end{cases}$$
then we have 
\begin{equation} \label{eq:domain-closure}
\cD\big(\overline{\dot{h}_{\nu}^{\kappa}}\big)=H^1_0\left((0,\infty),\C^2\right).
\end{equation}
When $\vert\nu\vert =\sqrt{3}/2$ and $\kappa=\pm1$, we have
$$\cD\big(\overline{\dot{h}_\nu^{\pm 1}}\big) \supsetneq H^1_0\left((0,\infty),\C^2\right).$$ 
In addition,
\begin{equation} \label{eq:domain-closure2}
\cD\big(\overline{\dot{D}_{-\nu/r}}\big)=H^1(\R^3,\C^4)\qquad\text{for $\vert\nu\vert\in [0,1]\setminus\{\sqrt{3}/2\}$,}
\end{equation}
and 
$$\cD\big(\overline{\dot{D}_{-\nu/r}}\big)\supsetneq H^1(\R^3,\C^4)\qquad \text{for $\vert\nu\vert =\sqrt{3}/2$.}$$ 
\end{prop}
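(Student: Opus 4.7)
The strategy is to prove the radial statement on $\overline{\dot{h}_\nu^\kappa}$ first and then transfer it to three dimensions via the partial-wave decomposition~\eqref{splitting}. The inclusion $H^1_0((0,\infty),\C^2)\subset\cD(\overline{\dot{h}_\nu^\kappa})$ is the easy direction: the one-dimensional Hardy inequality $\|u/r\|_{L^2}\leq 2\|u'\|_{L^2}$ yields $\|h_\nu^\kappa u\|_{L^2}\leq C_\nu\|u\|_{H^1}$ for every $|\nu|\leq 1$, and cutoff-and-mollification then produces a sequence in $C^\ii_c((0,\infty),\C^2)$ converging in the graph norm.

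For the reverse inclusion when $|\nu|\neq\sqrt{3}/2$, I would exploit the ODE analysis of Section~\ref{sec:radial}. Any $u\in\cD((\dot h_\nu^\kappa)^\ast)$ solves $h_\nu^\kappa u=f\in L^2$, and a Frobenius/variation-of-parameters argument against the fundamental solutions of the homogeneous equation yields, locally near the origin,
\[
u(r)=c_+\,r^{s}v^++c_-\,r^{-s}v^-+u_{\rm reg}(r),\qquad s=\sqrt{\kappa^2-\nu^2},
\]
with $u_{\rm reg}\in H^1_{\rm loc}$ controlled by $\|f\|_{L^2}$. When $s>1/2$ (i.e.\ $|\kappa|\geq 2$ for any $|\nu|\leq 1$, or $\kappa=\pm1$ with $|\nu|<\sqrt{3}/2$), the mode $r^{-s}$ is not in $L^2_{\rm loc}$, which forces $c_-=0$, while $c_+r^{s}v^+$ lies in $H^1_{\rm loc}$, so $u\in H^1_{\rm loc}$. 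When $s<1/2$ ($\kappa=\pm 1$ with $\sqrt{3}/2<|\nu|\leq 1$), both powers lie in $L^2_{\rm loc}$, but a direct cutoff estimate rules out graph-norm approximation of either by $C^\ii_c((0,\infty))$: multiplying $r^{\pm s}v^{\pm}$ by a smooth cutoff $\chi(r/\eps)$ produces a commutator $L^2$-contribution of order $\eps^{2s-1}$ (respectively $\eps^{-2s-1}$), which diverges as $\eps\to 0$. Equivalently, the coefficient maps $c_+$ and $c_-$ extend to continuous linear functionals on $\cD((\dot h_\nu^\kappa)^\ast)$ in the graph norm, vanishing on $C^\ii_c((0,\infty),\C^2)$ and hence on its closure. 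Either way $u=u_{\rm reg}\in H^1_{\rm loc}$ near the origin, and standard elliptic regularity away from $0$ combined with the approximability by compactly supported functions yields $u\in H^1_0$.

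The borderline case $|\nu|=\sqrt{3}/2$, $\kappa=\pm 1$, is handled separately: $s=1/2$ is a double root of the indicial equation, so the two independent local solutions behave like $r^{1/2}$ and $r^{1/2}\log(1/r)$. Since the operator is essentially self-adjoint, $\cD(\overline{\dot h_\nu^{\pm 1}})$ coincides with the adjoint's domain, and the globally $L^2$ solution of $h_\nu^{\pm 1}u=0$ selected by its exponential decay at infinity behaves like $r^{1/2}$ near the origin; its derivative fails to be in $L^2_{\rm loc}$ since $\int_0^1 r^{-1}\,dr=+\infty$, exhibiting an explicit element of $\cD(\overline{\dot h_\nu^{\pm 1}})\setminus H^1_0$.

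The three-dimensional conclusions~\eqref{eq:domain-closure2} then follow from~\eqref{splitting}, which unitarily identifies $\dot D_{-\nu/r}$ on $C^\ii_c(\R^3\setminus\{0\},\C^4)$ with $\bigoplus_\kappa\dot h_\nu^\kappa$ (each summand appearing with multiplicity $2|\kappa|-1$). Closure commutes with this direct sum, and $H^1(\R^3,\C^4)$ corresponds sectorwise to $H^1_0((0,\infty),\C^2)$ with the standard angular weight, so the equality at $|\nu|\in[0,1]\setminus\{\sqrt{3}/2\}$ and the strict inclusion at $|\nu|=\sqrt{3}/2$ are immediate from the radial result. I expect the main obstacle to be the cutoff/functional argument in the band $\sqrt{3}/2<|\nu|\leq 1$: verifying the continuity of the extraction maps $c_\pm$ in the graph norm requires controlling the regular part $u_{\rm reg}$ against the singular modes carefully enough that the coefficients become bona fide continuous linear functionals on $\cD((\dot h_\nu^\kappa)^\ast)$.
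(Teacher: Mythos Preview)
Your overall strategy---reduce to the radial operators, handle the easy inclusion via Hardy, and analyze the hard inclusion through the Frobenius structure of solutions near $r=0$---matches the paper's, and your identification of the range $\sqrt{3}/2<|\nu|\le 1$, $\kappa=\pm1$ as the crux is correct. But two points need attention.

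First, the ``cutoff argument'' as you state it does not prove what you need. Showing that $\chi(r/\eps)r^{\pm s}v^\pm$ fails to approximate $r^{\pm s}v^\pm$ in graph norm only rules out that particular family of approximants; it does not exclude every sequence in $C^\ii_c$. The ``functional argument'' is the right route, and it really hinges on the claim you flag at the end: that the particular solution $u_{\rm reg}$ obtained by variation of parameters (with both integrals taken from $0$) lies in $H^1$ near the origin with a bound depending only on $\|f\|_{L^2}$. Once that is established, the map $u\mapsto u-u_{\rm reg}$ is continuous from the graph norm into the two-dimensional space of homogeneous solutions, vanishes on $C^\ii_c$ (since for compactly supported $u$ the formula reproduces $u$ exactly), and hence vanishes on the closure. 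The paper proves precisely this $H^1$ estimate, but does so more directly: it works with a Cauchy sequence $(u_n,v_n)\subset C^\ii_c$ in graph norm, writes each $(u_n,v_n)$ via the variation-of-constants formula, and shows the resulting expressions converge in $H^1(0,1)$ by bounding integrals of the form $\int_0^1\big(r^{s-1}\int_0^r\rho^{-s}F(\rho)\,d\rho\big)^2dr$ against $\int_0^1F^2$. Cauchy--Schwarz is too crude here (it produces a logarithmic divergence); the paper uses H\"older with an exponent $p>2$ chosen so that $ps<1$, followed by the $L^{2/q}$ Hardy inequality. Your approach and the paper's are therefore essentially equivalent in content, but the paper's sequence-based formulation avoids having to set up the coefficient functionals and is slightly cleaner.

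Second, you do not treat the degenerate case $|\nu|=1$, $\kappa=\pm1$, where $s=0$ and the indicial equation has a double root: the two homogeneous solutions behave like $1$ and $\log r$ near $0$, so your decomposition $c_+r^s+c_-r^{-s}$ collapses. The paper handles this separately with a different variation-of-constants formula involving a $\log(\rho/r)$ kernel, and the derivative estimate then follows from the ordinary $L^2$ Hardy inequality.

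Finally, the passage from the radial statement to $\R^3$ is not quite ``immediate'': the correspondence between $H^1(\R^3,\C^4)$ restricted to $\ker(K^2-1)$ and $H^1_0((0,\infty),\C^2)$ requires controlling the angular part $|L\Psi|^2/r^2$ of $|\nabla\Psi|^2$ by the radial $H^1$ norm, which the paper does via the bound $\|L\Psi(r,\cdot)\|_{L^2(S^2)}^2\le 2\|\Psi(r,\cdot)\|_{L^2(S^2)}^2$ on that subspace together with the one-dimensional Hardy inequality.
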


\begin{proof}
Note that by Hardy's inequality, the operator norms of the Dirac-Coulomb operators $\dot{D}_{-\nu/r}$ and $\dot{h}_{\nu}^{\kappa}$ are controlled by the $H^1$ norms, so the two inclusions 
$$H^1(\R^3,\C^4)\subset\cD\big(\overline{\dot{D}_{-\nu/r}}\big)$$ 
and 
$$H^1_0(0,\infty)\subset\cD\big(\overline{\dot{h}_\nu^{\kappa}}\big)$$ 
are obvious. The only nontrivial conclusions in Proposition \ref{prop:closure} are the reverse inclusions. 

In \cite{LanRej-79,LanRejKla-80} it is proved that $\overline{\dot{h}_{\nu}^{\kappa}}$ is self-adjoint with same domain as $\overline{\dot{h}_{0}^{\kappa}}$, provided $\vert\nu\vert <\sqrt{\kappa^2-1/4}$. Using the identity
$$\int_0^\ii\left|u'(r)+\kappa\frac{u(r)}{r}\right|^2\,dr=\int_0^\ii|u'(r)|^2\,dr+\kappa(\kappa+1)\int_0^\ii\frac{|u(r)|^2}{r^2}\,dr$$
for $u\in C^\ii_c\big((0,\ii),\C^2\big)$, this domain is just found to be $H^1_0((0,\ii),\C^2)$.
In addition, using resolvent estimates, it was proved in \cite{LanRej-79,LanRejKla-80} that
$$\cD(\overline{\dot{D}_{-\nu/\vert x\vert}})=\cD(\overline{\dot{D}_{0}})=H^1(\R^3,\C^4),\qquad\text{for $\vert\nu\vert <\sqrt{3}/2\,.$}$$

When $\vert\nu\vert =\sqrt{3}/2$, it is easy to see that $H^1_0(0,\infty)$ is a strict subspace of $\cD\big(\overline{\dot{h}_\nu^{\pm 1}}\big)$. Indeed, since essential  self-adjointness still holds, $\overline{\dot{h}_\nu^{\pm 1}}$ coincides with the adjoint operator $(\dot{h}_\nu^{\pm 1})^*$. But $\cD\big((\dot{h}_\nu^{\pm 1})^*\big)$ contains a function behaving like $r^{1/2}$ near $0$, and the derivative of such a function cannot be square integrable.

Finally, we study the case of strong fields $\sqrt{3}/2<\vert\nu\vert\leq 1$. For $\vert\nu\vert <\sqrt{15}/2$ one can consider the restriction of $\overline{\dot{D}_{-\nu/\vert x\vert}}$ to the orthogonal space $E_{\vert\kappa\vert\geq 2}$ of the subspace $\ker(K^2-1)$ in $L^2(\R^2,\C^4)$. Then the arguments and estimates of \cite{LanRej-79,LanRejKla-80} immediately imply that this restriction is self-adjoint with domain $H^1(\R^3,\C^4)\cap E_{\vert\kappa\vert\geq 2}$. So
$$\cD(\overline{\dot{D}_{-\nu/\vert x\vert}})\cap E_{\vert\kappa\vert\geq 2}=H^1(\R^3,\C^4)\cap E_{\vert\kappa\vert\geq 2}$$
and we only have to characterize $\cD\big(\overline{\dot{h}_{\nu}^{\pm1}}\big)$ in the case $\sqrt{3}/2<\vert\nu\vert \leq 1$. To our knowledge, this last point is the only novelty of the present Appendix. Our claim is that 
\begin{equation} \label{inclusion}
\cD\big(\overline{\dot{h}_{\nu}^{\pm 1}}\big)\subset H^1_0\big((0,\infty),\C^2\big)\,, \qquad \forall \,\sqrt{3}/2<\vert\nu\vert \leq 1\;.
\end{equation}
Before proving (\ref{inclusion}), let us explain why this inclusion ends the proof of Proposition \ref{prop:closure}. Using formula (\ref{splitting}), together with the classical identity $\vert \nabla\Psi\vert^2=\vert\partial_r\Psi\vert^2+\frac{\vert L\Psi\vert^2}{r^2}$ and the bound
$$\Vert L\Psi(r,\cdot)\Vert^2_{L^2(S^2)}\leq 2 \Vert \Psi(r,\cdot)\Vert^2_{L^2(S^2)}\,,\qquad\forall \Psi\in \ker(K^2-1)\,,$$
we see that $\Vert\nabla\Psi\Vert_{L^2(\R^3)}^2$ is controlled on $\ker(K^2-1)$ by a finite sum of integrals of the forms $\int_0^\infty\vert\frac{d}{dr}(r^{-1}u,r^{-1}v)\vert^2r^2dr$ and $\int_0^\infty\vert(u,v)\vert^2r^{-2}dr$. Using the one-dimensional Hardy inequality, one then finds that such integrals are dominated by $\int_0^\infty\vert\frac{d}{dr}(u,v)\vert^2dr$ on $C^\infty_c(0,\infty)$. So (\ref{inclusion}) implies that
$$\cD\big(\overline{\dot{D}_{-\nu/\vert x\vert}}\big)\cap \ker(K^2-1)\subset H^1(\R^3,\C^4)\,,$$
and finally proves that $\cD\big(\overline{\dot{D}_{-\nu/\vert x\vert}}\big)\subset H^1(\R^3,\C^4)$.

We now prove (\ref{inclusion}). Without any loss in generality, we can assume that $\kappa=1$ and $\sqrt{3}/2<\nu\leq 1$. Indeed, one can change the sign of $\kappa$ by interchanging $u$ and $v$, and the sign of $\nu$ by replacing $(u,v)$ by $(u,-v)$. Let then $(u_n,v_n)$ be a sequence in $C^\infty_c(0,\infty),$ of limit $(u_\infty,v_\infty)$ for the norm of the domain of $\dot{h}_{-\nu/r}^{1}$. Then $(u_n,v_n)\to(u_\infty,v_\infty)$ in $\cap_{\eps>0}H^1(\eps,\infty)$ and
$$(a_n,b_n):=\left(-\frac{\nu u_n}{r}-\frac{d v_n}{dr}+\frac{v_n}{r},-\frac{\nu v_n}{r}+\frac{d u_n}{dr}+\frac{u_n}{r}\right)$$
converges in $L^2(0,\infty)$ to
$$(a_\infty,b_\infty):=\left(-\frac{\nu u_\infty}{r}-\frac{d v_\infty}{dr}+\frac{v_\infty}{r},-\frac{\nu v_\infty}{r}+\frac{d u_\infty}{dr}+\frac{u_\infty}{r}\right)\,.$$
On the other hand, the homogeneous system
$$\left(-\frac{\nu u}{r}-\frac{d v}{dr}+\frac{v}{r},-\frac{\nu v}{r}+\frac{d u}{dr}+\frac{u}{r}\right)=(0,0)$$
admits the solutions $(\nu,1\pm s)r^{\pm s}$ with $s=\sqrt{1-\nu^2}$ when $\nu<1$, and the solutions $(1,1)$, $(\log(r),\log(r)+1)$ when $\nu=1$. So, remembering that $(u_n,v_n)$ vanishes near $0$, the method of variation of the constant gives the formula
\begin{equation} \label{variation-constant}
\begin{pmatrix}
u_n(r)\\
v_n(r)
\end{pmatrix}
=
\begin{pmatrix}
\nu\\
1+s
\end{pmatrix}\int_0^r\left(\frac{\rho}{r}\right)^s \alpha_n(\rho)d\rho +
\begin{pmatrix}
\nu\\
1-s
\end{pmatrix}\int_0^r \left(\frac{r}{\rho}\right)^s \beta_n(\rho)d\rho
\end{equation}
in the case $\nu<1$, with 
$$(\alpha_n,\beta_n)=\frac1{2s}\left(-a_n-b_n\frac{1-s}{\nu},a_n+b_n\frac{1+s}{\nu}\right)$$ 
convergent in $L^2(0,\infty)$. In the case $\nu=1$ the formula is
\begin{equation} \label{variation-nu=1}
\begin{pmatrix}
u_n(r)\\
v_n(r)
\end{pmatrix}
=
\int_0^r\begin{pmatrix}
b_n(\rho)\\
-a_n(\rho)
\end{pmatrix}d\rho +
\int_0^r
\log(\rho/r)\begin{pmatrix}
a_n(\rho)+b_n(\rho)\\
a_n(\rho)+b_n(\rho)
\end{pmatrix}d\rho\;.
\end{equation}
Our last step is to prove the convergence of $(\frac{d u_n}{dr},\frac{d v_n}{dr})$ to $(\frac{d u_\infty}{dr},\frac{d v_\infty}{dr})$ in $L^2(0,1)$. Considering the derivatives in $r$ of formulas
(\ref{variation-constant}) and (\ref{variation-nu=1}), we see that we just need to estimate integrals of the form
$$\int_0^1 \left(r^{s-1}\int_0^r \rho^{-s} F(\rho)d\rho\right)^2dr\qquad\text{and}\qquad \int_0^1 \left(r^{-1}\int_0^r F(\rho)d\rho\right)^2dr$$ 
in terms of $\int_0^1 F^2(r)dr$ for all $F\in L^2(0,1)$.

For the first estimate we take $p>2$ such that $p s<1$ (this is possible since we assume $\sqrt{3}/2<\nu<1$). We denote $q=\frac{p}{p-1}\in (1,2).$ By H\"older's inequality,
\begin{align*}
r^{s-1}\int_0^r \rho^{-s} F(\rho)d\rho &\leq \left( \int_0^r \left(\frac{r}{\rho}\right)^{ps} d(\rho/r)\right)^{1/p}\left( r^{-1}\int_0^r F^q(\rho) d\rho\right)^{1/q}\\
&=(1-ps)^{-1/p}\left( r^{-1}\int_0^r F^q(\rho) d\rho\right)^{1/q}\;.
\end{align*}
Hence, applying the one-dimensional Hardy inequality
$$\int_0^\infty \left( r^{-1}\int_0^r G(\rho)d\rho\right)^{2/q}dr\leq \left(\frac{2}{2-q}\right)^{2/q}\int_0^\infty G^{2/q}(r) dr$$
to $G(r)=F^{q}(r)\1_{0\leq r\leq 1}$, we find
$$\int_0^1 \left(r^{s-1}\int_0^r \rho^{-s} F(\rho)d\rho\right)^2dr \leq (1-ps)^{-2/p}\left(\frac{2}{2-q}\right)^{2/q} \int_0^1 F^2(r)dr\,.$$
This is the needed estimate in the case $\sqrt{3}/2<\nu<1$. The second estimate (needed for $\nu=1$) is much easier. It follows directly from Hardy's inequality
$$\int_0^\infty \left( r^{-1}\int_0^r G(\rho)d\rho\right)^{2}dr\leq 4 \int_0^\infty G^{2}(r) dr$$
applied to $G(r)=F(r)\1_{0\leq r\leq 1}$. This concludes the proof of (\ref{inclusion}), hence of Proposition \ref{prop:closure}.
\end{proof}

\section{Proof of Theorem~\ref{thm:carres_q_C} on $q_\lambda^{\rm C}$ in the critical case}\label{sec:proof_thm_q_C}

In this section we compute the quadratic form $q_\lambda^{\rm C}$ for $V(x)=-|x|^{-1}$, following the method introduced in~\cite{DolEstLosVeg-04}.

\subsection{Computation of $q_0^{\rm C}$}
First we note that the operator $\sigma\cdot L$ commutes with $(\sigma\cdot\nabla) f(|x|)(\sigma\cdot\nabla)$ for any radial function $f$. Indeed, we have 
\begin{equation}
\sigma\cdot L\;\sigma\cdot \nabla+\sigma\cdot \nabla\;\sigma\cdot L=-2\sigma\cdot \nabla.
\label{eq:anti-commutator-L-nabla}
\end{equation}
Using that $\sigma\cdot L$ commutes with scalar radial functions and inserting~\eqref{eq:anti-commutator-L-nabla}, we easily conclude that
\begin{equation}
\left[\sigma\cdot\nabla \frac{1}{1+|x|^{-1}}\sigma\cdot\nabla\,,\, \sigma\cdot L\right] =0.\label{eq:commute}
\end{equation}
Therefore, recalling that $\omega_x=x/|x|$, we have
$$q_0^{\rm C}(\phi)=q_0^{\rm C}(\phi_+)+q_0^{\rm C}(\phi_0)+q_0^{\rm C}(\phi_-)+q_0^{\rm C}\big(\sigma\cdot\omega_x\phi_1\big).$$ 
We compute these four terms separately. We use the formula
\begin{align*}
\big[\sigma\cdot\nabla\;,\; \sigma\cdot h(|x|)x\big]=
h(|x|)+|x|h'(|x|)+2h(|x|)\big(1+\sigma\cdot L\big)
\end{align*}
which, in the particular case $h(r)=1/r$, becomes
\begin{equation}
\left[\sigma\cdot\nabla,\sigma\cdot \frac{x}{|x|}\right]=\frac{2}{|x|}\big(1+\sigma\cdot L\big).
\end{equation}
Denoting
$$f(r)=\frac{r}{1+r}\qquad \text{and}\qquad g(r)=re^r,$$
which satisfy $fg'=g$, we obtain
\begin{align}
\int_{\R^3}\frac{f}{g^2}\Big|\sigma\cdot\nabla(gu)\Big|^2=&\int_{\R^3}\frac{f}{g^2}\Big|g\sigma\cdot\nabla u+u\frac{g'}{r}\sigma\cdot x\Big|^2\nn\\
=&\int_{\R^3}f\Big|\sigma\cdot\nabla u\Big|^2+\int_{\R^3}\frac{|u|^2}{f}-\pscal{u,\left[\sigma\cdot\nabla,\sigma\cdot \frac{x}{|x|}\right]u}\nn\\
=&\int_{\R^3}\frac{|x|}{1+|x|}\Big|\sigma\cdot\nabla u\Big|^2+\int_{\R^3}\left(1-\frac{1}{|x|}\right)|u|^2-2\pscal{u,\frac{\sigma\cdot L}{|x|}u}\nn\\
=&q_0^{\rm C}(u)-2\pscal{u,\frac{\sigma\cdot L}{|x|}u}.
\label{eq:calcul_phi_+}
\end{align}
This gives what we wanted for $u=\phi_+$ and $u=\phi_0$, after computing 
$$\sigma\cdot\nabla(g\phi_+)= g\left(\sigma\cdot\nabla\phi_++\frac{1+|x|}{|x|}\sigma\cdot\omega_x \phi_+\right)$$
and
$$\sigma\cdot\nabla(g\phi_0)= g\,\sigma\cdot\omega_x\left(\phi_0'+\frac{1+|x|}{|x|} \phi_0\right).$$
Similarly, we have
\begin{align}
\int_{\R^3}fg^2\Big|\sigma\cdot\nabla(g^{-1}u)\Big|^2=&\int_{\R^3}fg^2\Big|g^{-1}\sigma\cdot\nabla u-u\frac{g'}{rg^2}\sigma\cdot x\Big|^2\nn\\
=&\int_{\R^3}f\Big|\sigma\cdot\nabla u\Big|^2+\int_{\R^3}\frac{|u|^2}{f}+\pscal{u,\left[\sigma\cdot\nabla,\sigma\cdot \omega_x\right]u}\nn\\
=&q_0^{\rm C}(u)+2\pscal{u,\frac{2+\sigma\cdot L}{|x|}u}
\label{eq:calcul_phi_-}
\end{align}
which gives the result for $\phi_-$, after inserting
$$\sigma\cdot\nabla(g^{-1}\phi_-)= g^{-1}\left(\sigma\cdot\nabla\phi_--\frac{1+|x|}{|x|}\sigma\cdot\omega_x \phi_-\right).$$
For $u=\sigma\cdot\omega_x\phi_1(|x|)$ we have to use in addition that
\begin{equation*}
\sigma\cdot\nabla\sigma\cdot\omega_x\phi_1=\left[\sigma\cdot\nabla,\sigma\cdot\omega_x\right]\phi_1+\phi_1'=\frac{2}{|x|}(1+\sigma\cdot L)\phi_1+\phi_1'=\frac{2}{|x|}\phi_1+\phi_1'.
\end{equation*}
since $\phi_1$ is radial.

\subsection{Simplification of the norm associated with $q_0^{\rm C}$}
In this section we prove that the norm induced by $q_0^{\rm C}$ on $L^2$ is equivalent to the ones given in~\eqref{eq:norm_q_0_simplifiee}.
Let $\phi\in L^2(\R^3,\C^2)$ be such that all the terms in~\eqref{eq:carres_q_0} are finite. First we remark that 
\begin{align*}
\int_{\R^3}\frac{|x|}{1+|x|}\left|\frac{\sigma\cdot x}{|x|^2}(1+|x|)\,\phi_\pm(x)\right|^2\,dx&=\int_{\R^3}\frac{1+|x|}{|x|}|\phi_\pm(x)|^2\,dx
\end{align*}
which is controlled by the $L^2$ norm and by the term involving $\sigma\cdot L$. So we conclude that
$$\int_{\R^3}\frac{|x|}{1+|x|}\big|\sigma\cdot\nabla\phi_\pm(x)\big|^2\,dx<\ii.$$
Using~\eqref{eq:carres_q_0} we have
\begin{equation}
 \int_{\R^3}\frac{|x|}{1+|x|}\big|\sigma\cdot\nabla\phi_+(x)\big|^2\,dx+\|\phi_+\|^2_{L^2}\geq \int_{\R^3}\frac{|\phi_+(x)|^2}{|x|}\,dx+2\pscal{\phi_+,\frac{\sigma\cdot L}{|x|}\phi_+} 
 \label{eq:controle_sigma_L}
\end{equation}
and a similar inequality for $\phi_-$. Therefore there is no need to keep the term involving $\sigma\cdot L$.
For $\phi_0$ and $\phi_1$ we only use the $L^2$ norm to control the terms involving $r\phi_0$ and $r\phi_1$. 

Lastly, we see that the quadratic form
\begin{equation}
\int_{\R^3}\frac{1}{|x|(1+|x|)}\big|\sigma\cdot\nabla|x|\phi(x)\big|^2\,dx
\label{eq:norme_equivalente}
\end{equation}
is also the sum of the similar terms for $\phi_+$, $\phi_-$, $\phi_0$ and $\sigma\cdot \omega_x\phi_1(|x|)$, since $\sigma\cdot L$ commutes with the corresponding operator in the same way as in~\eqref{eq:commute}. 
Therefore the norm associated with $q_0^{\rm C}$ is equivalent in $L^2$ to that given by ~\eqref{eq:norme_equivalente}. However, in practice it will often be more convenient to use the more precise information contained in~\eqref{eq:norm_q_0_simplifiee} for $\phi_0$, $\phi_1$ and $\phi_\pm$.

\subsection{Estimate on $q_\lambda^{\rm C}$ for $\lambda\neq0$}

It is possible to provide a formula for $q_\lambda^{\rm C}(\phi)$ using the two functions
$$f_\lambda(r)=\frac{r}{1+(1+\lambda)r}\qquad\text{and}\qquad g_\lambda(r)=re^{(1+\lambda)r}$$
in~\eqref{eq:calcul_phi_+} and~\eqref{eq:calcul_phi_-}, and the arguments are exactly the same as before. We can also use~\eqref{eq:relation_q_lambda_0_Coulomb} and notice that, for $\lambda>0$,
\begin{align*}
&\lambda\int_{\R^3}\frac{|x|^2|\sigma\cdot\nabla\phi_\pm(x)|^2}{(1+|x|)(1+(1+\lambda)|x|)}\,dx\\
&\leq\lambda(1+\eta)\int_{\R^3}\frac{|\sigma\cdot\nabla|x|\phi(x)|^2}{(1+|x|)(1+(1+\lambda)|x|)}\,dx
+\lambda(1+\eta^{-1})\int_{\R^3}|\phi(x)|^2\,dx\\
&\leq \frac{\lambda(1+\eta)}{1+\lambda}\int_{\R^3}\frac{|\sigma\cdot\nabla|x|\phi(x)|^2}{|x|(1+|x|)}\,dx+\lambda(1+\eta^{-1})\int_{\R^3}|\phi(x)|^2\,dx
\end{align*}
where the coefficient in front of the first integral is $<1$ for $\eta$ small enough.
This concludes the proof of Theorem~\ref{thm:carres_q_C}.\qed

\section{The two-dimensional case}\label{sec:2D}

In two space dimensions, the free Dirac operator 
\begin{equation}
d_0\ = -i\; \sigma_1\partial_1-i\sigma_2\partial_2 + \sigma_3 = \left( \begin{matrix} 1 & -2i\partial_z
\\ -2i\partial_{\bar{z}} & -1 \\ \end{matrix} \right)
\label{def_Dirac2D}
\end{equation}
is self-adjoint in $L^2(\R^2,\C^2)$ with domain $H^1(\R^2,\C^2)$. Here $z=x_1+ix_2$, $\bar{z}=x_1-ix_2$, $\partial_z=\frac{1}{2}(\partial_1-i\partial_2)\,,$ $\partial_{\bar{z}} = \frac{1}{2}(\partial_1+i\partial_2)\,.$
In this section, we consider  Dirac-Coulomb operators of the form 
$$d_V=d_0+V(x)$$ 
where $V(x)$ is a real-valued function satisfying $V(x)\geq -\nu/|x|$, as in three dimensions. The results are very similar to the three-dimensional case, the algebra is simpler and the proofs do not involve any new idea. So we will only state the theorems for completeness, pointing out the main differences. Note that the two-dimensional case is relevant in solid state physics: although the low-energy electronic excitations in graphene are modeled by a massless two-dimensional Dirac equation~\cite{NetGuiPerNovGei-09}, the study of strained graphene involves a massive Dirac operator~\cite{VozKatGui-10}.

We would like to emphasize four main differences of the 2d case, as compared with the 3d case:
\begin{itemize}
 \item The differential operator $-2i\partial_z$ is not formally self-adjoint (contrary to $-i\sigma\cdot\nabla$ in 3d). Its formal adjoint is $-2i\partial_{\bar{z}}$. 
 \item When $V=-\nu/|x|$, the operator $d_V$ is unitarily equivalent to a direct sum of the same radial Dirac operators $h_{\nu}^{\ell}$ as in 3d, but now $\ell$ (which replaces $\kappa$) is an eigenvalue of the orbital momentum operator $L$, taking all relative integer values including $\ell=0$.
 \item The minimal operator $\dot{d}_{-\nu/\vert x\vert}:=d_{-\nu/\vert x\vert}\upharpoonright\C^\infty_c(\R^2\setminus\{0\},\C^2)$ is not essentially self-adjoint for $\nu\neq 0$. In 3d, the operator $\dot{D}_{-\nu/\vert x\vert}$ is essentially self-adjoint when $|\nu|\leq\sqrt3/2$. This difference is due to the presence of the radial Dirac operator $h_{\nu}^{0}$ in the direct sum mentioned above.
 \item The condition for the existence of a unique distinguished self-adjoint extension is $\vert\nu\vert\leq 1/2$ instead of $\vert\nu\vert\leq 1$.
 \item For $0\leq \nu \leq 1/2$ the first eigenvalue of the distinguished extension $d_{-\nu/\vert x\vert}$ is $\sqrt{1-4\nu^2}$ (also eigenvalue of $h_{\nu}^{0}$) instead of $\sqrt{1-\nu^2}$ (eigenvalue of $h_{\nu}^{\pm 1}$) in 3d.
\end{itemize}

With this in mind, one can prove that Formulas (\ref{eq:domain-closure}) and (\ref{eq:domain-closure2}) still hold in 2d for the domains of the closures $\overline{\dot{d}_{-\nu/r}}$ and $\overline{\dot{h}_{\nu}^{\ell}}$ ($\ell \in \Z$), provided $\vert \nu \vert < 1/2$.\medskip

Theorem~\ref{thm:distinguished} stays true in 2d, with appropriate modifications and we do not state it explicitly. In particular we need to ask that $V_2\in L^2(\R^2)$ and there is no equivalent of \textit{(5)}. These results have been mainly proved by Cuenin and Siedentop~\cite{CueSie-14} (see also~\cite{Warmt-11}). In particular, they showed that 
\begin{equation}
\norm{|x|^{-1/2}(D_0+is)^{-1}|x|^{-1/2}}=2,\qquad \forall s\in\R.
\label{eq:Nenciu-Kato2d}
\end{equation}
Here, the norm is 2 instead of 1, this is the reason why the critical coupling parameter is $\nu=1/2$ instead of $\nu=1$.\medskip

The two-dimensional analogue of the Esteban-Loss method for self-adjoint extensions~\cite{EstLos-07,EstLos-08} was discussed in~\cite{MorMul-15,Muller-16,Warmt-11}. As in 3d, we make the stronger assumptions 
\begin{equation}
 \boxed{-\frac{\nu}{|x|}\leq V(x)\quad\text{and}\quad \sup(V)<1+\sqrt{1-4\nu^2}} 
 \label{eq:assumption_V_intro2d}
\end{equation}
for some $0\leq \nu\leq1/2$. Here $\sqrt{1-4\nu^2}$ is the first eigenvalue of the Dirac operator with the Coulomb potential $V_{\rm C}(x)=-\nu/|x|$.
As in three space dimensions, it is important to study the quadratic form
\begin{equation}
q_\lambda(\phi):=4\int_{\R^2}\frac{|\partial_{\bar{z}} \phi(x)|^2}{1-V(x)+\lambda} \,dx+\int_{\R^2}(1+V(x)-\lambda)|\phi(x)|^2\,dx.
 \label{eq:q_E_2D}
\end{equation}
The two-dimensional analogue of the Hardy-type inequality (\ref{eq:Hardy-type2}) is 
\begin{equation}
\int_{\R^2}\frac{4 |\partial_{\bar{z}} \phi(x)|^2}{a+1/2|x|}\,dx+\int_{\R^2}\left(a-\frac{1}{2|x|}\right)|\phi(x)|^2\,dx\geq0
\label{eq:Hardy-type22d}
\end{equation}
for all $a>0$. This inequality was proved recently by M{\"u}ller \cite{Muller-16}, using the indirect method introduced by Dolbeault-Esteban-S\'er\'e \cite{DolEstSer-00} in their proof of (\ref{eq:Hardy-type2}). But a more direct proof can be given by ``completing the square'' in the spirit of~\cite{DolEstLosVeg-04,DolEstDuoVeg-07}, as we will explain later.

Using \eqref{eq:Hardy-type22d} and our assumption that $V$ is bounded from below by the Coulomb potential, we can prove that $q_\lambda+2\lambda\|\phi\|_{L^2}^2\geq0$. In addition, as in three dimensions, it defines a norm which is equivalent to the one given by the quadratic form
\begin{equation}
\norm{\phi}_\cV^2:= \int_{\R^2}\frac{|\partial_{\bar{z}} \phi(x)|^2}{2-V(x)}\,dx+\int_{\R^2}|\phi(x)|^2\,dx.
\label{eq:norm_V2d}
\end{equation}
The corresponding space is, therefore,
\begin{multline}
 \cV=\Big\{\phi\in L^2(\R^2,\C)\cap H^1_{\rm loc}(\R^2\setminus\{0\},\C)\ :\\ (2-V)^{-1/2}\partial_{\bar{z}} \phi\in L^2(\R^2,\C)\Big\}. 
 \label{eq:def_cV2d}
\end{multline}
Later we will state a result saying that $C^\ii_c(\R^2\setminus\{0\},\C)$ is dense in $\cV$ for the norm~\eqref{eq:norm_V2d}, but for shortness we immediately turn to the discussion of  the critical case.

Following ideas from~\cite{DolEstLosVeg-04,DolEstDuoVeg-07} and Appendix~\ref{sec:proof_thm_q_C}, we can provide a more direct proof of (\ref{eq:Hardy-type22d}). It is useful to start with the Coulomb case $V_{\rm C}(x)=-|2 x|^{-1}$, in which case we use the notation
\begin{equation}
q_\lambda^{\rm C}(\phi)=\int_{\R^2}\left\{\frac{8|x|}{1+2(1+\lambda)|x|}\left|\partial_{\bar{z}}\phi\right|^2+\left(1-\lambda-\frac{1}{2|x|}\right)|\phi |^2\right\}\,dx.
\label{eq:def_q_lambda_Coulomb2d}
\end{equation}
We use the orbital momentum operator $L=-i(x_1\partial_2-x_2\partial_1)$. Note that 
$2z\partial_z=(x\cdot\nabla)+L$,  $2\bar{z}\partial_{\bar{z}}=(x\cdot\nabla)-L$. We recall that the set of eigenvalues of $L$ is $\Z$,  and the eigenspace of eigenvalue $l$ consists of functions taking the form $e^{i l\theta}\phi(r)$ in polar coordinates. The following is the analogue of Theorem~\ref{thm:carres_q_C} and its very similar proof will be omitted.

\begin{thm}[Writing $q_\lambda^{\rm C}$ as a sum of squares in 2d]\label{thm:carres_q_C2d}
For every $\phi\in L^2(\R^2,\C)$ we write
$$\phi=\phi_+(x)+\phi_-(x)+\phi_0(r)+e^{-i \theta}\,\phi_1(r)$$
where $\phi_+=\1_{[1,\ii)}( L)\phi$, $\phi_-=\1_{(-\ii,-2]}(L)\phi$, $\phi_0=\1_{\{0\}}( L)\phi$ and $e^{-i \theta}\,\phi_1(r)=\1_{\{-1\}}( L)\phi$. Then
\begin{align}
q_0^{\rm C}(\phi)=&\int_{\R^2}\frac{2|x|}{1+2|x|}\left|2\partial_{\bar{z}}\phi_+ +\frac{(2|x|+1)z}{2|x|^2}\,\phi_+\right|^2\,dx\nn\\
&+\int_{\R^2}\frac{2|x|}{1+2|x|}\left|2\partial_{\bar{z}}\phi_- -\frac{(2|x|+1)z}{2|x|^2}\,\phi_-\right|^2\,dx\nn\\
&+2\pscal{\phi_+,\frac{L}{|x|}\phi_+}+2\pscal{\phi_-,\frac{-1- L}{|x|}\phi_-}\nn\\
&+4\pi\int_0^\ii\frac{1}{1+2r}\left|r\phi_0'(r)+\frac{\phi_0(r)}{2}+r\phi_0(r)\right|^2\,dr\nn\\
&+4\pi\int_0^\ii\frac{1}{1+2r}\left|r\phi_1'(r)+\frac{\phi_1(r)}{2}-r\phi_1(r)\right|^2\,dr\label{eq:carres_q_02d}
\end{align}
for every $\phi\in H^1(\R^2,\C)$. Moreover
\begin{align}
\|\phi\|_{L^2}^2+q_0^{\rm C}(\phi)&\sim \|\phi\|_{L^2}^2+\int_{\R^2}\frac{|x|}{1+|x|}\big|\partial_{\bar{z}}(\phi_++\phi_-)\big|^2\,dx\nn\\
&\quad+\int_{0}^\ii\frac{1}{1+r}\left( \big| r\phi_0'(r)+\frac{\phi_0(r)}{2}\big|^2+\big|r\phi_1'(r)+\frac{\phi_1(r)}{2}\big|^2\right)\,dr\nn\\
&\sim \|\phi\|_{L^2}^2+\int_{\R^2}\frac{\big|\partial_{\bar{z}}|x|^{1/2}\phi(x)\big|^2}{(1+|x|)}\,dx.
\label{eq:norm_q_0_simplifiee2d}
\end{align}
Finally, for all $-1<\lambda<1$, $(2\lambda+1) \|\phi\|_{L^2}^2 +  q_\lambda^{\rm C}(\phi)$ is a positive quadratic form equivalent to $\|\phi\|_{L^2}^2+q_0^{\rm C}(\phi)$.
\end{thm}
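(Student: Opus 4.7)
The plan is to parallel the proof of Theorem~\ref{thm:carres_q_C} in Appendix~\ref{sec:proof_thm_q_C}, with the Pauli spin algebra replaced by Wirtinger calculus on $\C$. First I would verify in polar coordinates, starting from $\partial_{\bar z}=\tfrac12 e^{i\theta}(\partial_r+(i/r)\partial_\theta)$ and $L=-i\partial_\theta$, the commutation relations $[L,\partial_{\bar z}]=\partial_{\bar z}$ and $[L,\partial_z]=-\partial_z$. They imply that $L$ commutes with every operator $\partial_z\,f(|x|)\,\partial_{\bar z}$ with $f$ radial, as well as with every radial multiplication, hence with the whole quadratic form $q_\lambda^{\rm C}$. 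Decomposing $L^2(\R^2,\C)$ into the $L$-eigenspaces therefore yields
\begin{equation*}
q_0^{\rm C}(\phi)=q_0^{\rm C}(\phi_+)+q_0^{\rm C}(\phi_-)+q_0^{\rm C}(\phi_0)+q_0^{\rm C}(e^{-i\theta}\phi_1).
\end{equation*}

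Next I would complete the square in each sector, in the spirit of~\cite{DolEstLosVeg-04}. Setting $g(r)=\sqrt{r}\,e^{r}$ (so that $g'/g=(2r+1)/(2r)$) and using $\partial_{\bar z}r=z/(2r)$, one obtains
\begin{equation*}
2g^{-1}\partial_{\bar z}(g\phi)=2\partial_{\bar z}\phi+ \frac{(2r+1)z}{2r^2}\phi.
\end{equation*}
Expanding $\int_{\R^2}\tfrac{2r}{1+2r}\bigl|2g^{-1}\partial_{\bar z}(g\phi)\bigr|^2\,dx$, the diagonal terms reproduce the kinetic and part of the $L^2$ contributions to $q_0^{\rm C}(\phi)$; the cross term $\int\tfrac{4}{r}\Re\bigl[z\phi\,\overline{\partial_{\bar z}\phi}\bigr]\,dx$ is handled using the identity $e^{i\theta}\partial_z=\tfrac12(\partial_r+L/r)$ followed by one radial integration by parts, producing $-\int|\phi|^2/|x|\,dx-2\pscal{\phi,L/|x|\,\phi}$, which combined with the remaining potential piece gives precisely
\begin{equation*}
q_0^{\rm C}(\phi)=\int_{\R^2}\tfrac{2r}{1+2r}\bigl|2\partial_{\bar z}\phi+\tfrac{(2r+1)z}{2r^2}\phi\bigr|^2\,dx+2\pscal{\phi,L/|x|\,\phi}.
\end{equation*}
This identity accounts for the $\phi_+$ line of~\eqref{eq:carres_q_02d}, and likewise for the radial piece $\phi_0$ where $L=0$. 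The analogous identity with $g\mapsto g^{-1}$ produces a sign change inside the square and replaces $L$ by $-1-L$, covering the $\phi_-$ and $\phi_1$ lines. For the radial sectors, $\partial_{\bar z}\phi_0=\tfrac12 e^{i\theta}\phi_0'(r)$ and $\partial_{\bar z}(e^{-i\theta}\phi_1)=\tfrac12(\phi_1'+\phi_1/r)$ collapse the Wirtinger derivative to ordinary derivatives; passing to polar coordinates with $dx=r\,dr\,d\theta$ then yields the explicit one-dimensional expressions in~\eqref{eq:carres_q_02d}.

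For the norm equivalence~\eqref{eq:norm_q_0_simplifiee2d}, the angular-momentum terms are non-negative on their respective spectral ranges ($L\geq 1$ on $\phi_+$ and $-1-L\geq 1$ on $\phi_-$), so they can be discarded for an upper bound; conversely, the $|\phi_\pm|^2/|x|$ contribution produced on expanding the squares is absorbed into the angular-momentum term modulo $\|\phi\|_{L^2}^2$, using $(a+b)^2\leq 2|a|^2+2|b|^2$. The equivalent form involving $\partial_{\bar z}|x|^{1/2}\phi$ follows by applying the same commutation argument to this modified operator, which splits sectorwise and matches directly with the first expression.

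Finally, for $\lambda\neq 0$, a direct expansion of the difference of kinetic prefactors yields
\begin{equation*}
q_\lambda^{\rm C}(\phi)=q_0^{\rm C}(\phi)-16\lambda\!\int_{\R^2}\!\frac{|x|^2|\partial_{\bar z}\phi|^2}{(1+2|x|)(1+2(1+\lambda)|x|)}\,dx-\lambda\int_{\R^2}|\phi|^2\,dx.
\end{equation*}
Applying~\eqref{eq:Hardy-type22d} with $a=1+\lambda$ gives $q_\lambda^{\rm C}(\phi)+2\lambda\|\phi\|_{L^2}^2\geq 0$, hence $(2\lambda+1)\|\phi\|_{L^2}^2+q_\lambda^{\rm C}(\phi)\geq\|\phi\|_{L^2}^2>0$ for $\phi\neq 0$, establishing strict positivity. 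The equivalence with $\|\phi\|_{L^2}^2+q_0^{\rm C}(\phi)$ follows because the additional integral above is controlled, for each fixed $\lambda\in(-1,1)$, by a constant times the kinetic part of $q_0^{\rm C}(\phi)$. The main bookkeeping obstacle throughout is keeping the signs consistent when passing from the pair $(\phi_+,\phi_0)$ handled by $g$ to the pair $(\phi_-,\phi_1)$ handled by $g^{-1}$, and verifying that the boundary terms in the one-dimensional integrations by parts vanish in the natural $H^1$ class; both are routine once the commutation relations of the first step are in place.
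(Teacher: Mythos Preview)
Your proposal is correct and follows exactly the route the paper intends: the paper omits the proof, referring to the three-dimensional argument in Appendix~\ref{sec:proof_thm_q_C}, and you have faithfully transcribed that argument to two dimensions, including the right choice $g(r)=\sqrt{r}\,e^{r}$ (so that $f_{2d}g'=g$ with $f_{2d}(r)=2r/(1+2r)$), the $g\mapsto g^{-1}$ switch for the $(\phi_-,\phi_1)$ pair, and the $\lambda\neq0$ remainder estimate. The only point to tighten is the last sentence on equivalence: for $\lambda>0$ you must check that the extra kinetic integral is bounded by a constant \emph{strictly less than one} times the $q_0^{\rm C}$ kinetic part (after absorbing an $L^2$ error), exactly as in the 3d estimate at the end of Appendix~\ref{sec:proof_thm_q_C}; your phrasing ``controlled by a constant'' does not yet say this.
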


The critical spaces in the 2d case are defined similarly as in 3d. In the Coulomb case $V(x)=-|2x|^{-1}$ we introduce
\begin{equation}
\cW_{\rm C}=\bigg\{\phi\in L^2(\R^2,\C)\ :\ \frac{\partial_{\bar{z}} |x|^{1/2}\phi}{(1+|x|)^{1/2}}\in L^2(\R^2,\C)\bigg\}.
\end{equation}
Then we assume that $V(x)\geq -|2x|^{-1}$ and that $\sup(V)<1$. We define the critical space $\cW$ associated with $V$ by
\begin{multline}
\cW=\bigg\{\phi\in \cW_{\rm C} \ :\ \left(\frac1{1-V(x)}-\frac{2|x|}{1+2|x|}\right)^{1/2}\partial_{\bar{z}}\phi\in L^2(\R^3,\C^2),\\
\left(V(x)+\frac{1}{2|x|}\right)^{1/2}\phi\in L^2(\R^3,\C^2)\bigg\}.
\end{multline}

The following is the equivalent of Theorems~\ref{thm:quadratic form_domain} and~\ref{thm:distinguished-critical} in 2d. 

\begin{thm}[The quadratic form domains in 2d]\label{thm:quadratic form_domain_critical2d}
We assume that 
 \begin{equation}
 V(x)\geq -\frac{1}{2|x|}\qquad\text{and}\qquad \sup(V)<2.
 \label{eq:assumption_V_critical2d}
 \end{equation}
Then the space $C^\ii_c(\R^2\setminus\{0\},\C)$ is dense in $\cV$, in $\cW_{\rm C}$ and in $\cW$ for their respective norms. 
In addition we have the continuous embeddings 
$\cV\subset H^{1/2}(\R^2,\C)$
and
$\cW\subset \cW_{\rm C}\subset H^s(\R^2,\C),$
for every $0\leq s<1/2$.
\end{thm}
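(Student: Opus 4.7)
The plan is to transcribe in two dimensions the proofs of Theorems~\ref{thm:quadratic form_domain} and~\ref{thm:quadratic form_domain_critical} given in Sections~\ref{sec:proof_quadratic form_domain} and~\ref{sec:proof_thm_quadratic form_domain_critical}. The algebra is cleaner since the only differential operator involved is $\partial_{\bar z}$ (with $(2\partial_z)(2\partial_{\bar z})=-\Delta$), and the spectral decomposition of $L$ used in Theorem~\ref{thm:carres_q_C2d} indexes modes by $\ell\in\Z$ with the critical sectors $\ell=0,-1$ playing the role of $\kappa=\pm1$ in 3d.

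For density of $C^\infty_c(\R^2\setminus\{0\})$ in $\cV$ and the embedding $\cV\subset H^{1/2}$, I would first mimic Zhikov's argument: given $\phi\in\cV$, write $\phi=2\partial_{\bar z} u$ by solving the $\bar\partial$-equation, approximate by $\phi_\varepsilon=2\partial_{\bar z} u_\varepsilon$ where $\Delta u_\varepsilon=\mathbf{1}(|x|\geq\varepsilon)\Delta u$, so that $\phi_\varepsilon$ is smooth in $B_\varepsilon$, then cut off by $\theta_\delta(x)=\max(0,\min(1,2|x|/\delta-1))$ and regularize by convolution. The only change compared with 3d is that the Hardy-Littlewood-Sobolev bound on $\|\Delta u_\varepsilon-\Delta u\|_{H^{-1}}$ uses the 2d Riesz kernel and the local integrability $2-V\in L^{3/2}_{\rm loc}$. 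The embedding $\cV\subset H^{1/2}$ then follows from the pointwise lower bound $\|\phi\|_\cV^2\gtrsim \int|x|\,|\partial_{\bar z}\phi|^2$ combined with the Hardy-type inequality $\int|(-\Delta)^{1/4}\phi|^2\lesssim \int|x|\,|\partial_{\bar z}\phi|^2$, which is the 2d analogue of the lemma after Remark~\ref{rmk:simpler_proof_Coulomb}: using Plancherel and $|\widehat{2\partial_{\bar z}\phi}|=|\xi|\,|\widehat\phi|$, this is a direct application of the 2d Kato inequality $\pi/(2|x|)\leq\sqrt{-\Delta}$ in the Fourier variable.

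For the critical spaces, I would rely on the splitting $\phi=\phi_++\phi_-+\phi_0(r)+e^{-i\theta}\phi_1(r)$ of Theorem~\ref{thm:carres_q_C2d}. The pieces $\phi_\pm$ lie in $\cV$ by the 2d analogue of~\eqref{eq:controle_sigma_L} and the term of Theorem~\ref{thm:carres_q_C2d} involving $L/|x|$, and therefore can be handled by the $\cV$-argument above; only the radial critical modes $\phi_{0/1}$ require the logarithmic cutoff. The preliminary step is a 2d version of Lemma~\ref{lem:estim_spherical_average}: setting $v(r)=r^{1/2}\phi_{0/1}(r)$ gives $|v'(r)|^2=r^{-1}|r\phi_{0/1}'+\phi_{0/1}/2|^2$, so the finite quantity $\int\frac{r}{1+r}|v'|^2\,dr$ in~\eqref{eq:norm_q_0_simplifiee2d} and Cauchy-Schwarz yield
\begin{equation*}
|\phi_{0/1}(r)|\lesssim \frac{\sqrt{\log(1/r)}}{\sqrt{r}}\bigl(\|\phi\|_{\cW_{\rm C}}+\|\phi\|_{L^2}\bigr),\qquad r\leq e^{-1}.
\end{equation*}
Inserting this bound, the term to make small when multiplying by a cutoff $\theta_n(r)$ is $\int_0^{e^{-1}}\frac{r\log(1/r)}{1+r}(\theta_n')^2\,dr$ with $\int\theta_n'=1$; this is exactly the same scalar condition as in 3d, so the construction~\eqref{eq:def_xi_n} carries over verbatim and produces a suitable sequence, after which a convolution regularization completes the density in $\cW_{\rm C}$. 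For general $V$ in $\cW$, I would follow the 3d strategy of first replacing $\phi_{0/1}$ on $[0,\delta]$ by the minimizer $u_\delta$ of the quadratic form with Dirichlet data at $r=\delta$, derive the Euler-Lagrange ODE, and use the resulting improved pointwise bound on $u_\delta$ (with the same non-integrability trick as in Section~\ref{sec:proof_thm_quadratic form_domain_critical}) to justify the same logarithmic cutoff. Finally, for $\cW_{\rm C}\subset H^s$ with $s<1/2$, it suffices to treat $\phi_0(r)$ and $e^{-i\theta}\phi_1(r)$; using the pointwise bound and H\"older against the finite $\cW_{\rm C}$-quantity $\int\frac{1}{1+r}|r\phi_0'+\phi_0/2|^2\,dr$, one shows $\phi_{0/1}\in W^{1,\alpha}(\R^2)$ for every $1\leq\alpha<4/3$ (here $4/3$ is the analogue in dimension $2$ of the 3d exponent $3/2$), which embeds into $H^s$ for every $s<1/2$ by the Sobolev embedding.

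The main obstacle, as in three dimensions, is density in $\cW$ rather than in $\cW_{\rm C}$: the $V$-dependent terms in the $\cW$-norm are not homogeneous in $|x|$, so one cannot a priori control the effect of a cutoff uniformly near the origin. The resolution requires the elliptic minimization problem, the resulting ODE, and the precise non-integrability estimate on the coefficient appearing in front of $(\theta_n')^2$, which is the most technical part of the 3d proof and will be repeated with straightforward notational modifications in 2d.
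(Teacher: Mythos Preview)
Your plan is the same as the paper's: it explicitly says the proof of Theorem~\ref{thm:quadratic form_domain_critical2d} is obtained by repeating Sections~\ref{sec:proof_quadratic form_domain} and~\ref{sec:proof_thm_quadratic form_domain_critical} with the single notable change that the pointwise bound on the spherical modes becomes $|\phi_{0/1}(r)|\lesssim \sqrt{\log(1/r)/r}$ (your version of Lemma~\ref{lem:estim_spherical_average} is exactly this). Your identification of the critical exponent $\alpha<4/3$ for the $W^{1,\alpha}$ embedding, of the logarithmic cutoff, and of the ODE minimization step for $\cW$ are all correct and match the intended adaptation.

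There is one concrete slip in your Zhikov step. You write $\phi=2\partial_{\bar z}u$ and then truncate $\Delta u$; but the weight in the $\cV$-norm sits on $\partial_{\bar z}\phi=2\partial_{\bar z}^2u$, which is \emph{not} $\Delta u$ and is not supported in $B_\eps$ after the truncation (indeed $\partial_{\bar z}^2$ is not elliptic, so cutting $\Delta u$ gives no localization of $\partial_{\bar z}^2 u$). The correct substitution is the other Cauchy--Riemann operator: set $\phi=-4\partial_z u$ (solve the $\partial$-equation, not the $\bar\partial$-equation), so that $\partial_{\bar z}\phi=-4\partial_z\partial_{\bar z}u=\Delta u$ and the 3d computation goes through verbatim, with $\|\phi\|_{L^2}^2=16\|\partial_z u\|_{L^2}^2$ playing the role of $\int|\nabla u|^2$. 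With this fix the rest of your outline is correct; also note that in two dimensions the H\"older step uses $2-V\in L^p_{\rm loc}$ for $p<2$ (not $L^{3/2}$ specifically), and the sharp constant in the 2d Kato--Herbst inequality is not $\pi/2$, though any finite constant suffices for the embedding $\cV\subset H^{1/2}$.
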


The proof of Theorem~\ref{thm:quadratic form_domain_critical2d} is very similar to the proofs of Theorems~\ref{thm:quadratic form_domain} and~\ref{thm:quadratic form_domain_critical}. Note however that 
for the density in $\mathcal{V}$, we have to take a different cutoff function: $\theta_\delta(x):=\max(0,1-\log_2(\max(1,\log_\delta |x|))$. Moreover, 
the pointwise estimate on spherical averages of $\phi$ is slightly different in 2d, compared to that in Lemma~\ref{lem:estim_spherical_average}. Instead of \eqref{eq:spherical_phi}, we have
\begin{equation}
\forall r\leq e^{-1},\qquad |\phi_0(r)|+|\phi_1(r)|\lesssim\,\sqrt{\frac{\log(1/r)}{r}}\left(\sqrt{q_0^{\rm C}(\phi)}+\|\phi\|_{L^2}\right).
\label{eq:spherical_phi2d}
\end{equation}

As in 3d, applying the Esteban-Loss method allows to distinguish and define a unique self-adjoint extension from the property that 
$$\cD(d_V)\subset \left\{\Psi=\begin{pmatrix}\phi\\ \chi\end{pmatrix}\in L^2(\R^2,\C^2)\ :\ \phi\in\cV\right\}$$
in the case $0< \nu<1/2$ and 
$$\cD(d_V)\subset \left\{\Psi=\begin{pmatrix}\phi\\ \chi\end{pmatrix}\in L^2(\R^2,\C^2)\ :\ \phi\in\cW\right\}$$
when $\nu=1/2$. For shortness we do not state the equivalent of Theorems~\ref{thm:distinguished} and~\ref{thm:distinguished-critical}. As in Theorem~\ref{thm:distinguished-critical} we can prove the convergence of the resolvents in norm in the 2d case, by following the proof given in Section~\ref{sec:proof_thm_distinguished_critical}. In the subcritical case, as in Corollary~\ref{cor:chi_in_cV} one can infer some information on $\chi$ under the assumption that $d_V\Psi\in L^2(\R^2,\C)$ and that $\phi\in\cV$. However, due to the fact that the adjoint of $i\partial_z$ is $i\partial_{\bar z}$, the proper conclusion is that
$$\cD(d_V)\subset \cV\times\overline{\cV},\qquad\text{for $0<\nu<1/2$.}$$

We conclude with the min-max characterization of eigenvalues in the spectral gap. As in 3d, we denote  $\Lambda_T^+$ (resp. $\Lambda_T^-$) the Talman projectors corresponding to the Talman decomposition 
$$\Psi=\begin{pmatrix}\phi\\ \chi\end{pmatrix}=\begin{pmatrix}\Lambda_T^+\Psi\\ \Lambda_T^-\Psi\end{pmatrix}.$$ 
We also consider the spectral projections
$$\Lambda_0^+=\1(d_0\geq 0),\qquad \Lambda_0^-=\1(d_0\leq 0).$$ 
For a space $F\subseteq H^{1/2}(\R^3,\C^4)$, we consider the min-max levels given by the same formula as \eqref{eq:min-max_principle}, but with $d_V$ instead of $D_V$. We get the same result as in three dimensions, but with the critical value $\nu=1/2$.

\begin{thm}[Min-max formula for eigenvalues in 2d]\label{thm:new2d}  
Let $0<\nu\leq 1/2$. We assume that
 \begin{equation}
 V(x)\geq -\frac{\nu}{|x|}\qquad\text{and}\qquad \sup(V)<1+\sqrt{1-4\nu^2}.
 \label{eq:assumption_V_bis_2D}
 \end{equation}
Let
\begin{equation}
 C^\ii_c(\R^2\setminus\{0\},\C^2)\subseteq F \subseteq H^{1/2}(\R^2,\C^2). 
 \label{eq:condition_F_2D}
\end{equation}
Then, the number $\lambda_{T,F}^{(k)}$ defined in \eqref{eq:min-max_principle}, is independent of the subspace $F$ and coincides with the $k$th eigenvalue of the distinguished self-adjoint extension of $d_V$ larger than or equal to $\sqrt{1-4\nu^2}$, counted with multiplicity (or is equal to $b = \inf \; (\sigma_{\rm ess} (d_V) \cap (\sqrt{1-4\nu^2}, + \infty))$ if there are less than $k$ eigenvalues below $b$). In addition, we have
$$\lambda_{T,F}^{(k)}=\lambda_{0,F}^{(k)}$$
for all $F$ as above and all $k\geq1$.
\end{thm}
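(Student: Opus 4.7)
The plan is to follow the proof of Theorem~\ref{thm:new} essentially verbatim, replacing $-i\sigma\cdot\nabla$ by $-2i\partial_{\bar z}$, using the 2D quadratic form $q_\lambda$ from~\eqref{eq:q_E_2D}, the 2D Hardy-type inequality~\eqref{eq:Hardy-type22d}, the density statement of Theorem~\ref{thm:quadratic form_domain_critical2d}, and the norm resolvent convergence $d_{V_\eps}\to d_V$ (where $V_\eps:=\max(V,-1/\eps)$) which is the 2D analogue of Theorems~\ref{thm:distinguished} and~\ref{thm:distinguished-critical}. The strategy again is to avoid dealing with $\cD(d_V)$ directly: truncate, apply the abstract Theorem~\ref{thm:DES} for each $V_\eps$, and pass to the limit.

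For the Talman case $\Lambda_T^\pm$, first compute
$$a=\sup_{\chi\in F^-\setminus\{0\}}\frac{\pscal{\chi,(V-1)\chi}}{\|\chi\|^2}=\sup(V)-1<\sqrt{1-4\nu^2},$$
using $F^-\supseteq C^\ii_c(\R^2\setminus\{0\},\C)$; the same bound holds for $V_\eps$. As in~\cite[Lem.~2.2]{DolEstSer-00} we rewrite
$\lambda^{(k)}_{T,F}(V)=\inf_{\dim W^+=k}\sup_{\phi\in W^+}S(V,\phi)$,
where $S(V,\phi)$ is defined as in~\eqref{eq:def_S} with the 2D operator, is finite iff $\phi\in\cV$, and in that case $E=S(V,\phi)$ is the unique root of $q_E(V,\phi)=0$. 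Since $V_\eps$ is bounded, $d_{V_\eps}$ is self-adjoint on $H^1(\R^2,\C^2)$, so Theorem~\ref{thm:DES} yields that $\lambda^{(k)}_{T,F}(V_\eps)$ equals the $k$-th eigenvalue $\mu^{(k)}(V_\eps)$ of $d_{V_\eps}$ above $\sqrt{1-4\nu^2}$, independently of $F$. By monotonicity in $V$, $\lambda^{(k)}_{T,F}(V_\eps)\geq \lambda^{(k)}_{T,F}(V)$, and by the 2D norm resolvent convergence $\mu^{(k)}(V_\eps)\to\mu^{(k)}(V)$, proving the upper bound $\mu^{(k)}(V)\geq \lambda^{(k)}_{T,F}(V)$.

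For the reverse inequality, we need to check
$\sup_{\phi\in W^+}q_{\mu^{(k)}(V)}(V,\phi)\geq 0$
for every $k$-dimensional $W^+\subset F^+\cap \cV$. Using the monotonicity of $q_E$ in $E$, together with the density of $C^\ii_c(\R^2\setminus\{0\},\C)$ in $\cV$ (Theorem~\ref{thm:quadratic form_domain_critical2d}), it suffices to treat $W^+\subset C^\ii_c(\R^2\setminus\{0\},\C)$. On such a finite-dimensional space the min-max characterization for $V_\eps$ and monotonicity give
$\sup_{\phi\in W^+}q_{\mu^{(k)}(V)}(V_\eps,\phi)\geq \sup_{\phi\in W^+}q_{\mu^{(k)}(V_\eps)}(V_\eps,\phi)\geq 0$,
and passing to $\eps\to0$ in the fixed finite-dimensional $W^+$ (where $V_\eps\to V$ pointwise and all integrals are controlled by dominated convergence since each $\phi\in W^+$ vanishes near the origin) yields the claim. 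The equality $\lambda^{(k)}_{T,F}=\lambda^{(k)}_{0,F}$ for the free projections is obtained by the same scheme, now working with the 2D analogue of the auxiliary quadratic form $\tilde q_E$ of~\eqref{eq:q_tilde_E}; the required lower bound and the identification of its form domain with $\Lambda^+_0 H^{1/2}(\R^2,\C^2)$ follow as in Lemma~\ref{lem:domain_is_H_12}, using the 2D Kato inequality and the 2D Hardy-type inequality~\eqref{eq:Hardy-type22d}.

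The main technical obstacle is to ensure that the critical constant in the 2D setting is indeed $\nu=1/2$ rather than $\nu=1$. This traces back to~\eqref{eq:Nenciu-Kato2d}, which has norm $2$ in 2D (instead of $1$ in 3D); this constant propagates through the Nenciu-type resolvent formula, through Lemma~\ref{lem:value_S}'s 2D analogue, and through the analogue of Lemma~\ref{lem:domain_is_H_12} for $\Lambda_0^\pm$. A second delicate point is the analogue of Lieb's inequality, needed to show boundedness of the 2D operator $B_E$; one can either invoke the results of~\cite{CueSie-14,Muller-16,Warmt-11} as black boxes or adapt the proof of Lemma~\ref{lem:domain_is_H_12} line by line. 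Once these explicit constants are correctly tracked, the rest of the argument is an essentially mechanical translation of Section~\ref{sec:proof_thm_new} to two dimensions.
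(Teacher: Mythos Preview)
Your proposal is correct and follows exactly the approach the paper intends: the paper does not give a separate proof of Theorem~\ref{thm:new2d} but simply states that the 2D results are obtained by transposing the 3D arguments with the obvious substitutions (the operator $-2i\partial_{\bar z}$ in place of $-i\sigma\cdot\nabla$, the quadratic form~\eqref{eq:q_E_2D}, the Hardy inequality~\eqref{eq:Hardy-type22d}, the density result of Theorem~\ref{thm:quadratic form_domain_critical2d}, and the critical value $\nu=1/2$ coming from~\eqref{eq:Nenciu-Kato2d}). Your outline reproduces the structure of Section~\ref{sec:proof_thm_new} step by step and correctly flags the two places where the 2D constants differ; this is precisely what the paper asks the reader to do.
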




\begin{thebibliography}{NGP{\etalchar{+}}09}

\bibitem[ADV13]{ArrDuoVeg-13}
{\sc N.~Arrizabalaga, J.~Duoandikoetxea, and L.~Vega}, {\em Self-adjoint
  extensions of {D}irac operators with {C}oulomb type singularity}, J. Math.
  Phys., 54 (2013), p.~041504.

\bibitem[AMV14]{ArrMasVeg-14}
{\sc N.~Arrizabalaga, A.~Mas, and L.~Vega}, {\em Shell interactions for {D}irac
  operators}, J. Math. Pures Appl. (9), 102 (2014), pp.~617--639.

\bibitem[AMV15]{ArrMasVeg-15}
\leavevmode\vrule height 2pt depth -1.6pt width 23pt, {\em Shell interactions
  for {D}irac operators: on the point spectrum and the confinement}, SIAM J.
  Math. Anal., 47 (2015), pp.~1044--1069.

\bibitem[Ara75]{Arai-75}
{\sc M.~Arai}, {\em On essential selfadjointness of {D}irac operators}, RIMS
  Kokyuroku, 242 (1975), pp.~10--21.
\newblock Spectral and scattering theory and related topics.

\bibitem[Arr11]{Arrizabalaga-11}
{\sc N.~Arrizabalaga}, {\em Distinguished self-adjoint extensions of {D}irac
  operators via {H}ardy-{D}irac inequalities}, J. Math. Phys., 52 (2011),
  p.~092301.

\bibitem[BE98]{BurEva-98}
{\sc V.~I. Burenkov and W.~D. Evans}, {\em On the evaluation of the norm of an
  integral operator associated with the stability of one-electron atoms}, Proc.
  Roy. Soc. Edinburgh Sect. A, 128 (1998), pp.~993--1005.

\bibitem[BE11]{BalEva-11}
{\sc A.~A. Balinsky and W.~D. Evans}, {\em Spectral analysis of relativistic
  operators}, Imperial College Press, London, 2011.

\bibitem[Cas50]{Case-50}
{\sc K.~M. Case}, {\em Singular potentials}, Phys. Rev., 80 (1950),
  pp.~797--806.

\bibitem[CD05]{CacDor-05}
{\sc A.~C{\'a}ceres and C.~Doran}, {\em Minimax determination of the energy
  spectrum of the dirac equation in a schwarzschild background}, Phys. Rev. A,
  72 (2005), p.~022103.

\bibitem[CS14]{CueSie-14}
{\sc J.-C. Cuenin and H.~Siedentop}, {\em Dipoles in graphene have infinitely
  many bound states}, J.. Math. Phys., 55 (2014), p.~122304.

\bibitem[Dav07]{Davies-07}
{\sc E.~B. Davies}, {\em Linear operators and their spectra}, vol.~106 of
  Cambridge Studies in Advanced Mathematics, Cambridge University Press,
  Cambridge, 2007.

\bibitem[DD88]{DatDev-88}
{\sc S.~N. Datta and G.~Devaiah}, {\em The minimax technique in relativistic
  {H}artree-{F}ock calculations}, Pramana, 30 (1988), pp.~387--405.

\bibitem[DEDV07]{DolEstDuoVeg-07}
{\sc J.~Dolbeault, M.~J. Esteban, J.~Duoandikoetxea, and L.~Vega}, {\em
  Hardy-type estimates for {D}irac operators}, Ann. Sci. {\'E}cole Norm. Sup.,
  40 (2007), pp.~885--900.

\bibitem[DELV04]{DolEstLosVeg-04}
{\sc J.~Dolbeault, M.~J. Esteban, M.~Loss, and L.~Vega}, {\em An analytical
  proof of {H}ardy-like inequalities related to the {D}irac operator}, J.
  Funct. Anal., 216 (2004), pp.~1--21.

\bibitem[DES00a]{DolEstSer-00}
{\sc J.~Dolbeault, M.~J. Esteban, and {\'E}.~S{\'e}r{\'e}}, {\em On the
  eigenvalues of operators with gaps. {A}pplication to {D}irac operators}, J.
  Funct. Anal., 174 (2000), pp.~208--226.

\bibitem[DES00b]{DolEstSer-00a}
\leavevmode\vrule height 2pt depth -1.6pt width 23pt, {\em Variational
  characterization for eigenvalues of dirac operators.}, Calc. Var. Partial
  Differ. Equ., 10 (2000), pp.~321--347.

\bibitem[DES03]{DolEstSer-03}
\leavevmode\vrule height 2pt depth -1.6pt width 23pt, {\em A variational method
  for relativistic computations in atomic and molecular physics}, Int. J.
  Quantum Chem., 93 (2003), pp.~149 --155.

\bibitem[DES06]{DolEstSer-06}
\leavevmode\vrule height 2pt depth -1.6pt width 23pt, {\em General results on
  the eigenvalues of operators with gaps, arising from both ends of the gaps.
  {A}pplication to {D}irac operators}, J. Eur. Math. Soc. (JEMS), 8 (2006),
  pp.~243--251.

\bibitem[DESV00]{DolEstSerVan-00}
{\sc J.~Dolbeault, M.~J. Esteban, {\'E}.~S{\'e}r{\'e}, and M.~Vanbreugel}, {\em
  Minimization methods for the one-particle dirac equation}, Phys. Rev. Lett.,
  85 (2000), pp.~4020--4023.

\bibitem[EL07]{EstLos-07}
{\sc M.~J. Esteban and M.~Loss}, {\em Self-adjointness for {D}irac operators
  via {H}ardy-{D}irac inequalities}, J. Math. Phys., 48 (2007), pp.~112107, 8.

\bibitem[EL08]{EstLos-08}
{\sc M.~J. Esteban and M.~Loss}, {\em Self-adjointness via partial {H}ardy-like
  inequalities}, in Mathematical results in quantum mechanics, World Sci.
  Publ., Hackensack, NJ, 2008, pp.~41--47.

\bibitem[ELS08]{EstLewSer-08}
{\sc M.~J. Esteban, M.~Lewin, and {\'E}.~S{\'e}r{\'e}}, {\em Variational
  methods in relativistic quantum mechanics}, Bull. Amer. Math. Soc. (N.S.), 45
  (2008), pp.~535--593.

\bibitem[ES97]{EstSer-97}
{\sc M.~J. Esteban and {\'E}.~S{\'e}r{\'e}}, {\em Existence and multiplicity of
  solutions for linear and nonlinear {D}irac problems}, in Partial differential
  equations and their applications ({T}oronto, {ON}, 1995), vol.~12 of CRM
  Proc. Lecture Notes, Amer. Math. Soc., Providence, RI, 1997, pp.~107--118.

\bibitem[Eva70]{Evans-70}
{\sc W.~D. Evans}, {\em On the unique self-adjoint extension of the {D}irac
  operator and the existence of the {G}reen matrix}, Proc. London Math. Soc.
  (3), 20 (1970), pp.~537--557.

\bibitem[GM19]{GalMic-19}
{\sc M.~Gallone and A.~Michelangeli}, {\em Self-adjoint realisations of the
  {D}irac-{C}oulomb {H}amiltonian for heavy nuclei}, Anal. Math. Phys., 9
  (2019), pp.~585--616.

\bibitem[GS99]{GriSie-99}
{\sc M.~Griesemer and H.~Siedentop}, {\em A minimax principle for the
  eigenvalues in spectral gaps}, J. London Math. Soc. (2), 60 (1999),
  pp.~490--500.

\bibitem[Hog13]{Hogreve-13}
{\sc H.~Hogreve}, {\em The overcritical {D}irac-{C}oulomb operator}, J. Phys.
  A, Math. Gen., 46 (2013), p.~025301.

\bibitem[Kat83]{Kato-83}
{\sc T.~Kato}, {\em Holomorphic families of {D}irac operators}, Math. Z., 183
  (1983), pp.~399--406.

\bibitem[Kat95]{Kato}
\leavevmode\vrule height 2pt depth -1.6pt width 23pt, {\em Perturbation theory
  for linear operators}, Springer, second~ed., 1995.

\bibitem[KKR04]{KulKolRut-04}
{\sc O.~Kullie, D.~Kolb, and A.~Rutkowski}, {\em Two-spinor fully relativistic
  finite-element ({FEM}) solution of the two-center {C}oulomb problem}, Chem.
  Phys. Letters, 383 (2004), pp.~215--221.

\bibitem[KW79]{KlaWus-79}
{\sc M.~Klaus and R.~W{\"u}st}, {\em Spectral properties of {D}irac operators
  with singular potentials}, J. Math. Anal. Appl., 72 (1979), pp.~206--214.

\bibitem[KW79]{KlaWus-78}
{\sc M.~Klaus and R.~W{\"u}st}, {\em Characterization and uniqueness of
  distinguished selfadjoint extensions of {D}irac operators}, Commun. Math.
  Phys., 64 (1978/79), pp.~171--176.

\bibitem[Lie84]{Lieb-84}
{\sc E.~H. Lieb}, {\em Bound on the maximum negative ionization of atoms and
  molecules}, Phys. Rev. A, 29 (1984), pp.~3018--3028.

\bibitem[LR79]{LanRej-79}
{\sc J.~J. Landgren and P.~A. Rejto}, {\em An application of the maximum
  principle to the study of essential self-adjointness of {D}irac operators.
  {I}}, J. Math. Phys., 20 (1979), pp.~2204--2211.

\bibitem[LRK80]{LanRejKla-80}
{\sc J.~J. Landgren, P.~A. Rejto, and M.~Klaus}, {\em An application of the
  maximum principle to the study of essential self-adjointness of {D}irac
  operators. {II}}, J. Math. Phys., 21 (1980), pp.~1210--1217.

\bibitem[LS10]{LewSer-10}
{\sc M.~Lewin and {\'E}.~S{\'e}r{\'e}}, {\em Spectral pollution and how to
  avoid it (with applications to {D}irac and periodic {S}chr{\"o}dinger
  operators)}, Proc. London Math. Soc., 100 (2010), pp.~864--900.

\bibitem[M\"16]{Muller-16}
{\sc D.~M\"{u}ller}, {\em Minimax principles, {H}ardy-{D}irac inequalities, and
  operator cores for two and three dimensional {C}oulomb-{D}irac operators},
  Doc. Math., 21 (2016), pp.~1151--1169.

\bibitem[MM15]{MorMul-15}
{\sc S.~Morozov and D.~M{\"u}ller}, {\em On the minimax principle for
  {C}oulomb--{D}irac operators}, Math. Z., 280 (2015), pp.~733--747.

\bibitem[Nen76]{Nenciu-76}
{\sc G.~Nenciu}, {\em Self-adjointness and invariance of the essential spectrum
  for {D}irac operators defined as quadratic forms}, Commun. Math. Phys., 48
  (1976), pp.~235--247.

\bibitem[NGP{\etalchar{+}}09]{NetGuiPerNovGei-09}
{\sc A.~H.~C. Neto, F.~Guinea, N.~M.~R. Peres, K.~S. Novoselov, and A.~K.
  Geim}, {\em The electronic properties of graphene}, Rev. Mod. Phys., 81
  (2009), p.~109.

\bibitem[Ple32]{Plesset-32}
{\sc M.~S. Plesset}, {\em The dirac electron in simple fields}, Phys. Rev., 41
  (1932), pp.~278--290.

\bibitem[Rel43]{Rellich-43}
{\sc F.~Rellich}, {\em Die zul{\"a}ssigen randbedingungen bei den
  singul{\"a}ren eingenwertproblemen der mathematischen physik}, Math. Z., 49
  (1943), pp.~702--723.

\bibitem[RS75]{ReeSim2}
{\sc M.~Reed and B.~Simon}, {\em Methods of {M}odern {M}athematical {P}hysics.
  {II}. {F}ourier analysis, self-adjointness}, Academic Press, New York, 1975.

\bibitem[Sch72]{Schmincke-72b}
{\sc U.-W. Schmincke}, {\em Distinguished selfadjoint extensions of {D}irac
  operators}, Math. Z., 129 (1972), pp.~335--349.

\bibitem[{Tal}86]{Talman-86}
{\sc J.~D. {Talman}}, {\em Minimax principle for the {D}irac equation}, Phys.
  Rev. Lett., 57 (1986), pp.~1091--1094.

\bibitem[Tha92]{Thaller}
{\sc B.~Thaller}, {\em The {D}irac equation}, Texts and Monographs in Physics,
  Springer-Verlag, Berlin, 1992.

\bibitem[Tit61]{Titchmarsh-61}
{\sc E.~C. Titchmarsh}, {\em On the nature of the spectrum in problems of
  relativistic quantum mechanics}, Quart. J. Math. Oxford Ser. (2), 12 (1961),
  pp.~227--240.

\bibitem[Tix98]{Tix-98}
{\sc C.~Tix}, {\em Strict positivity of a relativistic {H}amiltonian due to
  {B}rown and {R}avenhall}, Bull. London Math. Soc., 30 (1998), pp.~283--290.

\bibitem[VGT07]{VorGitTyu-07}
{\sc B.~L. Voronov, D.~M. Gitman, and I.~V. Tyutin}, {\em The {D}irac
  {H}amiltonian with a superstrong {C}oulomb field}, Theor. Mathem. Phys., 150
  (2007), pp.~34--72.

\bibitem[VKG10]{VozKatGui-10}
{\sc M.~Vozmediano, M.~Katsnelson, and F.~Guinea}, {\em Gauge fields in
  graphene}, Phys. Rep., 496 (2010), pp.~109--148.

\bibitem[vN30]{vonNeumann-30}
{\sc J.~von Neumann}, {\em Allgemeine {E}igenwerttheorie {H}ermitescher
  {F}unktionaloperatoren}, Math. Ann., 102 (1930), pp.~49--131.

\bibitem[War11]{Warmt-11}
{\sc C.~Warmt}, {\em Semiklassische Asymptotik der {R}esolvente eines {D}irac
  operators}, PhD thesis, Ludwig-Maximilians-Universit{\"a}t, Munich, 2011.

\bibitem[Wei87]{Weidmann-87}
{\sc J.~Weidmann}, {\em Spectral theory of ordinary differential operators},
  vol.~1258 of Lecture Notes in Mathematics, Springer-Verlag, Berlin, 1987.

\bibitem[W{\"u}s73]{Wust-73}
{\sc R.~W{\"u}st}, {\em A convergence theorem for selfadjoint operators
  applicable to {D}irac operators with cutoff potentials}, Math. Z., 131
  (1973), pp.~339--349.

\bibitem[W{\"u}s75]{Wust-75}
\leavevmode\vrule height 2pt depth -1.6pt width 23pt, {\em Distinguished
  self-adjoint extensions of {D}irac operators constructed by means of cut-off
  potentials}, Math. Z., 141 (1975), pp.~93--98.

\bibitem[W{\"u}s77]{Wust-77}
\leavevmode\vrule height 2pt depth -1.6pt width 23pt, {\em Dirac operations
  with strongly singular potentials. {D}istinguished self-adjoint extensions
  constructed with a spectral gap theorem and cut-off potentials}, Math. Z.,
  152 (1977), pp.~259--271.

\bibitem[Zhi98]{Zhikov-98}
{\sc V.~V. Zhikov}, {\em On weighted {S}obolev spaces}, Mat. Sb., 189 (1998),
  pp.~27--58.

\bibitem[Zhi13]{Zhikov-13}
\leavevmode\vrule height 2pt depth -1.6pt width 23pt, {\em On the density of
  smooth functions in a weighted {S}obolev space}, Dokl. Akad. Nauk, 453
  (2013), pp.~247--251.

\bibitem[ZKK04]{ZhaKulKol-04}
{\sc H.~Zhang, O.~Kullie, and D.~Kolb}, {\em Minimax {LCAO} approach to the
  relativistic two-centre {C}oulomb problem and its finite element ({FEM})
  spectrum}, J. Phys. B, 37 (2004), pp.~905--916.

\end{thebibliography}

\newcommand{\etalchar}[1]{$^{#1}$}


\projects{\noindent M.L. acknowledges financial supports from the European Research Council (Grant Agreements MNIQS 258023 and MDFT 725528)}

\end{document}